\crefname{equation}{}{}
\DeclareMathOperator*{\argmin}{arg\,min}
\newcommand\remove[1]{}
\newtheorem{theorem}{Theorem}
\newtheorem{lemma}{Lemma}[section]
\newtheorem*{lemma*}{Lemma}
\newtheorem{corollary}[lemma]{Corollary}
\newtheorem*{corollary*}{Corollary}
\newtheorem*{theorem*}{Theorem}
\newtheorem{fact}[lemma]{Fact}
\newtheorem{definition}[lemma]{Definition}
\newtheorem{prob}[lemma]{Problem}
\newtheorem*{rem*}{Remark}
\newcommand\R{\mathbb{R}}
\newcommand\E{\mathcal{E}}
\newcommand{\eps}{\varepsilon}
\renewcommand{\O}{\widetilde{O}}
\newcommand{\pe}{\preceq}
\newcommand{\se}{\succeq}
\newcommand{\assign}{\leftarrow}
\newcommand{\otilde}{\O}
\renewcommand{\forall}{\mathrm{\text{ for all }}}
\newcommand{\g}{\nabla}
\newcommand{\diag}[1]{\mathbf{diag}\left(#1\right)}
\newcommand{\new}{\mathrm{new}}
\newcommand{\final}{\mathrm{final}}
\newcommand{\Oracle}{\textsc{Oracle}}
\newcommand{\OracleSmall}{\textsc{OracleSmall}}
\newcommand{\ApproxLewis}{\textsc{ApproxLewis}}
\newcommand{\holder}{H\"{o}lder's inequality}
\newcommand{\ApproxLargeWeights}{\textsc{ApproxLargeWeights}}
\newcommand{\ApproxRegLewis}{\textsc{ApproxRegLewis}}
\newcommand{\ApproxLev}{\textsc{ApproxLev}}
\newcommand{\hDelta}{\widehat{\Delta}}
\newcommand{\hw}{\widehat{w}}
\newcommand{\hmW}{\widehat{\mW}}
\newcommand{\hy}{\widehat{y}}
\newcommand{\T}{\mathcal{T}}
\newcommand{\nnz}{\mathrm{nnz}}
\newcommand{\lse}{\mathrm{lse}}
\newcommand{\ma}{\mathbf{A}}
\newcommand{\md}{\mathbf{D}}
\newcommand{\mw}{\mathbf{W}}
\newcommand{\norm}[1]{\left\lVert#1\right\rVert}
\newcommand{\hess}{\nabla^2}
\newcommand{\xopt}{x_{\star}}
\newcommand{\Eps}{\mathcal{E}}
\newcommand{\mm}{\mathbf{M}}
\newcommand{\OPT}{\mathsf{OPT}}
\xdef\csname m\x\endcsname{\noexpand\mathbf{\x}}
\newif\ifrandom
\newcommand{\defeq}{\stackrel{\mathrm{\scriptscriptstyle def}}{=}}
\newcommand{\poly}{{\mathrm{poly}}}
\newcommand{\todolater}[1]{}
\newcommand{\sign}{\mathrm{sign}}
\crefname{algocf}{Algorithm}{Algorithms}
\begin{document}

\title{Improved Iteration Complexities for \\ Overconstrained $p$-Norm Regression}

\author{Arun Jambulapati \\
Stanford University	\\
\texttt{jmblpati@stanford.edu}
	 \and 
Yang P. Liu \\
Stanford University \\
\texttt{yangpatil@gmail.com}
\and 
Aaron Sidford \\
Stanford University \\
\texttt{sidford@stanford.edu}
}

\begin{titlepage}
\clearpage\maketitle
\thispagestyle{empty}

\begin{abstract}
In this paper we obtain improved iteration complexities for solving $\ell_p$ regression. We provide methods which given any full-rank $\mathbf{A} \in \mathbb{R}^{n \times d}$ with $n \geq d$, $b \in \mathbb{R}^n$, and $p \geq 2$ solve $\min_{x \in \mathbb{R}^d} \left\|\mathbf{A} x - b\right\|_p$ to high precision in time dominated by that of solving $\widetilde{O}_p(d^{\frac{p-2}{3p-2}})$\footnote{We use $\widetilde{O}_p(\cdot)$ to hide $\log^{O(1)}n$ factors and constants depending only on $p$. In this work, our dependence on $p$ is at most $p^{O(p)}$ for all algorithms, and can in fact be made polynomial in most cases.} linear systems in $\mathbf{A}^\top \mathbf{D} \mathbf{A}$ for positive diagonal matrices $\mathbf{D}$. This improves upon the previous best iteration complexity of $\widetilde{O}_p(n^{\frac{p-2}{3p-2}})$ (Adil, Kyng, Peng, Sachdeva 2019). As a corollary, we obtain an $\widetilde{O}(d^{1/3}\epsilon^{-2/3})$ iteration complexity for approximate $\ell_\infty$ regression. Further, for $q \in (1, 2]$ and dual norm $q = p/(p-1)$ we provide an algorithm that solves $\ell_q$ regression in $\widetilde{O}(d^{\frac{p-2}{2p-2}})$ iterations.

To obtain this result we analyze row reweightings (closely inspired by $\ell_p$-norm Lewis weights) which allow a closer connection between $\ell_2$ and $\ell_p$ regression. We provide adaptations of two different iterative optimization frameworks which leverage this connection and yield our results. The first framework is based on iterative refinement and multiplicative weights based width reduction and the second framework is based on highly smooth acceleration. Both approaches yield $\widetilde{O}_p(d^{\frac{p-2}{3p-2}})$ iteration methods but the second has a polynomial dependence on $p$  (as opposed to the exponential dependence of the first algorithm) and provides a new alternative to the previous state-of-the-art methods for $\ell_p$ regression for large $p$.
\end{abstract}

\end{titlepage}

\newpage

\section{Introduction}
\label{sec:intro}

In this paper, we consider the problem of solving \emph{$\ell_p$ regression} for $p \in (1, \infty)$ to high precision.
\begin{definition}[$\ell_p$ regression]
\label{def:lpprob}
Given a full-rank matrix\footnote{We assume throughout that the matrix $\mA$ is full-rank with $n \geq d$ throughout for simplicity, and our results extend directly to the general case, for example by replacing inverses with pseudoinverses.} $\mA \in \R^{n \times d}$, a vector $b \in \R^n$, and a scalar $p \geq 1$ we say that an algorithm solves $\ell_p$ regression to $\eps$-accuracy if it outputs $y \in \R^n$ satisfying
\begin{equation}
\label{eq:lp_regression}
\norm{\mA y - b}_p \le (1+\eps)\min_{x \in \R^d} \norm{\ma x - b}_p.
\end{equation}
We say that such an algorithm is \emph{high precision} if the runtime depends polynomially on $\log(1/\eps)$.
\end{definition}

Beyond possible applications and utility for data analysis (see \cite{MM13,WZ13} and references therein), the problem of $\ell_p$ regression is a prominent testbed for new techniques in optimization and numerical analysis. Varying $p$ causes \eqref{eq:lp_regression} to smoothly interpolate between least squares regression ($p=2$), which can be solved with a single linear system solve, and linear programming ($p\in\{1,\infty\}$) \cite{LS15}, which is only known to be solvable to high precision with $\O(\sqrt{n})$ linear systems via classical interior point methods (IPMs) \cite{Ren88}, and more recently $\O(\sqrt{d})$ linear systems \cite{LS19}.

Interestingly, although \cite{BCLL18} showed that IPMs do not directly yield $o(\sqrt{n})$ iteration complexities for $\ell_p$ regression, there is a line of work \cite{BCLL18,AKPS19,APS19,AS20,ABKS21} which obtained improved iteration complexities via alternative methods; the current state-of-the-art iteration complexity for $p \ge 2$ is $\O_p(n^{\nicefrac{(p-2)}{(3p-2)}})$. These improvements touch on a range of advanced optimization techniques including homotopy methods, iterative refinement, high-order acceleration, and width-reduction. This line of work is closely related to work which solves approximate $\ell_\infty$ regression in $\O(n^{1/3}\eps^{-O(1)})$ iterations \cite{CMMP13} where again, improvements and simplifications have been achieved through multiple techniques \cite{DLS18,EV19,CJJJLST20}. Additionally, work on $\ell_p$ regression for structured graph incidence matrices $\mA$ \cite{KPSW19,ABKS21} has led to improved running times for unit capacity maxflow, bipartite matching, and mincost flows \cite{LS20a,KLS20,AMV20}.

Given this progress, a natural open problem is to bridge the gap between the known iteration complexities for $\ell_p$ regression and the $\tilde{O}(\sqrt{d})$ bound achievable by IPMs for $\ell_1$ and $\ell_\infty$ regression \cite{LS19} by providing an iteration complexity for $\ell_p$ regression that depends on $d$ as opposed to $n$. Additionally, the relationship between various techniques for achieving these iteration complexities, especially acceleration and width-reduction, remains somewhat mysterious (see \cite{ABS21} for further discussion on this relationship). Consequently, understanding the complexity of $\ell_p$ regression is fundamental for advancing and clarifying the power of various optimization techniques.

In this paper we take steps to address these questions and improve the complexity for solving $\ell_p$ regression. Our main result is a pair of algorithms, based respectively on the iterative refinement framework and width reduction techniques \cite{AKPS19}, and the Monteiro-Svaiter/highly-smooth acceleration framework \cite{BJLLS19,CJJJLST20}, each of which, for $p \geq 2$ solve $\ell_p$ regression with $\O_p(d^{\nicefrac{(p-2)}{(3p-2)}})$ linear system solves. This improves an $n$ to a $d$ in the iteration dependencies for the state-of-the-art methods for $\ell_p$ regression.

The key notion used in our methods are reweightings of $\mA$ closely related to Lewis weights \cite{Lewis78,CP15}, which allow for a closer relationship between the $\ell_2$ and $\ell_p$ norms induced by $\mA$ (\cref{lemma:weightkey}). We directly leverage this connection induced by $\ell_p$-norm Lewis weights in the context of the optimization frameworks discussed previously to achieve our result, as opposed to previous works on (approximate) $\ell_p$ regression that used $\ell_p$ Lewis weights to construct $\ell_p$-norm sparsifiers or subspace embeddings \cite{DDHR09,CW13,WZ13,MM13b,CDMMMW16,CWW19}. As a result, these previous iteration complexities for $\ell_p$ regression had iteration complexities of the form $d^{\Omega(p)}$, while ours is always $\O(d^{1/3})$, even for large $p = \O(1)$.

\subsection{Our Results}

Here we state the main results of our paper. As is the case with several results on regression \cite{CMMP13,BCLL18,AKPS19,EV19,CJJJLST20}, the primary subroutine used by our algorithms is a linear system solver for $\mA^\top\mD\mA$ for positive diagonal matrices $\mD$. We focus on bounding the number of iterations or calls to such a linear system solver in our algorithms. Accordingly, let $\T_\mA$ denote the time for solving a linear system in $\mA^\top\mD\mA$ for positive diagonal matrices $\mD$.\footnote{Throughout we assume that all solves to $\mA^\top\mD\mA$ are exact. Typically it suffices to set the solver error to be polynomially small in $n, d$ and the largest entries of the input vectors and matrix $\mA$. This increases the running time of the solver by polylogarithmic factors.} We also use ``with high probability" (whp.) throughout to mean with probability at least $1 - n^{-C}$ for any constant $C$.

In this work, we focus on presenting iteration complexity improvements for $\ell_p$ regression problems. We choose to focus on iteration complexity improvements in the work as opposed to runtimes for the sake of achieving a cleaner and simpler presentation. We elaborate on this in the final paragraph in previous works (\cref{sec:related}).
\begin{theorem}[High precision $\ell_p$ regression for $p \ge 2$]
\label{thm:largemain}
There is an algorithm that given any $\mA \in \R^{n \times d}$, $b \in \R^n$, $p \ge 2$ whp. returns an $\eps$-accuracy solution to $\ell_p$ regression in time $\O_p(d^\frac{p-2}{3p-2} \cdot \T_\mA)$.
\end{theorem}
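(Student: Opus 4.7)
The plan is to combine the iterative refinement framework of Adil--Kyng--Peng--Sachdeva with width-reduced multiplicative weights, but applied to a system whose rows have been reweighted by approximate $\ell_p$-Lewis weights; the Lewis reweighting is what replaces the $n$ in the iteration complexity by $d$.

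First I would reduce high-precision $\ell_p$ regression to $\widetilde{O}_p(\log(1/\eps))$ calls to an approximate solver for a residual subproblem of the form $\min_{x} \l g, \mA x\r + \norm{\mR\mA x}_2^2 + \norm{\mA x}_p^p$, over an appropriate trust region $\norm{\mA x}_p \leq 1$, using the standard $\ell_p$ iterative refinement reduction. Since the residual problem only needs constant-factor accuracy, all $\log(1/\eps)$ dependence is absorbed into $\widetilde{O}$.

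Second, I would solve this residual subproblem via a width-reduced MWU scheme. Each iteration maintains a nonnegative weight $\hw$ over rows, takes a Newton-type step by solving a single system in $\mA^\top \mD \mA$ for some diagonal $\mD$ depending on $\hw$, and either achieves constant-factor progress or identifies a small set of ``high-width'' rows whose weights are then boosted. The novelty is that $\hw$ is initialized to approximate $\ell_p$-Lewis weights $w$ (satisfying $\sum_i w_i = d$) rather than uniformly; equivalently, rows are rescaled by $w_i^{1/2-1/p}$ in the quadratic surrogate. By \cref{lemma:weightkey}, this rescaling bounds the effective $\ell_\infty$-to-$\ell_p$ ratio for vectors $\mA x$ by $d^{1/p}$ rather than $n^{1/p}$. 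A standard MWU potential argument then shows that the total weight cannot grow past $\poly(d)$ before a width-threshold crossing forces progress, yielding a bound of $\widetilde{O}_p(d^{(p-2)/(3p-2)})$ MWU iterations, each costing one solve in $\mA^\top \mD \mA$.

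Third, approximate $\ell_p$-Lewis weights can themselves be computed with $\widetilde{O}_p(1)$ linear-system solves via a Cohen--Peng-type contractive iteration, and this preprocessing cost is absorbed into $\widetilde{O}_p$. Composing the three layers (iterative refinement $\to$ reweighted MWU $\to$ per-iteration linear solve) yields the claimed $\widetilde{O}_p(d^{(p-2)/(3p-2)} \cdot \T_\mA)$ total running time.

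The main obstacle will be redoing the width-reduction potential analysis in the Lewis-reweighted setting. Specifically, one must argue that (i) the reweighted $\ell_2$ surrogate dominates the $\ell_p$ term, up to factors polynomial in $d^{1/p}$, uniformly over the feasible region and not only at the optimum; (ii) the Lewis weights can be held essentially fixed throughout the inner loop without breaking the width bound, so each iteration remains a solve in $\mA^\top \mD \mA$ rather than in a reweighted matrix; and (iii) the gap between the reweighted surrogate's optimum and the true residual optimum introduces only a constant multiplicative slack, so that the outer iterative refinement still converges at the standard geometric rate. Carefully propagating these three estimates through the MWU weight-growth accounting is the crux of the proof.
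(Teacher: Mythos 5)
Your proposal follows the same high-level route as the paper's first algorithm (\cref{sec:largep_ckmst}): iterative refinement reduces $\ell_p$ regression to a scaled residual problem, a width-reduced MWU solves the residual problem with row resistances initialized from $\ell_p$ Lewis weight overestimates, and those weights are computed in $\O(1)$ linear system solves. The genuine gap is in your treatment of the width-reduction accounting as a ``standard MWU potential argument.'' The standard energy-boosting lemmas of \cite{CKMST11,Mad13,AKPS19} require each boosting step to increase resistances by at most a constant multiplicative factor. In the Lewis-reweighted setting, the natural boost $s_i^{p-2} \mapsto s_i^{p-2} + \frac{\tau^{2/p}|(\mA z)_i|^{p-2}}{4\|\mA z\|_p^p}$ can increase an individual resistance by an unbounded multiplicative factor when $s_i$ begins small, so the off-the-shelf argument does not close. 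The paper's new ingredient is the energy-increase bound (\cref{lemma:energyincrease}): for any nonnegative perturbation $v$ with $\|v\|_{p/(p-2)} \le 1$ the energy still jumps by $\frac12 \sum_i v_i (\mA y)_i^2$, with no cap on the per-row multiplicative resistance increase. Its proof combines Woodbury's identity with \cref{lemma:weightkey} to establish $\mA^\top \mV \mA \pe \mA^\top \mW^{1-2/p}\mA \pe \mA^\top \mD \mA$, which is precisely the stability that Lewis weight overestimates furnish; without this lemma the potential argument you gesture at does not yield the claimed iteration count. A second, smaller imprecision: for $p \ge 4$ one cannot compute exact $\ell_p$ Lewis weights by a Cohen--Peng contraction in $\O(1)$ iterations, and your stated normalization $\sum_i w_i = d$ is the exact-weight property. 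The paper instead computes \emph{overestimates} satisfying $d \le \|w\|_1 \le 2d$ via a CCLY19-style averaging argument (\cref{lemma:approxplarge}), and this relaxed guarantee is all that \cref{lemma:weightkey} and \cref{lemma:energyupper} need.
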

Here the $\O_p(\cdot)$ hides $\poly(\log n, \log(1/\eps))$ factors and constants depending on $p$ (at worst exponential). As a corollary we also obtain high precision solvers for the Lagrange dual problem $\min_{\mA^\top x = b} \|x\|_q$ for $q = p/(p-1)$ in $\O_p(d^\frac{p-2}{3p-2}\T_\mA)$ time whp. This improves over the $\O_p(n^\frac{p-2}{3p-2})$ iteration complexity of \cite{AKPS19} for any tall matrix $\mA$.

Similar ideas as those used to show \cref{thm:largemain} can be used to give improved iteration complexities for approximate $\ell_\infty$ regression, which we show in \cref{sec:appendixinf}.
\begin{theorem}[Approximate $\ell_\infty$ regression]
\label{thm:linfmain}
There is an algorithm that given any $\mA \in \R^{n \times d}$ and $b \in \R^n$, whp. computes an $\eps$-accurate solution to $\ell_\infty$ regression in time $\O(d^{1/3}\eps^{-2/3} \cdot \T_\mA)$.
\end{theorem}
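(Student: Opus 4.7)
The plan is to adapt the iterative refinement plus width-reduction framework used for \cref{thm:largemain} to approximate $\ell_\infty$ regression via softmax smoothing. First, I would replace the $\ell_\infty$ objective with the log-sum-exp surrogate $\lse_\tau(\ma x - b) := \tau \log \sum_{i \in [n]} \bigl( e^{(\ma x - b)_i/\tau} + e^{-(\ma x - b)_i/\tau} \bigr)$, which satisfies $\|\ma x - b\|_\infty \le \lse_\tau(\ma x - b) \le \|\ma x - b\|_\infty + \tau \log(2n)$. Choosing $\tau = \Theta(\eps \cdot \OPT / \log n)$ (with a standard guess-and-double homotopy to estimate $\OPT$) reduces the $(1+\eps)$-approximate $\ell_\infty$ problem to minimizing $\lse_\tau$ to additive accuracy $O(\eps \cdot \OPT)$.

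Next, I would apply iterative refinement and multiplicative-weights width reduction analogously to \cref{thm:largemain}. The Hessian of $\lse_\tau$ at residual $r = \ma x - b$ takes the form $\frac{1}{\tau}(\diag(\pi) - \pi \pi^\top)$ where $\pi$ is a softmax-weighted distribution over rows, so the local subproblem becomes a weighted $\ell_2$ regression of the form $\min_\Delta \l g, \Delta\r + \frac{1}{\tau}\|\mW^{1/2}\ma \Delta\|_2^2$ plus a higher-order correction, each solvable by a single call to a solver for $\ma^\top \md \ma$. As in \cref{thm:largemain}, rows whose residuals grow too quickly relative to the predicted step have their weights boosted, and progress is tracked via a potential of the form $\sum_i w_i^\alpha$ for an appropriate exponent.

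To obtain $d^{1/3}$ rather than $n^{1/3}$ in the iteration count, the crucial ingredient is to invoke the Lewis-weight-based reweighting of \cref{lemma:weightkey} with $p = \Theta(\log n)$ (so that $\|\cdot\|_p$ approximates $\|\cdot\|_\infty$ sufficiently well), which lets the $\ell_\infty$ norm of $\ma \Delta$ be controlled by a weighted $\ell_2$ norm at a $\sqrt{d}$ rather than $\sqrt{n}$ loss. After preprocessing to compute approximate $\ell_\infty$ Lewis weights, the total width-reduction potential is $\O(d)$, and balancing the number of iterative refinement steps against the width-reduction steps (and the smoothing parameter $\tau$) yields $\O(d^{1/3}\eps^{-2/3})$. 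The main obstacle I anticipate is maintaining approximate Lewis weights correctly under the width-reduction updates: each weight boost perturbs the effective Lewis weights, so a careful amortization argument, similar to the weight-maintenance step used in the proof of \cref{thm:largemain}, is needed to ensure the potential stays $\O(d)$; and separately, the $\eps^{-2/3}$ (rather than $\eps^{-1}$) dependence requires exploiting the higher-order smoothness of $\lse_\tau$ in the refinement step, not just its Hessian.
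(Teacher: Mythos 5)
Your high-level idea is right --- smooth $\ell_\infty$ via log-sum-exp and use the Lewis-weight reweighting $\mM = \mA^\top\mW\mA$ (so that $\|\mA u\|_\infty \le \|u\|_\mM$ from \cref{lemma:weightkey}, giving diameter $O(\OPT\sqrt d)$ instead of $O(\OPT\sqrt n)$) --- but the algorithmic vehicle you propose is different from the paper's and, as stated, has gaps.

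The paper does \emph{not} go through iterative refinement or the width-reduction loop of \cref{algo:oracle}. Instead it uses the Monteiro--Svaiter / quasi-self-concordance route (\cref{lemma:qsc} shows $\lse_t$ is $1/t$-smooth and $2/t$-quasi-self-concordant in $\|\cdot\|_\mM$ with $\mM = \mA^\top\mW\mA$, then directly invokes Corollary~12 of \cite{CJJJLST20}, giving $\O((RM)^{2/3})$ iterations). This is precisely the acceleration-based alternative to width reduction. Your proposal would need to port the width-reduction analysis of \cite{CMMP13,EV19}, not that of \cref{thm:largemain}: the width-reduction framework of \cref{sec:largep_ckmst} has $O(p)^p$ constants baked in, so setting $p = \Theta(\log n)$ does not yield a polynomial-time algorithm. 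The clean fix is to work with $p = \infty$ Lewis weight overestimates directly (which \cref{def:overestimate} and \cref{lemma:approxplarge} support) and use an $\eps$-dependent width-reduction potential as in \cite{EV19} rather than the high-precision $\ell_p$ potential. This could plausibly reproduce the $d^{1/3}\eps^{-2/3}$ bound via a different route, but you would need to redo the potential analysis; the $\eps^{-2/3}$ comes from the MWU/width-reduction tradeoff, not from higher-order smoothness in a ``refinement step.'' Iterative refinement itself is a $\log(1/\eps)$-type reduction for $\ell_p$, not an $\eps^{-2/3}$ mechanism, so plugging log-sum-exp into it is a category error.

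Finally, the obstacle you flag --- ``maintaining approximate Lewis weights correctly under the width-reduction updates'' and ``a careful amortization argument, similar to the weight-maintenance step used in the proof of \cref{thm:largemain}'' --- does not correspond to anything in the paper. In \cref{algo:oracle} the Lewis weight overestimates $w$ are computed once in line~\ref{line:approxlewislarge} and are never recomputed; the analysis only uses the invariant $s \ge w^{1/p}$ for this fixed $w$ (and in \cref{lemma:energyincrease}, the fixed bound $\mD \succeq \mW^{1-2/p}$). There is no amortization argument about Lewis weight drift, and you would not need one here either: after a single $\ApproxLargeWeights$ call, the reweighted matrix $\mM = \mA^\top\mW\mA$ is fixed for the rest of the algorithm.
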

This improves over the $O(n^{1/3}\eps^{-O(1)})$ iteration bounds achieved by \cite{CMMP13,EV19,CJJJLST20}.

We also obtain improved results for $\ell_q$ regression for $q \le 2$ for sufficiently tall matrices $\mA$.
\begin{theorem}[High precision $\ell_q$ regression for $q \le 2$]
\label{thm:smallmain}
There is an algorithm that given any $\mA \in \R^{n \times d}$, $b \in \R^n$, and $q \in (1,2]$ whp. returns an $\eps$-accurate solution to $\ell_q$ regression
in time $\O_p(d^\frac{p-2}{2p-2}  \cdot \T_\mA)$ for $p = q/(q-1)$.
\end{theorem}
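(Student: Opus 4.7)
The plan is to mirror the iterative-refinement / width-reduction framework used for \cref{thm:largemain}, adapted to $q \in (1, 2]$: replace the $\ell_p$ residual function (for $p \geq 2$) by its $\ell_q$ analogue, and use $\ell_q$-norm Lewis weights in place of $\ell_p$-norm Lewis weights. At a current iterate $x$ with residual $r = \mA x - b$, I would approximate the increment $\|\mA(x + \Delta) - b\|_q^q - \|\mA x - b\|_q^q$ by a local model of the form $\langle g, \mA \Delta \rangle + \sum_i \gamma_q(r_i, (\mA \Delta)_i)$, where $\gamma_q(a, t)$ is a convex function that is quadratic in $t$ with coefficient $|a|^{q-2}$ for $|t| \lesssim |a|$ and grows like $|t|^q$ for $|t| \gg |a|$, chosen to sandwich the true increment up to constants depending only on $q$. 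A standard argument (as in \cite{AKPS19} and used for \cref{thm:largemain}) then shows that solving each subproblem to constant multiplicative accuracy drives suboptimality geometrically, giving $\widetilde{O}(\log(1/\eps))$ outer iterations.

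To handle the inner subproblem I would use an $\ell_q$ Lewis-weight reweighting. Computing approximate $\ell_q$ Lewis weights $w$ (via an analogue of \ApproxLewis\ with $\sum_i w_i = d$) and invoking the $q \leq 2$ analogue of \cref{lemma:weightkey} would give the sandwich bound $\|\mW^{1/q - 1/2} \mA \Delta\|_2 \lesssim \|\mA \Delta\|_q \lesssim d^{(2-q)/(2q)} \|\mW^{1/q - 1/2} \mA \Delta\|_2$. This lets me replace the $\gamma_q$ subproblem by a weighted $\ell_2$ problem that reduces to a single linear system in $\mA^\top \mD \mA$, where $\mD$ is built from $r$ and $w$, at the price of a $d^{(2-q)/(2q)}$ distortion per call.

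Finally, I would run width reduction in the multiplicative-weights style of \cref{thm:largemain}: coordinates on which the weighted-$\ell_2$ surrogate underestimates the $\gamma_q$ contribution get their weight boosted, and the total $\ell_q$ Lewis-weight budget of $O(d)$ caps the number of such updates (this is where $d$ replaces $n$). Balancing the per-iteration distortion against the total width growth yields the claimed iteration count $\widetilde{O}_p(d^{(p-2)/(2p-2)}) = \widetilde{O}_p(d^{(2-q)/2})$ for $p = q/(q-1)$. The main obstacle I foresee is the $q \leq 2$ analysis of $\gamma_q$: unlike $\gamma_p$ for $p \geq 2$, its local Hessian diverges on coordinates with \emph{small} residual rather than at large arguments, so the width-reduction accounting must separate coordinates with small $|r_i|$ (where the quadratic piece dominates with large coefficient $|r_i|^{q-2}$) from those with large $|r_i|$ (where the $|t|^q$ tail matters). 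I expect this shifted trade-off is precisely what produces the $(p-2)/(2p-2)$ exponent rather than the $(p-2)/(3p-2)$ exponent obtained in the $p \geq 2$ case.
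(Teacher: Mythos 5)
Your proposal departs from the paper's route in several ways, and several of the departures are genuine gaps rather than just stylistic differences.

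First, you work on the primal $\ell_q$ problem with a $\gamma_q$-style iterative refinement. The paper explicitly avoids this because there is no clean iterative-refinement decomposition for $\|\mA x - b\|_q$ when $q < 2$; instead it passes via Sion's minimax theorem to the dual problem $\min_{\mA^\top y = 0,\ b^\top y = 1} \|y\|_p$ with $p = q/(q-1) \ge 2$, and then applies the already-clean $\ell_p$ iterative refinement (\cref{lemma:gammasolver}). Your plan to "sandwich the true increment up to constants depending only on $q$" with a convex $\gamma_q$ is plausible in spirit (such functions appear in earlier work), but it is a different reduction and it is not developed to the point where one could check that the per-call approximation quality feeds correctly into the outer loop.

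Second, and more importantly, the paper's mechanism for the $d^{(p-2)/(2p-2)}$ exponent has nothing to do with width reduction. The algorithm $\OracleSmall$ is a \emph{single} weighted-$\ell_2$ projection $y = \argmin_{\mU^\top x = v} x^\top(d^{1-2/p}\mR + \hmW^{1-2/p})x$, where $\hmW$ comes from a new object, the \emph{regularized} $\ell_q$ Lewis weights of \cref{def:reglewis}. The regularization term $c = d\,r^{p/(p-2)}$ inside the Lewis-weight fixed point is exactly what lets the analysis (using $\cref{fact:lev}$ applied to the reweighted matrix) bound $\|y\|_p^p \le O(d^{p/2-1})$ while keeping $y^\top \mR y = O(1)$, and the exponent $(p-2)/(2p-2)$ then drops out of the $\gamma^{p-1}$ scaling in the $\gamma$-solver definition rather than from any "distortion vs. width" balance. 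Your plan omits regularized Lewis weights entirely; with plain $\ell_q$ Lewis weights and a naive Hölder step you only get the worse $\gamma = O(d^{(p-2)/p})$. Moreover, the paper explicitly states that combining width reduction/energy boosting with this $q < 2$ setup to improve the exponent to $(p-2)/(3p-2)$ is an open problem, and that a multiplicative-weights variant would merely \emph{match}, not produce, the $(p-2)/(2p-2)$ bound. Your width-reduction accounting ("the total $\ell_q$ Lewis-weight budget of $O(d)$ caps the number of updates", "balancing per-iteration distortion against total width growth") is asserted but never carried out, and the intuition you give for where the weaker exponent comes from does not match the actual source of the loss.

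Finally, a small sign slip: the relevant reweighting for $q < 2$ is $\mW^{1/2-1/q}$ (negative exponent for $q < 2$, as in \cref{def:lewisweights}), not $\mW^{1/q-1/2}$, though this does not affect the larger issues above. To repair the proposal you would at minimum need to (i) either establish a working primal $\gamma_q$ refinement scheme or switch to the dual as the paper does, (ii) introduce regularized Lewis weights to handle the $\ell_2$ regularizer, and (iii) replace the width-reduction hand-waving with a concrete single-shot bound like the paper's Lemma 5.2, or actually carry out the width-reduction potential analysis.
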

As a corollary we also get high accuracy solvers for the Lagrange dual problem $\min_{\mA^\top x = b} \|x\|_p$ for $p = q/(q-1)$ in $\O_p(d^\frac{p-2}{2p-2}\T_\mA)$ time whp.
This improves over the $\O_p(n^\frac{p-2}{3p-2})$ iteration complexity of \cite{AKPS19} for any sufficiently tall matrix $\mA$ with $n = \omega(d^\frac{3p-2}{2p-2}).$

\subsection{Previous Work}
\label{sec:related}
Here we briefly survey related work to the problems we consider and the optimization and numerical methods we build upon.

\paragraph{Regression:} Beyond the works mentioned earlier, there are several results on $\ell_1$ or $\ell_\infty$ regression in both the low precision ($\poly(1/\eps)$ dependence) \cite{Clarkson05,Nesterov09,YCRM16,DLS18}, and high precision regimes corresponding to linear programming \cite{MM13}. Additionally the works \cite{CJJJLST20,ABS21,CJJS21} study the more general problem of quasi-self-concordant optimization, which captures $\ell_p$ regression as well as logistic regression \cite{LWK08,Bach2010}. There are several results on $\ell_p$ regression that are based on sparsification or subspace embeddings using (variants of) $\ell_p$ Lewis weights \cite{DDHR09,CW13,WZ13,MM13b,CDMMMW16,CWW19}. While our result also uses a variant of $\ell_p$ Lewis weights, we do not sparsify. This is key to achieving our iteration complexities because sparsification of an $\ell_p$ norm objective for $p \ge 2$ requires at least $\Omega(d^{p/2})$ rows \cite{CP15}. This leads to iteration complexities of at least $(d^{p/2})^{\nicefrac{(p-2)}{(3p-2)}} = d^{\nicefrac{(p^2-2p)}{(6p-4)}}$ and runtimes of $\O(\nnz(\mA) + d^{\Omega(p)})$, as was noted in \cite[Theorem 2.6]{ABKS21}. On the other hand, our iteration complexity is at most $\O(d^{1/3})$, independent of $p$. This allows us to achieve a $\O(d^{1/3}\eps^{-2/3})$ iteration complexity for $\ell_\infty$-regression in \cref{thm:linfmain}, while the aforementioned works using sparsification are unable to. Very recently, \cite{GPV21} achieved a $O(n^{\theta})$ runtime for $\ell_p$ norm regression on sufficiently sparse matrices $\mA$ for some $\theta < \omega$ (the matrix multiplication exponent). For $p$ near $2$, they improved this to $\widetilde{O}(\nnz(\mA) + d^{\theta}).$

\paragraph{High-order acceleration:} Our Monteiro-Svaiter acceleration algorithm builds upon works pertaining to the acceleration of functions with Lipschitz $p$-th order derivatives. For $p = 1$ this corresponds to classic acceleration of smooth functions that attains error $\O(1/k^2)$ over $k$ iterations \cite{Nesterov83}. A series of works \cite{MS13,AH18,ASS19,Nesterov19,BP19,BJLLS19, GDGVSU0W19,CJJJLST20} has shown that the optimal error bound is given by $\O(1/k^{\nicefrac{(3p+1)}{2}})$ over $k$ iterations for functions with Lipschitz $p$-th order derivatives. Our Monteiro-Svaiter acceleration algorithm for $\ell_p$ regression directly utilizes a generalized accelerated proximal-point framework from \cite{BJLLS19}. Additionally, \cite{Bullins18} has given an algorithm that achieves an accelerated convergence rate for the more general problem of minimizing structured convex quartics which captures $\ell_4$ regression but has an additional third order tensor term. It is interesting to understand whether our methods extend to that setting.

\paragraph{Width reduction:} In addition to its applications for regression problems as described, similar width reduction techniques have been applied to give improved runtimes for the maxflow problem in both approximate regimes \cite{CKMST11,KMP12} and in unit capacity graphs \cite{Mad13,Mad16,LS20a,LS20b,Kat20,CMSV17,AMV20}. Additionally, the Iteratively Reweighted Least Squares (IRLS) algorithm of Ene-Vladu \cite{EV19} gives an alternate approach based on width reduction for achieving a $\O(n^{1/3}\eps^{-2/3})$ iteration complexities for $\ell_1$ and $\ell_\infty$ regression, matching the iteration complexity of \cite{CJJJLST20}. We believe that applying ideas from the analysis of \cite{EV19} can potentially be used to simplify our width reduction algorithm for $\ell_p$ regression given in \cref{sec:largep_ckmst}.

\paragraph{Runtime improvements for regression problems.} We briefly discuss why we focus on presenting iteration complexity improvements in this work, as opposed to runtimes for $\ell_p$ regression. In general, obtaining improving runtimes for regression problems beyond improving the iteration complexity has been through \emph{inverse maintenance} techniques \cite{V89,V90,LS15}, and more recently heavy hitter and iterate maintenance \cite{CLS19,LSZ19,B20,BLSS20,BLNPSSSW20,BLLSSSW21}, to speed up the amortized time to solve the linear systems in $\mA^\top\mD\mA$ and implicitly maintain the iterates. This direction has seen an explosion of work recently, with the state-of-the-art runtimes for solving linear programs (eg. high precision $\ell_1$ regression) being some combination of the recent works $\O(n^{\max\{\omega, 2+1/18\}})$ \cite{JSWZ21}, $\O(nd + d^{2.5})$ \cite{BLLSSSW21}, and $\O(\nnz(\mA)d^{0.5} + d^{2.5})$ \cite{LS15}. The authors believe that all our improved iteration complexities in \cref{thm:largemain,thm:linfmain,thm:smallmain} can be combined with ideas from the aforementioned works to achieve concrete runtime improvements for $\ell_p$ regression. However, given the rapidly evolving progress in inverse maintenance and relative complexity of the methods, we choose to focus on iteration complexities in this work to give a cleaner and simpler presentation of our ideas.

\subsection{Our Approach}

Here we focus on presenting our approach for $p \ge 2$ (\cref{thm:largemain}) and briefly describe our approach for $q \le 2$ (\cref{thm:smallmain}). Both of our algorithmic frameworks (width reduction and acceleration) are based on leveraging properties of $\ell_p$ Lewis weights. While $\ell_p$ Lewis weights have been used in several previous results on $\ell_p$ regression (as described in \cref{sec:related}), these works primarily used Lewis weights to construct sparsifiers or subspace embeddings. We take a different perspective, and instead leverage a key fact about approximate $\ell_p$ Lewis weights that they provide an ellipse which approximates the $\|\mA x\|_p$. This has appeared in \cite[pg.115]{Banach} and \cite[Lemma 3.6]{CWW19}. Precisely, if $w \in \R^n$ are the $\ell_p$-Lewis weights for $\mA$ then
\begin{equation}
\|\mA x\|_p \le \|\mW^{\frac12-\frac1p}\mA x\|_2 \le \|w\|_1^{\frac12-\frac1p}\|\mA x\|_p \forall x \in \R^n. \label{eq:upperp}
\end{equation}
The lower bound follows from the definition of $\ell_p$ Lewis weights, which we give a self-contained proof of in \cref{lemma:weightkey}, and the upper bound follows from \holder. Because $\|w\|_1 = d$ for the $\ell_p$ Lewis weights, the distortion between the lower and upper bounds in \eqref{eq:upperp} is $d^{1/2-1/p}$, which leads to $d$-dependent iteration complexities. While it is not known how to exactly compute the $\ell_p$ Lewis weights for $p \ge 4$ in $\O(\T_\mA)$ time we are still able to argue that we can efficiently compute weights $w$ satisfying \eqref{eq:upperp} but $\|w\|_1 \le 2d$ (\cref{lemma:approxplarge}). This is done by mimicking an argument of \cite{CCLY19} for the $p = \infty$ case which corresponds to computing an approximate John ellipse.

We show that it is possible to leverage our perspective on \eqref{eq:upperp} within either iterative refinement \cite{AKPS19} or an acceleration framework (based on the acceleration framework of \cite{BJLLS19}). While these frameworks are largely compatible with \eqref{eq:upperp}, there are notable conceptual differences which we now discuss. In the iterative refinement framework, the problem of $\ell_p$ regression is reduced to approximately minimize problems that are a combination of a linear term, $\ell_p$ norm term, and $\ell_2$ regularization term (\cref{prob:scaledresidual}). As in \cite{AKPS19}, we use a width-reduced multiplicative weights update (MWU) to reduce the iteration complexity. The main difference is that we show an energy boosting lemma in the width-reduced MWU (\cref{lemma:energyincrease}) that allows for resistances to more than double (while still providing significant increase to the energy potential) by leveraging stability from \eqref{eq:upperp}, while in standard energy boosting the energy does not increase significantly beyond resistances increasing by a constant factor. While the proof follows gracefully from low-rank update formulas, we believe that this is an interesting conceptual point. Our second acceleration-based algorithm repeatedly solves proximal subproblems of the form $\min_x \norm{\ma x - b}_p^p + O(p)^p \norm{x-y}_{\ma^\top \mw^{1-2/p} \ma}^p$: we show that such regularized problems may be solved efficiently by using stability given by \eqref{eq:upperp}. Interestingly, a more na\"{i}ve application of acceleration of ball-constrained Newton methods \cite{CJJJLST20} leads to an iteration complexity of $\O_p(d^{1/3})$. However, our acceleration method achieves an iteration complexity of $\O_p(d^{\nicefrac{(p-2)}/({3p-2})})$ and provides an acceleration-based alternative matching the iteration complexities achieved by width-reduction for intermediate values of $p \in [2, \infty)$.

For the case $q \le 2$ instead of solving $\min_{x \in \R^d} \norm{\mA x - b}_q$ we solve the dual problem
\[ \min_{\substack{\mA^\top x = 0, b^\top x= -1}} \|x\|_p \] 
where $p = q/(q-1)$ is the dual norm. In this setting we also wish to use $\ell_q$ Lewis weights. However the presence of the $\ell_2$ regularizer induced from iterative refinement or the acceleration framework forces us to use a more complex \emph{regularized Lewis weight}, defined in \cref{def:reglewis} (such a concept was also used in \cite[Definition 4.4]{BLLSSSW21}). Unfortunately it seems that this type of regularized Lewis weight is not immediately compatible with the width reduction or acceleration type speedups, and we only achieve a $\O_p(d^{\nicefrac{(p-2)}{(2p-2)}})$ iteration complexity as a result. Consequently, we believe that achieving a matching $\O_p(d^{\nicefrac{(p-2)}{(3p-2)}})$ complexity for the case $q \le 2$ is an important open problem.

\subsection{Paper Organization}

The remainder of the paper is structured as follows. In \cref{sec:prelim} we give preliminaries for our algorithms, e.g. leverage scores, Lewis weights, and iterative refinement. In \cref{sec:largep_ckmst} we provide an iterative refinement and width reduction framework for showing \cref{thm:largemain}. In \cref{sec:largep_ms} we give an alternate approach for the previous result based on the high-order acceleration framework of \cite{BJLLS19}. In \cref{sec:smallq} we show \cref{thm:smallmain} which achieves $d$-dependent (as opposed to $n$-dependent) iteration complexities for $\ell_q$ regression for $q \le 2$.
Finally we show several facts about the computation of (approximate) Lewis weights and iterative refinement in \cref{sec:proofs,proofs:cor12} and our result on approximate $\ell_\infty$ regression (\cref{thm:linfmain}) in \cref{sec:appendixinf}.

\section{Preliminaries}
\label{sec:prelim}

\subsection{General Notation}
We use lowercase for vectors, and capital boldface for matrices. We let $\vec{0}, \vec{1}$ denote the all $0, 1$ vectors respectively. Additionally, for a vector the matrix with corresponding capital letter is the diagonal matrix. Throughout we let $w$ denote a weight vector, $r$ denote a positive vector, and $\mW = \diag{w}$ and $\mR = \diag{r}$. We say that a matrix $\mB \in \R^{n \times n}$ is positive semidefinite (PSD) if $x^\top \mB x \ge 0$ for all $x \in \R^n.$ We say that matrices $\mA \pe \mB$ if $\mB - \mA$ is PSD.
For PSD matrices $\mA, \mB$ we say that $\mA \approx_\alpha \mB$ for $\alpha \ge 1$ if $\alpha^{-1}\mB \pe \mA \pe \alpha\mB$.
For a PSD matrix $\mB$ we define the seminorm induced by $\mB$ as $\|x\|_\mB := \sqrt{x^\top \mB x}$.

\subsection{Lewis Weights}

We start by defining the leverage scores and $\ell_p$ Lewis weights of a matrix $\mA$. These are measures of importance of rows of a matrix $\mA$.
\begin{definition}[Leverage scores]
\label{def:levscores}
For a matrix $\mA \in \R^{n \times d}$ whose $i$-th row is the vector $a_i$, the \emph{leverage scores} are given by $\sigma(\mA)_i := a_i^\top(\mA^\top \mA)^{-1} a_i$ for $i \in [n]$.
\end{definition}
It is known that $\sum_{i \in [n]} \sigma(\mA)_i = \mathrm{rank}(\mA)$. Further, the leverage score of the $i$-th row of a matrix $\mA$ is given by the maximum of $|(\mA x)_i|$ over all vectors $x \in \R^d$ satisfying $\|\mA x\|_2 \le 1.$ This provides a concrete way that the leverage scores are $\ell_2$ importance measures of rows.

\begin{fact}[Leverage scores as $\ell_2$ importance]
\label{fact:lev}
For a matrix $\mA \in \R^{n \times d}$ the leverage score of row $i \in [n]$ is given by
\[ \sigma(\mA)_i = \max_{x \in \R^d : \mA x \neq 0} \frac{(\mA x)_i^2}{\|\mA x\|_2^2}. \]
\end{fact}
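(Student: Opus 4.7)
The plan is to reduce the ratio $\tfrac{(\ma x)_i^2}{\|\ma x\|_2^2}$ to a Cauchy--Schwarz inequality via a change of variables and then exhibit the maximizer. First I would observe that since $\ma$ is full-rank, $\ma^\top \ma$ is positive definite, so I may substitute $y = (\ma^\top \ma)^{1/2} x$ (equivalently $x = (\ma^\top \ma)^{-1/2} y$). Under this substitution, the denominator becomes $\|\ma x\|_2^2 = x^\top \ma^\top \ma x = \|y\|_2^2$, while the numerator becomes
\[ (\ma x)_i = a_i^\top x = a_i^\top (\ma^\top \ma)^{-1/2} y = \langle (\ma^\top \ma)^{-1/2} a_i,\, y\rangle. \]

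Applying the Cauchy--Schwarz inequality to this inner product gives
\[ (\ma x)_i^2 \le \|(\ma^\top \ma)^{-1/2} a_i\|_2^2 \, \|y\|_2^2 = \big(a_i^\top (\ma^\top \ma)^{-1} a_i\big)\,\|\ma x\|_2^2 = \sigma(\ma)_i \cdot \|\ma x\|_2^2, \]
which establishes the $\le$ direction of the variational formula. For the matching lower bound I would take $y$ parallel to $(\ma^\top \ma)^{-1/2} a_i$, which pulls back to the choice $x = (\ma^\top \ma)^{-1} a_i$; a direct substitution yields $(\ma x)_i = \sigma(\ma)_i$ and $\|\ma x\|_2^2 = \sigma(\ma)_i$, so the ratio equals $\sigma(\ma)_i$ exactly (and is well-defined provided $\sigma(\ma)_i > 0$, in which case $\ma x \neq 0$).

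There is essentially no real obstacle here: the equality case of Cauchy--Schwarz does all the work once the variables are changed. The only minor subtlety is the full-rank assumption, which is used to invert $\ma^\top \ma$; this is already built into the paper's standing hypotheses (and the general case reduces to restricting $x$ to the row span of $\ma$ and using the pseudoinverse, as the footnote in \cref{def:lpprob} notes).
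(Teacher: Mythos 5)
The paper states \cref{fact:lev} without proof---it is treated as a standard, well-known characterization of leverage scores---so there is no ``paper proof'' to compare against. Your argument is correct and is the standard one: after whitening by $(\mA^\top\mA)^{-1/2}$, the ratio $(\mA x)_i^2/\|\mA x\|_2^2$ becomes a squared cosine between $y$ and the fixed vector $(\mA^\top\mA)^{-1/2}a_i$, Cauchy--Schwarz gives the upper bound $\sigma(\mA)_i$, and the choice $x = (\mA^\top\mA)^{-1}a_i$ attains it. The only degenerate case is $a_i = 0$, where both sides equal $0$ and there is no genuine maximizer over the constraint $\mA x \neq 0$ (the supremum $0$ is attained trivially in the limit); you already flag this and it does not affect the result.
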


Lewis weights are a generalization of leverage scores to $\ell_p$ norms for $p \neq 2$.

\begin{definition}[$\ell_p$ Lewis weights]
\label{def:lewisweights}
For a matrix $\mA \in \R^{n \times d}$, the $\ell_p$ \emph{Lewis weights} are given by the unique vector $w \in \R_{\ge0}^n$ satisfying $w_i = \sigma(\mW^{1/2-1/p}\mA)_i$  $\forall$ $i \in [n]$.
\end{definition}
\cite{CP15} proves the existence and uniqueness of $\ell_p$ Lewis weights for all $p \in (0, \infty)$. Additionally, they provide an efficient contractive procedure for approximately computing the $\ell_p$ Lewis weight for $p < 4$. For our applications for $p < 2$, we use a regularized version of these weights, and defer the full statement of the approximation result needed until \cref{lemma:reglevapprox} in \cref{sec:smallq}. For our applications for $p \ge 4$ we show that it is possible to compute weights satisfying the weaker guarantee \eqref{eq:upperp}.

\begin{definition}[$\ell_p$ Lewis weight overestimates]
\label{def:overestimate}
For a matrix $\mA \in \R^{n \times d}$ we say that $w \in \R_{\ge0}^n$ are \emph{$\ell_p$ Lewis weight overestimates} if $d \le \|w\|_1 \le 2d$ and
$w_i \ge \sigma(\mW^{1/2-1/p}\mA)_i$ $\forall i \in [n]$.
\end{definition}
The factor of $2$ is somewhat arbitrary -- any constant factor suffices for our algorithms. In \cref{proofs:approxplarge} prove the following lemma showing that Lewis weight overestimates can be computed with a few linear system solves. Our approach is an extension of that in \cite{CCLY19} which provided a procedure for approximately computing the John ellipse, i.e. the $p = \infty$ case. 

\begin{lemma}[Computing $\ell_p$ Lewis weight overestimates]
\label{lemma:approxplarge}
Given any $\mA \in \R^{n \times d}$ and $p \ge 2$, $\ApproxLargeWeights(\mA, p)$ (\Cref{alg:approx_large})  in $\O(\T_\mA)$ time computes $\ell_p$ Lewis weight overestimates (\cref{def:overestimate}) whp.
\end{lemma}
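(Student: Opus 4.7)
The plan is to reduce the condition $w_i \ge \sigma(\mW^{1/2-1/p}\mA)_i$ to a scalar form and then run a monotone iterative scheme. Define $\mM(w) := \mA^\top \mW^{1-2/p} \mA$; a direct computation gives $\sigma(\mW^{1/2-1/p}\mA)_i = w_i^{1-2/p} \cdot a_i^\top \mM(w)^{-1} a_i$, so the overestimate condition becomes the scalar inequality $w_i^{2/p} \ge a_i^\top \mM(w)^{-1} a_i$ for every $i$, together with $\|w\|_1 \in [d, 2d]$. The exact $\ell_p$ Lewis weights satisfy this with equality at $\|w\|_1 = d$, so we only need an entry-wise overestimate with mass at most $2d$, which gives substantial slack.

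Following the John-ellipse ($p = \infty$) construction of \cite{CCLY19}, $\ApproxLargeWeights$ initializes $w^{(0)}$ to a constant multiple of the leverage scores of $\mA$ (which satisfies the analogous $p=2$ condition) and then iterates as follows. At step $k$, compute $(1 \pm \tfrac{1}{2})$-approximate leverage scores $\widetilde{\sigma}_i$ of $(\mW^{(k)})^{1/2-1/p}\mA$ by a standard Johnson--Lindenstrauss sketch, which needs $O(\log n)$ linear-system solves in $\mM(w^{(k)})$ and hence $\O(\T_\mA)$ time per iteration. Then apply a damped multiplicative update of the form $w_i^{(k+1)} \leftarrow w_i^{(k)} \cdot \max(1, \widetilde{\sigma}_i / w_i^{(k)})^{\alpha}$ for a fixed exponent $\alpha \in (0,1)$ depending on $p$; because $\sum_i \sigma(\mW^{1/2-1/p}\mA)_i = d$, a H\"{o}lder-type argument bounds $\|w^{(k+1)}\|_1$ by a constant multiple of $\|w^{(k)}\|_1$, so the mass stays in $[d, 2d]$ throughout.

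Convergence is analyzed through the one-sided log-violation potential $\Phi(w) := \max_i \log\bigl(w_i^{-2/p} a_i^\top \mM(w)^{-1} a_i\bigr)$. Increasing any $w_i$ strictly decreases $\mM(w)^{-1}$ in the Loewner order (Sherman--Morrison), so each coordinate of the violation vector is monotonically non-increasing under the update; combined with the choice of $\alpha$, each step shrinks $\Phi$ by a constant factor. After $O(\log n)$ iterations we have $\Phi(w) \le 0$, which is exactly the required overestimate condition, giving a total runtime of $\O(\T_\mA)$.

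The main obstacle is that for $p \ge 4$ the standard fixed-point iteration $w \mapsto \sigma(\mW^{1/2-1/p}\mA)$ need not be contractive, so the approach of \cite{CP15} (which handles $p < 4$) does not extend directly. The conceptual point imported from \cite{CCLY19} is that we do not need the Lewis weights exactly, only an overestimate: this lets us work with the one-sided potential $\Phi$, which is monotone under a damped upward update, replacing the failed contraction by a monotone geometric decrease. Pinning down the constant step-size $\alpha$ as a function of $p$ (so that the per-step potential drop and the mass bound are both satisfied) is the only delicate calculation; the remaining pieces (the JL-based leverage-score approximation, the initialization, and the termination test) are standard.
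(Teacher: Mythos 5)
Your mass-bound argument has a genuine gap: the two requirements you place on the step-size $\alpha$ and the iteration count $T$ cannot both hold. Because your update is monotone, $w_i^{(k+1)} = w_i^{(k)}\max(1, \widetilde{\sigma}_i / w_i^{(k)})^\alpha \ge w_i^{(k)}$, the $\ell_1$-mass only grows, and by concavity of $x \mapsto x^\alpha$ one has $w_i^{(k+1)} - w_i^{(k)} \le \alpha\max(\widetilde{\sigma}_i - w_i^{(k)}, 0) \le \alpha\widetilde{\sigma}_i$, so $\|w^{(k+1)}\|_1 - \|w^{(k)}\|_1 \le \alpha(1+\eps)d$ since the approximate leverage scores of $(\mW^{(k)})^{1/2-1/p}\mA$ sum to roughly $d$. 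Over $T$ iterations the mass reaches $\|w^{(0)}\|_1 + \Theta(\alpha T d)$, so keeping it in $[d,2d]$ forces $\alpha T = O(1)$. On the other hand, your per-step contraction of the max log-violation $\Phi$ is multiplicative with factor $1 - 2\alpha/p$, so reducing $\Phi_0$ below a constant takes $T = \Omega\bigl((p/\alpha)\log \Phi_0\bigr)$ iterations, i.e.\ $\alpha T = \Omega(p \log \Phi_0)$. You never bound $\Phi_0$ from above, and there is no reason for it to be $O(1)$ at your initialization (for the paper's uniform start $w^{(0)}_i = d/n$ it is $\Theta(\log n)$ in the worst case), so $\alpha T = O(1)$ and $\alpha T = \Omega(p\log\log n)$ are mutually exclusive for every $\alpha$. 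The ``constant multiple of $\|w^{(k)}\|_1$ per step'' Hölder bound does not help --- a constant factor per step compounds to a polynomial over $O(\log n)$ steps --- and you cannot rescale $w$ down afterwards, since $\sigma(\mW^{1/2-1/p}\mA)_i$ is invariant under $w \mapsto cw$, so dividing $w$ by a large factor directly breaks $w_i \ge \sigma(\mW^{1/2-1/p}\mA)_i$.

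The paper's proof is structurally different and entirely avoids this tension. \cref{alg:approx_large} runs the \emph{undamped} fixed-point map $w^{(k+1)} = \ApproxLev((\mW^{(k)})^{1/2-1/p}\mA, 1/10)$ for $T = O(\log n)$ steps and outputs the \emph{average} $w = \frac{3}{2T}\sum_{k=1}^T w^{(k)}$. Each $w^{(k)}$ with $k\ge 1$ is itself an approximate leverage-score vector, so $\|w^{(k)}\|_1 \le 1.1 d$ and the average automatically satisfies $\|w\|_1 \le 2d$; no per-step mass accounting is required. The overestimate property is then obtained not from a contraction but from \cref{lemma:potential}: the row-wise log-violation $\phi_i(v) = \log\bigl(v_i^{-2/p} a_i^\top(\mA^\top\diag{v}^{1-2/p}\mA)^{-1}a_i\bigr)$ is \emph{convex} in $v$, by composing the matrix-inverse convexity lemma of \cite{CCLY19} with concavity of $v \mapsto v^{1-2/p}$. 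Applying Jensen to the average and noting that $\phi_i(w^{(k)}) \le \log(1.1) + \log(w_i^{(k+1)}/w_i^{(k)})$ telescopes bounds $\phi_i$ of the average by $0.2$, after which the factor $3/2$ absorbs the residual. It is this convexity-plus-averaging mechanism, not a damped monotone iteration, that you would need to import from \cite{CCLY19}.
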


We can show \eqref{eq:upperp} holds for any $\ell_p$ Lewis weight overestimates.
\begin{lemma}
\label{lemma:weightkey}
For a matrix $\mA \in \R^{n \times d}$ and $\ell_p$ Lewis weight overestimates $w \in \R^n_{>0}$ (\cref{def:overestimate}) we have that $\|\mA x\|_p \le \|\mW^{\frac{1}{2} - \frac{1}{p}}\mA x\|_2$ for all $x \in \R^n$.
\end{lemma}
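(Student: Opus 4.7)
The plan is to reduce everything to the leverage-score characterization (\cref{fact:lev}) applied to the \emph{reweighted} matrix $\mW^{1/2-1/p}\mA$, and then interpolate from an $\ell_2$ bound to an $\ell_p$ bound using the overestimate hypothesis $w_i \ge \sigma(\mW^{1/2-1/p}\mA)_i$. Throughout we work under the implicit assumption $p \ge 2$ (which is when \cref{def:overestimate} is relevant via \cref{lemma:approxplarge}).

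First I would introduce the change of variable $y \defeq \mW^{1/2-1/p}\mA x \in \R^n$, so that pointwise $(\mA x)_i = w_i^{1/p - 1/2}\, y_i$ and hence $|(\mA x)_i|^p = w_i^{1 - p/2}\,|y_i|^p$. The advantage of this substitution is that $\|y\|_2 = \|\mW^{1/2-1/p}\mA x\|_2$ is exactly the right-hand side of the desired inequality, so it suffices to prove $\sum_i |(\mA x)_i|^p \le \|y\|_2^p$.

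The key per-coordinate estimate comes from \cref{fact:lev} applied to $\mW^{1/2-1/p}\mA$: for every $i$,
\[
y_i^2 \;\le\; \sigma(\mW^{1/2-1/p}\mA)_i \cdot \|y\|_2^2 \;\le\; w_i\, \|y\|_2^2,
\]
where the last inequality uses the overestimate property from \cref{def:overestimate}. Since $p \ge 2$, raising this to the $(p-2)/2$ power gives $|y_i|^{p-2} \le w_i^{(p-2)/2}\,\|y\|_2^{p-2}$. Combining with the change-of-variable identity yields
\[
|(\mA x)_i|^p \;=\; w_i^{1-p/2}\,|y_i|^{p-2}\,y_i^2 \;\le\; w_i^{1 - p/2}\cdot w_i^{(p-2)/2}\,\|y\|_2^{p-2}\,y_i^2 \;=\; \|y\|_2^{p-2}\,y_i^2,
\]
where the $w_i$ powers cancel exactly (this cancellation is the whole reason the exponent $1/2 - 1/p$ is the right choice in the definition of Lewis weights).

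Finally, summing over $i \in [n]$ gives $\|\mA x\|_p^p \le \|y\|_2^{p-2}\,\sum_i y_i^2 = \|y\|_2^p$; taking $p$-th roots completes the proof. There isn't really a substantive obstacle here beyond bookkeeping the exponents correctly; the entire argument hinges on the exact cancellation $w_i^{1-p/2}\cdot w_i^{(p-2)/2} = 1$, which is the algebraic content of why the reweighting $\mW^{1/2-1/p}$ is chosen. Note also that the argument only uses the leverage-score domination half of \cref{def:overestimate} and not the bound $\|w\|_1 \le 2d$, so the conclusion applies to any $w$ dominating the Lewis weights of the corresponding reweighted matrix.
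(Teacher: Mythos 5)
Your proof is correct and takes essentially the same route as the paper: both apply \cref{fact:lev} to the reweighted matrix $\mW^{1/2-1/p}\mA$ to get a per-coordinate bound, split $|\cdot|^p$ into a $(p-2)$-power factor times a square, and sum using $\sum_i w_i^{1-2/p}(\mA x)_i^2 = \|\mW^{1/2-1/p}\mA x\|_2^2$. The paper carries the bookkeeping in terms of $(\mA x)_i$ while you substitute $y_i = w_i^{1/2-1/p}(\mA x)_i$ explicitly, but the cancellation of $w_i$-powers is identical in both.
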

\begin{proof}
By \cref{fact:lev} we know that
\[ |(\mA x)_i| = w_i^{-\frac12+\frac1p}|(\mW^{\frac12-\frac1p}\mA x)_i| \le w_i^{-\frac12+\frac1p}\sigma(\mW^{\frac12-\frac1p}\mA)_i^{1/2}\|\mW^{\frac12-\frac1p}\mA x\|_2 \le w_i^{1/p}\|\mW^{\frac12-\frac1p}\mA x\|_2. \]
Hence, we see that
\[ \|\mA x\|_p^p = \sum_{i \in [n]} |(\mA x)_i|^p \le \sum_{i\in[n]} w_i^{\frac{p-2}{p}}\|\mW^{\frac12-\frac1p}\mA x\|_2^{p-2}(\mA x)_i^2 = \|\mW^{\frac12-\frac1p}\mA x\|_2^p. \]
Taking the $p$-th root of both sides gives us the result.
\end{proof}

\subsection{Iterative refinement}
\label{sec:iterrefine}

At a high level, the iterative refinement framework for $\ell_p$ norms introduced by \cite{AKPS19} shows that the Bregman divergence of the $\ell_p$ norm, i.e. the function $f(x) = |x|^p$, can be efficiently approximated by an $\ell_2$ and $\ell_p$ component. Using this, we can reduce solving high accuracy $\ell_p$-norm problems to solving approximate $\ell_2$-$\ell_p$ norm problems.
\begin{lemma}[\!\!{\cite[Lemma B.1]{APS19}}]
\label{lemma:refine}
For $x, \Delta \in \R^n$ and $p \ge 2$, we have for $g, r \in \R^n$ defined by $g_i = p|x_i|^{p-2}x_i$ and $r_i = |x_i|^{p-2}$ for $i \in [n]$ that
\begin{align}
\frac{p}{8}\sum_{i \in [n]} r_i \Delta_i^2 + 2^{-p-1}\|\Delta\|_p^p \le \|x+\Delta\|_p^p - \|x\|_p^p - g^\top\Delta \le 2p^2\sum_{i \in [n]} r_i \Delta_i^2 + p^p\|\Delta\|_p^p. \label{eq:refine}
\end{align}
\end{lemma}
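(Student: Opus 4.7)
Both sides of \eqref{eq:refine} decompose additively over coordinates $i \in [n]$, so it suffices to prove the scalar inequality
\[
\tfrac{p}{8}|a|^{p-2}b^2 + 2^{-p-1}|b|^p \;\le\; \varphi(a,b) \;\le\; 2p^2|a|^{p-2}b^2 + p^p|b|^p
\]
for $\varphi(a,b) := |a+b|^p - |a|^p - p|a|^{p-2}a\,b$ and all $a,b \in \R$. The case $a = 0$ reduces to $2^{-p-1} \le 1 \le p^p$, since $\varphi(0,b) = |b|^p$. For $a \ne 0$, the $p$-homogeneity $\varphi(\lambda a, \lambda b) = |\lambda|^p \varphi(a,b)$ together with the reflection symmetry $\varphi(-a,-b) = \varphi(a,b)$ let me normalize to $a = 1$, reducing matters to the one-variable claim
\[
\tfrac{p}{8}t^2 + 2^{-p-1}|t|^p \;\le\; \psi(t) \;\le\; 2p^2 t^2 + p^p|t|^p, \qquad \psi(t) := |1+t|^p - 1 - pt.
\]

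My main tool is the integral-remainder identity
\[
\psi(t) \;=\; p(p-1)\!\int_0^t (t-s)\,|1+s|^{p-2}\,ds,
\]
which immediately gives $\psi \ge 0$. I then split into two regimes. For $|t|$ small (say $|t| \le 1/p$), the integrand stays within an $O(1)$ factor of $1$ along the integration path, so $\psi(t) = \tfrac{1}{2}p(p-1)\,t^2\,(1 + O(1))$, which sits comfortably between $\tfrac{p}{8}t^2$ and $2p^2 t^2$; the $|t|^p$ contributions on both sides are negligible here. For $|t|$ large, the quadratic and $p$-th-power terms are comparable up to a $2^{O(p)}$ factor, and it suffices to bound $\psi(t)$ in terms of $|t|^p$ alone: the upper bound comes from $|1+t|^p \le (1+|t|)^p$ plus absorbing residual lower-order terms into $p^p|t|^p$, and the lower bound comes from $|1+t|^p \ge (|t|-1)^p$ for $|t| \ge 1$ with a similar absorption.

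The hard part will be the lower bound in the transitional range $t \in [-1,0]$ where $|t|$ is not small -- in particular near $t = -1$, where $|1+t|^p$ nearly vanishes and so cannot on its own dominate the linear term $1+pt$. Here I would exploit the monotonicity $\tfrac{d}{ds}\psi(-s) = p\bigl(1 - (1-s)^{p-1}\bigr) \ge 0$ on $s \in [0,1]$ together with the explicit value $\psi(-1) = p-1$, giving $\psi(-s) \ge \psi(-1/2) = \tfrac{p}{2} - 1 + 2^{-p} \ge 2^{-p}$ for $p \ge 2$, which comfortably exceeds the generous target $2^{-p-1}|t|^p \le 2^{-p-1}$ on this interval. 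The deliberately loose constants $2p^2$, $p^p$, $\tfrac{p}{8}$, and $2^{-p-1}$ in \eqref{eq:refine} are chosen precisely so that the routine case analysis above closes with slack in every subregime; tighter constants (e.g.\ the sharp leading $\tfrac{p(p-1)}{2}$ on the quadratic term) are available at the cost of a more careful accounting, but the looser form is what subsequent iterative-refinement arguments will use.
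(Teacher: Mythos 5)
The paper itself does not prove this lemma; it imports it verbatim as {\cite[Lemma B.1]{APS19}}, so there is no in-paper proof to compare against. Your reduction — decompose coordinatewise, use $p$-homogeneity and reflection to normalize to $a=1$, write the second-order Taylor remainder as an integral, and then case-split on the size of $t$ — is a reasonable and standard blueprint for this kind of two-sided $\ell_2$--$\ell_p$ divergence bound, and it is plausibly close in spirit to the argument in \cite{APS19}.

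However, the transitional-regime step as written does not close. You lower-bound $\psi(t)$ on $t\in[-1,-1/2]$ by the worst case $\psi(-1/2)=\tfrac{p}{2}-1+2^{-p}$ and then compare only against the term $2^{-p-1}|t|^p\le 2^{-p-1}$; you never account for the other half of the target, $\tfrac{p}{8}t^2$, which on this interval can be as large as $\tfrac{p}{8}$. Even keeping your full value of $\psi(-1/2)$ rather than the looser $2^{-p}$, the worst-case comparison $\tfrac{p}{2}-1+2^{-p}\ge\tfrac{p}{8}+2^{-p-1}$ is equivalent to $\tfrac{3p}{8}+2^{-p-1}\ge 1$, which fails at $p=2$ (giving $\tfrac{7}{8}<1$) and for all $p$ sufficiently close to $2$. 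The inequality you are trying to prove is nonetheless true pointwise on $[-1,-1/2]$ (both $\psi(t)$ and the target shrink together as $t\to -1/2$), so the fix is to compare $\psi(t)$ to the target at the same $t$ rather than minimizing one side and maximizing the other; e.g.\ observe on $t\in[-1,0]$ that $\psi(t)=p|t|-1+|1+t|^p$ and compare the linear-in-$|t|$ part against $\tfrac{p}{8}t^2\le\tfrac{p}{8}|t|$ directly. Separately, your stated regimes ($|t|\le 1/p$, the interval $[-1,-1/2]$, and ``$|t|$ large'') leave an uncovered band around $|t|\in(1/p,1/2)$ and for positive $t\in(1/p,1)$; a cleaner split is $|t|\le 1/2$ versus $|t|>1/2$, with the near-cancellation at $t=-1$ handled as a sub-case of the latter. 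The monotonicity identity $\tfrac{d}{ds}\psi(-s)=p(1-(1-s)^{p-1})$ and the integral-remainder formula are both correct and useful; the plan just needs the pointwise comparison and a watertight regime decomposition to become a proof.
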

There are several more restrictive variations of \cref{lemma:refine} for positive scalars that we use (shown in \cref{proofs:cor12}).
\begin{lemma}
\label{lemma:cor1}
For all $a, b \ge 0$ and $k \ge 2$ we have that $(a+b)^k - a^k \le 3ka^{k-1}b + 3k^k b^k.$
\end{lemma}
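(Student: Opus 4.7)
The plan is to handle the two natural regimes separately, splitting on whether $b$ is small or large compared to $a/k$. The intuition is that when $b$ is much smaller than $a$ the inequality is essentially a first-order Taylor estimate (so the $3ka^{k-1}b$ term dominates), while when $b$ is comparable to or larger than $a$ the inequality is an absolute bound on $(a+b)^k$ (so the $3k^kb^k$ term dominates). In both cases the universal bound $(1+1/k)^k \le e < 3$ for $k \ge 1$ will be the key numerical fact.

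\textbf{Case 1 ($b \le a/k$).} I would write $(a+b)^k - a^k = a^k((1+t)^k - 1)$ with $t = b/a \le 1/k$, use $(1+t)^k \le e^{kt}$, and then exploit that on $[0,1]$ the function $x \mapsto (e^x-1)/x$ is bounded by $e-1 < 3$. This gives $(1+t)^k - 1 \le e^{kt} - 1 \le 3kt$, hence
\[
(a+b)^k - a^k \;\le\; a^k \cdot 3k(b/a) \;=\; 3ka^{k-1}b,
\]
which is the first term on the right-hand side. (If $a=0$ the case is vacuous; I'd just observe that $b \le a/k$ forces $b = 0$ too.)

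\textbf{Case 2 ($b > a/k$).} Here $a < kb$, so $a+b < (k+1)b$ and therefore $(a+b)^k \le (k+1)^k b^k$. Factoring $(k+1)^k = k^k(1+1/k)^k \le e\cdot k^k < 3k^k$ then yields $(a+b)^k \le 3k^k b^k$, and dropping the nonnegative $a^k$ on the left gives
\[
(a+b)^k - a^k \;\le\; 3k^k b^k.
\]
Combining the two cases (with the omitted term on the right-hand side being nonnegative) establishes the lemma.

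I do not expect a serious obstacle: the only thing to get right is the constant, and using $(1+1/k)^k \le e$ in both regimes delivers the factor $3$ uniformly for all $k \ge 2$. If a slicker unified argument is preferred, one could instead expand $(a+b)^k = \sum_{j=0}^k \binom{k}{j} a^{k-j} b^j$ and split the sum at $j=1$, bounding $\binom{k}{j} \le k^j/j!$ and then estimating the tail by $k^k b^k$ times a geometric/exponential factor; but the two-case argument above is cleaner and already gives the stated constants.
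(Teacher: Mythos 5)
Your proof is correct, and it takes a genuinely different route from the paper. The paper derives this scalar inequality as a corollary of \cref{lemma:refine} (the Bregman divergence bound for $|x|^p$, quoted from \cite{APS19}): specializing that lemma to scalars gives $(a+b)^k - a^k \le ka^{k-1}b + 2k^2a^{k-2}b^2 + k^kb^k$, and the middle cross term $2k^2a^{k-2}b^2$ is then absorbed into the two endpoint terms via Young's inequality. Your argument instead is self-contained and elementary: you split on $b \le a/k$ versus $b > a/k$, and in each regime show that one of the two right-hand terms alone already dominates, using only $(1+t)^k \le e^{kt}$, the monotonicity bound $(e^x-1)/x \le e-1 < 3$ on $(0,1]$, and $(1+1/k)^k \le e < 3$. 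Both yield the stated constant $3$. Your version has the advantage of not invoking the heavier iterative-refinement machinery of \cref{lemma:refine} and is arguably cleaner as a standalone fact; the paper's version is a one-line consequence of a lemma it already needs elsewhere, so it costs nothing in context. One small stylistic note: the bound $(e^x-1)/x \le e-1$ on $(0,1]$ deserves the one-line justification that this ratio is increasing (its derivative has sign $xe^x - e^x + 1 \ge 0$), since it is not entirely standard.
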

The second corollary is useful in slightly different regimes of the exponent $k$.
\begin{lemma}
\label{lemma:cor2}
For all $a, b \ge 0$ and $k \ge 1$ we have that $(a+b)^k - a^k \le 4^k(a^{k-1}b + b^k).$
\end{lemma}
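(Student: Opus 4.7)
The proof plan is to split into two cases depending on whether $b \le a$ or $b > a$, since the bound has the form of a linear term (dominant when $b$ is small relative to $a$) plus a pure $b^k$ term (dominant when $b$ dominates $a$).

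First I would handle the case $b \le a$. Since $t \mapsto t^k$ is convex for $k \ge 1$, we have by convexity (or equivalently the mean value theorem applied to $f(t) = t^k$)
\[
(a+b)^k - a^k \le k(a+b)^{k-1} b.
\]
In the regime $b \le a$ we have $a+b \le 2a$, so the right-hand side is at most $k \cdot 2^{k-1} a^{k-1} b$. A quick check shows $k \cdot 2^{k-1} \le 4^k$ for all $k \ge 1$ (since $k \le 2^k$, whence $k \cdot 2^{k-1} \le 2^{2k-1} < 4^k$), and hence $(a+b)^k - a^k \le 4^k a^{k-1} b$ in this case.

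In the complementary case $b > a$, I would just bound $(a+b)^k \le (2b)^k = 2^k b^k \le 4^k b^k$, which immediately gives $(a+b)^k - a^k \le 4^k b^k$. Putting the two cases together yields the claimed inequality, since in either case the right-hand side is at most $4^k(a^{k-1}b + b^k)$.

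The proof is essentially a bookkeeping exercise, so I do not anticipate any real obstacle; the only minor annoyance is verifying the numerical inequality $k \cdot 2^{k-1} \le 4^k$, which is immediate from $k \le 2^k$. I chose the constant $4^k$ (rather than the tighter $(2k)^k$-style constants that would come from a direct mean value argument uniformly in $b$) precisely because the case split collapses both regimes into the single clean constant $4^k$, matching the form of the statement.
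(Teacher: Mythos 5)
Your proof is correct and takes essentially the same route as the paper's: both split on whether $b \le a$ (the paper writes this as $x = b/a \le 1$), bound the linear term via the derivative of $t^k$ together with $a+b \le 2a$ to get the constant $k\,2^{k-1} \le 4^k$, and handle $b > a$ by absorbing everything into $(2b)^k$. The only cosmetic difference is that the paper normalizes by $a$ and uses the integral representation $(1+x)^k - 1 = \int_0^x k(1+t)^{k-1}\,dt$, whereas you invoke convexity directly and avoid the division; both are the same estimate.
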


Searching over the value of $g^\top\Delta$ reduces solving $\ell_p$ regression to high accuracy to approximately solving constrained $\ell_2$-$\ell_p$ problems. We call a procedure for approximately solving constrained $\ell_2$-$\ell_p$ problems a \emph{$\gamma$-solver} and provide this reduction, \cite[Theorem 3.1]{APS19} below.
\begin{definition}[$\gamma$-solver]
\label{def:gammasolver}
We call an algorithm a \emph{$\gamma$-solver} if given $\nu \ge 0$, $g \in \R^n$, and a positive diagonal matrix $\mR \in \R^{n\times n}$, it is the case that for  
\[ 
\OPT = \min_{\mC \Delta = 0 ,  g^\top \Delta = -\nu} \Delta^\top\mA^\top\mR\mA\Delta + \|\mA\Delta\|_p^p, 
\] 
the algorithm returns a $\hDelta$ satisfying $\mC\hDelta = 0$, $g^\top\hDelta = -\nu$, and
\[ 
\hDelta^\top\mA^\top\mR\mA\hDelta \le \gamma \OPT \enspace \text{ and } \enspace \|\mA\hDelta\|_p^p \le \gamma^{p-1} \OPT.
 \]
\end{definition}

\begin{lemma}[\!\!{\cite[Theorem 3.1]{APS19}}]
\label{lemma:gammasolver}
Given $\mU \in \R^{n_1 \times d}, \mA \in \R^{n_2 \times d}$ and $b, v$ and $p \ge 2$, we can compute an $x \in \R^d$ satisfying $\mU x = v$ and
\[ \|\mA x - b\|_p \le (1+\eps) \min_{\mU x = v} \|\mA x - b\|_p \] in $O(p^{3.5}\gamma\log(m/\eps))$ calls to a $\gamma$-solver (\cref{def:gammasolver}).
\end{lemma}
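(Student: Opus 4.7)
The plan is to apply the iterative refinement scheme of \cite{AKPS19,APS19} using the $\gamma$-solver as the inner subroutine. I would initialize $x_0$ satisfying $\mU x_0 = v$ (one linear system solve) and iteratively update $x_{k+1} = x_k + \alpha\hDelta$ with $\mU\hDelta = 0$, preserving the constraint $\mU x_k = v$. The goal is a per-outer-iteration decrease of $1 - \Omega(1/(p^{7/2}\gamma))$ in the gap $G_k := \|\mA x_k - b\|_p^p - \OPT_p$, where $\OPT_p := \min_{\mU x = v}\|\mA x - b\|_p^p$, at an amortized cost of $O(\log(m/\eps))$ $\gamma$-solver calls per outer iteration.

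At iteration $k$, set $y_k := \mA x_k - b$, $g_i := p|y_{k,i}|^{p-2}y_{k,i}$, and $r_i := |y_{k,i}|^{p-2}$. The lower bound of \cref{lemma:refine} applied to $\Delta^\star := x^\star - x_k$, with $\nu^\star := -g^\top\mA\Delta^\star$, gives
\[
\tfrac{p}{8}(\Delta^\star)^\top\mA^\top\mR\mA\Delta^\star + 2^{-p-1}\|\mA\Delta^\star\|_p^p \le G_k + \nu^\star,
\]
so $\Delta^\star$ is feasible for the $\gamma$-solver subproblem at $\nu = \nu^\star$ with objective at most $O(2^p/p)(G_k + \nu^\star)$. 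Since $\nu^\star$ is unknown, I would either binary-search over $\nu$ across $O(\log(m/\eps))$ geometrically spaced values or run a doubling scheme on a gap estimate, calling the $\gamma$-solver once per value to obtain $\hDelta_\nu$ satisfying $\hDelta_\nu^\top\mA^\top\mR\mA\hDelta_\nu \le \gamma\OPT_\nu$ and $\|\mA\hDelta_\nu\|_p^p \le \gamma^{p-1}\OPT_\nu$. For each returned $\hDelta_\nu$, the upper bound of \cref{lemma:refine} applied to the scaled step $\alpha\hDelta_\nu$ yields
\[
\|y_k + \alpha\mA\hDelta_\nu\|_p^p - \|y_k\|_p^p \le -\alpha\nu + 2p^2\alpha^2\gamma\cdot\OPT_\nu + p^p\alpha^p\gamma^{p-1}\cdot\OPT_\nu,
\]
and choosing $\alpha$ to balance the three terms (roughly $\alpha = \Theta((p^{5/2}\gamma)^{-1})$) makes the linear gain dominate, producing per-iteration decrease $\Omega(\nu^\star/(p^{7/2}\gamma))$ at the grid point nearest $\nu^\star$. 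I then take the step, across all candidates, that minimizes $\|y_k + \alpha\mA\hDelta_\nu\|_p$.

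The main obstacle is calibrating $\alpha$ so that all three terms on the right are balanced to within constant factors: this calibration determines the final $p^{7/2} = p^{3.5}$ exponent, combining a factor $p$ from the $p/8$ vs.\ $2p^2$ discrepancy on the quadratic side, a factor $p^{1/2}$ from an $\alpha^2$-scaling of $\gamma\OPT_\nu$, and a factor $p$ from the $2^{-p-1}$ vs.\ $p^p$ mismatch on the $\ell_p$ side not absorbed by the $\gamma^{p-1}$ scaling. Two additional subtleties require attention: (i) the regime $\nu^\star \ll G_k$, where the lower bound of \cref{lemma:refine} forces the quadratic or $\ell_p$ mass of $\Delta^\star$ itself to carry $\Omega(G_k)$ and the rescaled step then delivers progress directly from this mass; and (ii) bounding the initial gap $G_0 \le \poly(m,\|\mA\|,\|b\|)\cdot\OPT_p$ via a least-squares warm start, which combined with the geometric per-iteration decrease and the grid/doubling cost collapses to the claimed $O(p^{3.5}\gamma\log(m/\eps))$ total $\gamma$-solver calls after log factors are absorbed.
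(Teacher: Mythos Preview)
The paper does not prove this lemma; it is quoted verbatim as \cite[Theorem 3.1]{APS19} and used as a black box. There is therefore no ``paper's own proof'' to compare against. Your sketch is the standard iterative refinement argument from \cite{AKPS19,APS19}: bound the Bregman divergence above and below via \cref{lemma:refine}, search over the linear level $\nu$, call the $\gamma$-solver at each candidate, and take a scaled step $\alpha\hDelta$ with $\alpha$ chosen so the quadratic and $\ell_p$ overhead terms are dominated by the linear gain. That is exactly how \cite{APS19} proves the cited theorem, so your approach is correct in outline.

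Two small cautions. First, your inequality ``$\le G_k + \nu^\star$'' has the sign/direction slightly off: applying the lower bound of \cref{lemma:refine} to $\Delta^\star = x^\star - x_k$ gives
\[
\tfrac{p}{8}(\Delta^\star)^\top\mA^\top\mR\mA\Delta^\star + 2^{-p-1}\|\mA\Delta^\star\|_p^p \le \nu^\star - G_k,
\]
not $G_k + \nu^\star$; in particular $\nu^\star \ge G_k$ by convexity, which is what actually drives the geometric decrease. Second, your accounting for the exponent $3.5$ is heuristic; the precise constant in \cite{APS19} comes from a specific choice of $\alpha$ and a case split (in their notation, balancing the $2p^2$ quadratic constant against the $p^p$ vs.\ $2^{-p-1}$ mismatch, together with the $\gamma$ vs.\ $\gamma^{p-1}$ asymmetry in the $\gamma$-solver guarantee), and a clean derivation requires tracking those constants rather than the informal ``$p \cdot p^{1/2} \cdot p$'' decomposition you gave. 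Neither point affects the correctness of the overall scheme.
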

\section{Energy Boosting Algorithm for Large $p$}
\label{sec:largep_ckmst}

The goal of this section is to give an algorithm to show \cref{thm:largemain}. By \cref{lemma:gammasolver} and scaling we may assume that $\nu = 1$ and $\OPT = 1$ \cite[Lemma 5.4]{AKPS19} throughout, and we use the following setup throughout the section.
\begin{prob}[Scaled residual]
\label{prob:scaledresidual}
In the \emph{scaled residual problem} we are given $\mA \in \R^{n \times d}$, $g \in \R^d$, and diagonal $\mR \in \R_{\ge0}^{n \times n}$ such that there exists $\xopt \in \R^d$ satisfying $g^{\top}\xopt=-1$ with $\xopt^{\top} \mA^{\top} \mR \mA \xopt \le 1$ and $\|\mA \xopt\|_p \le 1$. We call $y$ an \emph{$\alpha$-approximate solution} to the problem if $g^\top y = -1$, $y^\top \mA^{\top} \mR \mA \xopt \le \alpha$ and  $\|\mA y\|_p \le \alpha$.
\end{prob}

In the notation of \Cref{prob:scaledresidual}, proving the following lemma suffices to show \cref{thm:largemain}.
\begin{lemma}
\label{lemma:largemain}
Given an instance of \cref{prob:scaledresidual}, Algorithm $\Oracle(\mA, g, \mR, p)$ returns an $O(p)^p$-approximate $y$ in $O(p)^pd^\frac{p-2}{3p-2} \cdot \T_\mA$ time whp.
\end{lemma}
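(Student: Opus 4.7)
The plan is to prove \cref{lemma:largemain} via a width-reduced multiplicative-weights-update (MWU) algorithm in the spirit of \cite{AKPS19}, with the key new ingredient of initializing the row resistances using $\ell_p$-Lewis weight overestimates of $\mA$. By \cref{lemma:approxplarge} we first compute, in time $\O(\T_\mA)$, weights $w \in \R^n_{\ge 0}$ with $\|w\|_1 \le 2d$ satisfying the embedding $\|\mA x\|_p \le \|\mW^{1/2-1/p}\mA x\|_2$ of \cref{lemma:weightkey}. The algorithm then runs $K = O(p)^p \cdot d^{(p-2)/(3p-2)}$ rounds: at round $t$ we solve $\Delta^{(t)} = \argmin\{\Delta^\top \mA^\top \mR^{(t)}\mA \Delta : g^\top \Delta = -1\}$ via a single linear solve in $\mA^\top \mR^{(t)}\mA$ (cost $\T_\mA$), and multiplicatively boost the resistance of each row whose width against a $w_i^{1/p}$-scaled threshold is too large. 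The starting resistance is $r^{(0)}_i = r_i + c_p \cdot w_i^{1-2/p}$ for a $p$-dependent constant $c_p$, and the final output is the average $\bar y = \tfrac{1}{K}\sum_t \Delta^{(t)}$, which satisfies $g^\top \bar y = -1$ by linearity.

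Correctness and the iteration bound come from a joint potential argument on the $\ell_2$ energy $\Phi(\mR) = \min_{g^\top \Delta = -1}\Delta^\top \mA^\top \mR \mA \Delta$. Feasibility of $\xopt$ gives $\Phi(\mR^{(t)}) \le \xopt^\top \mA^\top \mR^{(t)}\mA \xopt$, and the Lewis-weight initialization paired with the H\"older dual of \cref{lemma:weightkey} (the bound $\sum_i w_i^{1-2/p}(\mA \xopt)_i^2 \le (2d)^{1-2/p}\|\mA \xopt\|_p^2$) keeps the baseline piece of order $\poly(p)\cdot d^{1-2/p}$. The core step is an energy-boosting lemma: using the Sherman-Morrison formula on $(\mA^\top \mR \mA)^{-1}$, an update $r_i \to \beta r_i$ on a wide row raises $\Phi$ by $\Omega(r_i(\mA \Delta^{(t)})_i^2)$ even for $\beta$ substantially larger than $2$, provided the effective resistance on that row is controlled---and the embedding of \cref{lemma:weightkey} provides precisely this control. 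Capping the total potential growth then bounds the global boost mass, which together with the width threshold and the power-mean inequality $\|\mA \bar y\|_p^p \le \tfrac{1}{K}\sum_t \|\mA \Delta^{(t)}\|_p^p$ yields $\|\mA \bar y\|_p \le O(p)^p$ and $\bar y^\top \mA^\top \mR \mA \bar y \le O(p)^p$. Balancing the boost budget against $K$ and the width threshold gives the claimed iteration count, and the total runtime is $O(p)^p d^{(p-2)/(3p-2)} \T_\mA$ since each round is one linear system solve in $\mA^\top \mR^{(t)}\mA$.

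The main obstacle will be the energy-boosting lemma with super-doubling updates. Classical width-reduced MWU analyses admit only $\beta = O(1)$ per boost, giving an $n$-budget (one doubling of a bounded ratio per row) and hence the $n^{(p-2)/(3p-2)}$ count of \cite{AKPS19}. Converting this $n$ into a $d$ requires proving that much larger multiplicative jumps in a single row's resistance still produce a proportional increment in $\Phi$, which is delicate because the newly boosted resistance term can dominate all others in the Sherman-Morrison expansion. The stability given by $\|\mA x\|_p \le \|\mW^{1/2-1/p}\mA x\|_2$ (\cref{lemma:weightkey}) is exactly what is needed to bound the leverage-like term in that expansion, turning the per-row $\log n$ budget into a Lewis-weighted $\|w\|_1 \le 2d$ budget. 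Once this lemma is in hand, the remainder of the proof follows by essentially substituting $\|w\|_1 \le 2d$ wherever $n$ appeared in \cite{AKPS19}.
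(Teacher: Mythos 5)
Your proposal identifies the same core strategy the paper uses: compute $\ell_p$ Lewis weight overestimates (\cref{lemma:approxplarge}), run a width-reduced multiplicative-weights scheme as in \cite{AKPS19}, and prove an energy-boosting lemma via the Woodbury/Sherman--Morrison formula in which the stability inequality $\|\mA x\|_p \le \|\mW^{1/2-1/p}\mA x\|_2$ of \cref{lemma:weightkey} controls the leverage-like term and lets resistances jump by more than a constant while still gaining energy. This is precisely the paper's \cref{lemma:energyincrease}, and your use of the H\"older dual of \cref{lemma:weightkey} to cap the baseline energy at $O_p(d^{1-2/p})$ matches the paper's \cref{lemma:energyupper}. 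So the route is essentially the paper's.

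That said, there are concrete details in your sketch that do not quite close. First, the initialization $r^{(0)}_i = r_i + c_p w_i^{1-2/p}$ with a $p$-only constant $c_p$ will not give $\bar y^\top\mA^\top\mR\mA\bar y = O(p)^p$: from the energy bound $O_p(d^{1-2/p})$ you only get $\bar y^\top\mA^\top\mR\mA\bar y \le O_p(d^{1-2/p})$. The paper fixes this by using $d^{1-2/p}\mR + \mS^{p-2}$ (so $z^\top\mA^\top\mR\mA z \le d^{2/p-1}\E(s) = O(p)^p$), and you need this or an equivalent rescaling. Second, if $\bar y$ averages \emph{all} rounds, the power-mean bound $\|\mA\bar y\|_p^p \le \frac1K\sum_t\|\mA\Delta^{(t)}\|_p^p$ is dominated by the boost rounds, which by design have $\|\mA\Delta^{(t)}\|_p^p \ge \tau$; the paper avoids this by only accumulating the iterate on progress rounds (where $\|\mA z\|_p^p < \tau$) and maintaining a second potential $\Phi(s) = \|s\|_p^p$ that is bounded across boost rounds as well. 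Third, the paper's boost is additive in $s^{p-2}$ with an explicit $\|v\|_{p/(p-2)} \le 1$ budget, not a multiplicative $r_i \to \beta r_i$; the additive normalization is what makes the Woodbury term uniformly bounded, so if you insist on multiplicative boosts you need a separate calibration to recover this control. None of these are conceptual obstructions — they are the calibrations the paper carries out — but as stated your sketch has gaps at exactly these three points.
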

\begin{proof}[Proof of \cref{thm:largemain}]
\cref{lemma:largemain} satisfies the conditions of \cref{lemma:gammasolver} for $\gamma = O(p)^p.$ Each call to \cref{lemma:largemain} requires $O(p)^p d^\frac{p-2}{3p-2}$ calls to a solver to $\mA^\top\mD\mA$ so the total number of iterations is \[ \gamma \cdot p^{3.5} \cdot O(p)^p d^\frac{p-2}{3p-2}\log(m/\eps) = O(p)^pd^\frac{p-2}{3p-2}\log(m/\eps). \]
\end{proof}

To show \cref{lemma:largemain} we use the following \cref{algo:pnormbasic}.
\begin{algorithm}[ht]
\caption{$\Oracle(\mA, g, \mR, p)$. Given $\mA, \mR, g$ satisfying \cref{prob:scaledresidual}, returns a $y \in \R^d$ with $g^{\top}y=-1$, $y^{\top}\mA^{\top}\mR \mA y \le O_p(1)$, and $\|\mA y\|_p = O_p(1)$ in $O_p(d^\frac{p-2}{3p-2})$ iterations. \label{algo:oracle}}
$w \assign \ApproxLewis(\mA, p)$. \label{line:approxlewislarge} \Comment{Compute $\ell_p$ Lewis weight overesimates of $\mA$ via \cref{lemma:approxplarge}} \\
$y \assign 0$ and $s \assign w^{1/p}$. \Comment{Iterates} \label{line:inits} \\
$\kappa \assign \kappa_pd^{1/p}$, $\alpha \assign \alpha_pd^{-\frac{p^2-5p+2}{p(3p-2)}}$, $\tau \assign \tau_pd^{\frac{(p-2)(p-1)}{3p-2}}$. \Comment{Constants $\kappa_p, \tau_p$ large, $\alpha_p$ small}\footnote{We show in the proof of \cref{thm:largemain} at the end of this section that $\tau_p = 40^p, \alpha_p = 1/(1000p)$, and $\kappa_p = p$ works.} \\
\For{$t = 0, 1, \ldots , T \defeq \lfloor \alpha^{-1}d^{1/p} \rfloor$}{
	$z \assign \argmin_{g^{\top}x = -1} x^{\top}\mA^{\top}\left(d^{1-\frac2p}\mR + \mS^{p-2}\right)\mA x$. \Comment{$\mS = \diag{s}$} \label{line:setz} \\
	\While{$\|\mA z\|_p^p \ge \tau$ \label{line:iftau}}{
		$S \assign \{ i \in [n] : s_i \le 2^{-\frac{p}{p-2}}\kappa|(\mA z)_i| \}.$ \Comment{Boosting step.} \label{line:chooseS} \\
		$s_i \assign \left(s_i^{p-2} + \frac{\tau^{2/p}|(\mA z)_i|^{p-2}}{4\|\mA z\|_p^p}\right)^\frac{1}{p-2}$ for $i \in S$. \label{line:boost} \\
		$z \assign \argmin_{g^{\top}x = -1} x^{\top}\mA^{\top}\left(d^{1-\frac2p}\mR + \mS^{p-2}\right)\mA x$.
	}
	$y \assign y + \alpha z$ and $s \assign s + \alpha|\mA z|$. \Comment{Progress step.} \label{line:progresss}
}
\Return $(\alpha T)^{-1} y.$
\label{algo:pnormbasic}
\end{algorithm}
It follows the multiplicative weights and width reduction approach of \cite{CKMST11,AKPS19}. The algorithm solves $\ell_2$-norm problems in sequence. When the $\ell_p$ norm of the resulting solution is small enough, i.e. $\|\mA z\|_p^p \le \tau$, the algorithm performs a \emph{progress step} in line \ref{line:progresss}, and adds $z$ to the output. However, whenever the $\ell_p$ norm of the returned solution is large, the algorithm performs a \emph{boosting step} in line \ref{line:boost}, and increases the resistance of the large coordinates contributing significantly to the $\ell_p$ norm $\|\mA z\|_p^p$ to force them to be smaller in future iterations.

To analyze Algorithm $\Oracle(\mA, g, \mR, p)$ in \cref{algo:pnormbasic} and thereby prove \cref{lemma:largemain}, we analyze two potential functions following the approach and notation of \cite{AKPS19}. The first is $\Phi(s) \defeq \|s\|_p^p$, and the second is the \emph{energy} (where $\mS \defeq \diag{s}$)
\[ 
\E(s) \defeq \min_{g^{\top}x=-1} x^\top\mA^\top\left(d^{1-\frac2p}\mR + \mS^{p-2}\right)\mA x. 
\] 
We show that a progress or boosting step doesn't increase $\Phi$ by too much, and that a boosting step significantly increases the energy. Combining this with an energy upper bound completes the proof. To reason about the energy increase we use the following alternate characterization of the energy.
\begin{lemma}
\label{lemma:energysolve}
For any symmetric positive definite matrix $\mB$ and vector $g$ we have
\begin{equation}
\label{eq:quad}
\argmin_{g^\top x = -1} x^\top \mB x = -\frac{1}{g^\top \mB^{-1} g} \mB^{-1} g
\text{ and } 
\min_{g^\top x = -1} x^\top \mB x = (g^\top \mB^{-1} g)^{-1}\,.
\end{equation}
\end{lemma}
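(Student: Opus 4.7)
The plan is to carry out a standard Lagrangian calculation, exploiting that $\mB \succ 0$ makes the objective strictly convex so first-order conditions are both necessary and sufficient. Form the Lagrangian $L(x,\lambda) \defeq x^\top \mB x + \lambda(g^\top x + 1)$. Setting $\nabla_x L = 2\mB x + \lambda g = 0$ yields $x = -\tfrac{\lambda}{2}\mB^{-1} g$, which is well-defined since $\mB$ is invertible.

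Next I would impose the constraint $g^\top x = -1$ to pin down $\lambda$. Substituting gives $-\tfrac{\lambda}{2} g^\top \mB^{-1} g = -1$, hence $\lambda = 2/(g^\top \mB^{-1} g)$; note the denominator is strictly positive because $\mB^{-1}$ is positive definite and $g \neq 0$ (if $g = 0$ the constraint set is empty, a degenerate case which can be excluded from the statement). Plugging back yields the formula
\[
\argmin_{g^\top x = -1} x^\top \mB x \;=\; -\frac{1}{g^\top \mB^{-1} g}\,\mB^{-1} g,
\]
matching the first assertion.

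Finally I would plug this $x_\star$ into the objective and compute
\[
x_\star^\top \mB x_\star \;=\; \frac{1}{(g^\top \mB^{-1} g)^2}\, g^\top \mB^{-1} \mB \mB^{-1} g \;=\; \frac{g^\top \mB^{-1} g}{(g^\top \mB^{-1} g)^2} \;=\; (g^\top \mB^{-1} g)^{-1},
\]
which gives the second assertion. To make the argument fully rigorous I would briefly note that strict convexity of $x \mapsto x^\top \mB x$ combined with the affine constraint guarantees that the stationary point found above is the unique global minimizer on the feasible set, so no second-order check or boundary analysis is needed. There is no real obstacle here — the only subtlety is the standing assumption $g \neq 0$, which is implicit since otherwise the feasible set $\{x : g^\top x = -1\}$ is empty.
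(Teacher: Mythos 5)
Your proof is correct and takes essentially the same route as the paper: the paper writes the first-order optimality condition directly as $\mB x^* = \alpha^* g$ for some multiplier $\alpha^*$ (i.e. the gradient of the objective is parallel to the constraint normal), while you make the Lagrangian explicit, but the computation and conclusion are identical. The remark that $g \neq 0$ is needed for the feasible set to be nonempty (and for $g^\top \mB^{-1} g > 0$) is a reasonable clarification that the paper leaves implicit.
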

\begin{proof}
Let $x^*$ be the minimizer of \eqref{eq:quad}. Note that $g^\top x^* = -1$ by assumption and $\mB x^* = \alpha^* g$ for some unknown $\alpha^*$. Consequently, $x^* = \alpha^* \mB^{-1} g$ and the claim follows from
\[
-1 = g^\top x^* = \alpha^* g^\top \mB^{-1} g.
\]
The second claim follows by using this value to compute ${x^*}^\top \mB x^*.$
\end{proof}

\begin{lemma}[Energy upper bound]
\label{lemma:energyupper}
In \cref{prob:scaledresidual}, for any vector $s$ satisfying $s \ge w^{1/p}$ coordinate-wise for $\ell_p$ Lewis weight overestimates $w$ (\cref{def:overestimate}), we have $\E(s) \le 2\Phi(s)^{1-\frac2p}.$
\end{lemma}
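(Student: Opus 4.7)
The plan is to upper-bound the minimum defining $\E(s)$ by evaluating its objective at the specific feasible point $\xopt$ provided by \cref{prob:scaledresidual}, then control the two resulting terms separately using the hypotheses on $\xopt$ together with the Lewis-weight normalization $\|w\|_1\ge d$.

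First I would write
\[
\E(s) \;\le\; \xopt^\top \mA^\top\!\left(d^{1-\frac{2}{p}}\mR + \mS^{p-2}\right)\!\mA\xopt \;=\; d^{1-\frac{2}{p}}\,\xopt^\top \mA^\top \mR \mA \xopt \;+\; \sum_{i\in[n]} s_i^{p-2}(\mA\xopt)_i^{2},
\]
which is valid because $g^\top \xopt = -1$. The first summand is at most $d^{1-2/p}$ by the assumption $\xopt^\top\mA^\top\mR\mA\xopt \le 1$ in \cref{prob:scaledresidual}. For the second summand I would apply H\"older's inequality with conjugate exponents $p/(p-2)$ and $p/2$ to obtain
\[
\sum_{i\in[n]} s_i^{p-2}(\mA\xopt)_i^{2} \;\le\; \Big(\sum_i s_i^{p}\Big)^{\!\frac{p-2}{p}} \Big(\sum_i (\mA\xopt)_i^{p}\Big)^{\!\frac{2}{p}} \;=\; \Phi(s)^{1-\frac{2}{p}}\,\|\mA\xopt\|_p^{2} \;\le\; \Phi(s)^{1-\frac{2}{p}},
\]
again using $\|\mA\xopt\|_p\le 1$.

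To combine the two bounds into the claimed inequality I would use the hypothesis $s \ge w^{1/p}$ coordinatewise and the fact that $w$ are $\ell_p$ Lewis weight overestimates, so $\|w\|_1 \ge d$ by \cref{def:overestimate}. This yields
\[
\Phi(s) \;=\; \sum_i s_i^{p} \;\ge\; \sum_i w_i \;=\; \|w\|_1 \;\ge\; d,
\]
and hence (since $1-2/p \ge 0$ because $p\ge 2$) $d^{1-2/p} \le \Phi(s)^{1-2/p}$. Adding the two bounds gives $\E(s) \le 2\Phi(s)^{1-2/p}$, which is the desired inequality.

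There is no real obstacle here; the proof is essentially routine once one notices that evaluating the quadratic at $\xopt$ cleanly separates the $\mR$ and $\mS^{p-2}$ contributions, and that the Lewis-weight normalization is exactly what turns the fixed factor $d^{1-2/p}$ coming from the $\mR$ term into something absorbable by $\Phi(s)^{1-2/p}$. The only subtlety worth flagging is that this is the sole place where the assumption $s\ge w^{1/p}$ (rather than merely $s\ge 0$) enters, so I would make sure \cref{algo:pnormbasic} maintains $s\ge w^{1/p}$ as an invariant across progress and boosting steps (which is immediate from the initialization $s\leftarrow w^{1/p}$ in \cref{line:inits} and the monotonicity of the updates).
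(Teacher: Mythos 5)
Your proposal is correct and follows essentially the same route as the paper: evaluate the quadratic defining $\E(s)$ at the feasible point $\xopt$, split off the $\mR$ term using $\xopt^\top\mA^\top\mR\mA\xopt\le 1$, apply H\"older with exponents $p/(p-2)$ and $p/2$ to the $\mS^{p-2}$ term, and absorb $d^{1-2/p}$ into $\Phi(s)^{1-2/p}$ via $\Phi(s)\ge\|w\|_1\ge d$. Your closing remark about verifying that $s\ge w^{1/p}$ is maintained throughout the algorithm is a sensible sanity check, but it is not part of the lemma's proof, which (like the paper's) only needs that hypothesis pointwise.
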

\begin{proof}
Let $\xopt$ be as in \cref{prob:scaledresidual}. By \holder~we have that
\begin{align*}
\E(s) &\le \xopt^\top\mA^\top\left(d^{1-\frac2p}\mR + \mS^{p-2}\right)\mA \xopt \le d^{1-\frac2p} + \|\mA \xopt\|_p^2 \|s\|_p^{p-2} \\ &\le d^{1-\frac2p} + \|s\|_p^{p-2} = d^{1-\frac2p} + \Phi(s)^{1-\frac2p} \\
&\le 2\Phi(s)^{1-\frac2p},
\end{align*}
where the final inequality follows from the fact that $\Phi(s) \ge \|w\|_1 \ge d.$
\end{proof}

\begin{lemma}[Progress step]
\label{lemma:progress}
Let $s^\new = s + \alpha|\mA z|$, as defined in line \ref{line:progresss} of \cref{algo:pnormbasic}.
Then we have that $\E(s^\new) \ge \E(s)$ and
\begin{align} \Phi(s^\new) - \Phi(s) \le 5p\alpha\Phi(s)^{1-\frac1p} + 3p^p\alpha^p\tau. \label{eq:progressbound} \end{align}
\end{lemma}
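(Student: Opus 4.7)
The plan is to prove the two claims separately. For $\E(s^\new) \ge \E(s)$, note that $s^\new = s + \alpha|\mA z| \ge s$ coordinate-wise, so $(\mS^\new)^{p-2} \succeq \mS^{p-2}$ since $p \ge 2$. Hence the PSD matrix $\mA^\top(d^{1-2/p}\mR + (\mS^\new)^{p-2})\mA$ dominates $\mA^\top(d^{1-2/p}\mR + \mS^{p-2})\mA$, and the constrained minimum from \cref{lemma:energysolve} is monotone in the quadratic form, giving $\E(s^\new) \ge \E(s)$.

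For the $\Phi$ bound, I would apply \cref{lemma:cor1} coordinate-wise with $k = p$, $a = s_i$, $b = \alpha|(\mA z)_i|$ and sum:
\[
\Phi(s^\new) - \Phi(s) \;\le\; 3p\alpha\sum_{i \in [n]} s_i^{p-1}|(\mA z)_i| + 3p^p\alpha^p \|\mA z\|_p^p.
\]
The last term is at most $3p^p\alpha^p\tau$ since the progress step only executes when the while-loop condition fails, i.e.\ $\|\mA z\|_p^p < \tau$. This matches the second term of the claimed bound.

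The main obstacle is the linear cross-term $\sum_i s_i^{p-1}|(\mA z)_i|$, which I would handle with Cauchy--Schwarz after splitting $s_i^{p-1} = s_i \cdot s_i^{(p-2)/2} \cdot s_i^{(p-2)/2}$:
\[
\sum_{i \in [n]} s_i^{p-1}|(\mA z)_i| \;=\; \sum_{i \in [n]} \bigl(s_i \cdot s_i^{(p-2)/2}\bigr)\bigl(s_i^{(p-2)/2}|(\mA z)_i|\bigr) \;\le\; \sqrt{\Phi(s)\cdot z^\top \mA^\top \mS^{p-2}\mA z}.
\]
Since $z^\top \mA^\top \mS^{p-2}\mA z \le z^\top \mA^\top(d^{1-2/p}\mR + \mS^{p-2})\mA z = \E(s)$, and $s \ge w^{1/p}$ is maintained throughout the algorithm (it is initialized this way on line~\ref{line:inits} and only increases), \cref{lemma:energyupper} gives $\E(s) \le 2\Phi(s)^{1-2/p}$. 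Therefore
\[
\sum_{i \in [n]} s_i^{p-1}|(\mA z)_i| \;\le\; \sqrt{2}\,\Phi(s)^{1 - 1/p},
\]
so $3p\alpha\cdot \sqrt{2}\Phi(s)^{1-1/p} \le 5p\alpha \Phi(s)^{1-1/p}$, which combined with the previous bound on $\|\mA z\|_p^p$ yields \eqref{eq:progressbound}. The main subtlety is recognizing that the right Cauchy--Schwarz split is the one producing both $\Phi(s)$ and the energy quadratic form, so that the energy upper bound from \cref{lemma:energyupper} (which critically uses the Lewis weight inequality) can kick in to eliminate the dependence on $\|\mA z\|_p$.
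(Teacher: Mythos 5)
Your proof is correct and follows essentially the same path as the paper's: the same monotonicity argument for $\E$, the same application of \cref{lemma:cor1} with $k=p$, the same Cauchy–Schwarz split of $s_i^{p-1}|(\mA z)_i|$ into $s_i^{p/2} \cdot s_i^{(p-2)/2}|(\mA z)_i|$, and the same invocation of $z$ being the energy minimizer together with \cref{lemma:energyupper}. The only difference is that you explicitly flag the invariant $s \ge w^{1/p}$ that is needed to invoke \cref{lemma:energyupper}, which the paper leaves implicit.
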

\begin{proof}
To bound $\E(s^\new)$, note that  $s^\new \geq s \geq \vec{0}$ entrywise. Therefore, $\E(s^\new) \ge \E(s)$.

To bound $\Phi(s^\new)$ we compute
\begin{align*}
\Phi(s^\new) - \Phi(s) &= \|s+\alpha|\mA z|\|_p^p - \|s\|_p^p \overset{(i)}{\le} 3p\alpha\sum_{i \in [n]} s_i^{p-1}|(\mA z)_i| + 3p^p\alpha^p\|\mA z\|_p^p \\
&\overset{(ii)}{\le} 3p\alpha\left(\sum_{i \in [n]} s_i^p\right)^{1/2}\left(\sum_{i \in [n]} s_i^{p-2}(\mA z)_i^2 \right)^{1/2} + 3p^p \alpha^p\tau \\
&\overset{(iii)}{\le} 3p\alpha\sqrt{\Phi(s)\E(s)} + 3p^p\alpha^p\tau \overset{(iv)}{\le} 5p\alpha\Phi(s)^{1-\frac1p} + 3p^p\alpha^p\tau.
\end{align*}
Here, $(i)$ follows from \cref{lemma:cor1} for $k = p$, $(ii)$ follows from the Cauchy-Schwarz inequality, $(iii)$ follows from the fact that $z$ is the minimizer for $\E(s)$, and $(iv)$ follows from \cref{lemma:energyupper} that $\Phi(s) \le 2\E(s)^{1-\frac2p}.$
\end{proof}

To analyze the boosting step we provide a general lemma about energy increase under boosting edges. Interestingly, this allows for resistances to increase by more than a constant factor, thereby going beyond the standard energy boosting lemmas in \cite{CKMST11,Mad13,Mad16}.

\begin{lemma}[Energy increase]
\label{lemma:energyincrease}
Let $w \in \R^n_{\geq 0}$ be $\ell_p$ Lewis weight overestimates for $\mA \in \R^{n \times d}$ and $\mD \se \mW^{1-\frac2p}$ be a diagonal matrix, and $v \in \R^n_{\ge0}$ satisfy $\|v\|_{\frac{p}{p-2}} \le 1$. For $\E \defeq \min_{g^{\top}x=-1} x^{\top}\mA^{\top}\mD\mA x$, $\E^\new \defeq \min_{g^{\top}x=-1} x^{\top}\mA^{\top}(\mD+\mV)\mA x$, and $y \defeq \argmin_{g^{\top}x=-1} x^{\top}\mA^{\top}\mD \mA x$ the following holds
\[ \E^\new - \E \ge \frac12 \sum_{i \in [n]} v_i(\mA y)_i^2. \]
\end{lemma}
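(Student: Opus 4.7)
The plan is to apply the Sherman--Morrison--Woodbury identity after first establishing the key PSD ordering $\mA^\top \mV \mA \preceq \mM$, where $\mM \defeq \mA^\top \mD \mA$ and $\mM^\new \defeq \mA^\top (\mD+\mV)\mA$. Writing $\E = 1/(g^\top \mM^{-1} g)$ and $\E^\new = 1/(g^\top (\mM^\new)^{-1} g)$ via \cref{lemma:energysolve}, and using $\mM y = -\E g$ (so that $\mA\mM^{-1}g = -\mA y/\E$), the proof reduces to an algebraic computation once that spectral step is in hand.

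The crucial intermediate claim is $\mA^\top \mV \mA \preceq \mM$. For any $x \in \R^d$, \holder\ with conjugate exponents $p/(p-2)$ and $p/2$ combined with $\|v\|_{p/(p-2)} \le 1$ gives
\[
x^\top \mA^\top \mV \mA x = \sum_i v_i (\mA x)_i^2 \le \|v\|_{p/(p-2)} \|\mA x\|_p^2 \le \|\mA x\|_p^2,
\]
and then \cref{lemma:weightkey} together with $\mD \succeq \mW^{1-2/p}$ yields $\|\mA x\|_p^2 \le x^\top \mA^\top \mW^{1-2/p} \mA x \le x^\top \mM x$, producing the desired PSD ordering.

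Next I would apply the Woodbury identity with $\mU \defeq \mA^\top \mV^{1/2}$ (so $\mU\mU^\top = \mA^\top\mV\mA$) to obtain
\[
(\mM^\new)^{-1} = \mM^{-1} - \mM^{-1} \mU (\mI + \mQ)^{-1} \mU^\top \mM^{-1},
\qquad \mQ \defeq \mV^{1/2}\mA\mM^{-1}\mA^\top\mV^{1/2}.
\]
Setting $a \defeq \mV^{1/2}\mA y$ and conjugating by $g$, the Woodbury expansion becomes $1/\E^\new = 1/\E - \alpha/\E^2$ for $\alpha \defeq a^\top(\mI+\mQ)^{-1} a$, and rearranging gives $\E^\new - \E = \alpha/(1 - \alpha/\E)$. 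The claim $\mA^\top\mV\mA \preceq \mM$ implies (since the nonzero eigenvalues of $\mQ$ coincide with those of $\mM^{-1}\mA^\top\mV\mA$) that $\mQ \preceq \mI$, hence $(\mI+\mQ)^{-1} \succeq \tfrac12\mI$ and $\alpha \ge \tfrac12\|a\|_2^2$; it also gives $\|a\|_2^2 = y^\top\mA^\top\mV\mA y \le y^\top\mM y = \E$, so $\alpha \le \E$. Combining, $\E^\new - \E \ge \alpha \ge \tfrac12\|a\|_2^2 = \tfrac12\sum_i v_i(\mA y)_i^2$.

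The principal conceptual hurdle is recognizing that the $\ell_{p/(p-2)}$ normalization on $v$ and the Lewis-weight lower bound on $\mD$ are precisely the conditions that force $\mA^\top \mV \mA \preceq \mM$, which in turn keeps $(\mI+\mQ)^{-1} \succeq \tfrac{1}{2}\mI$ in the Woodbury expansion. This is what allows the resistances to be boosted by potentially large factors while still recovering a constant fraction of the naive quadratic gain $\sum_i v_i (\mA y)_i^2$, and is what separates this argument from standard energy-boosting lemmas that only tolerate boosting by a constant factor.
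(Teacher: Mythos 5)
Your proposal is correct and follows essentially the same route as the paper: both set up the energy via \cref{lemma:energysolve}, apply the Woodbury identity with the same low-rank factorization, establish the key ordering $\mA^\top\mV\mA \preceq \mA^\top\mD\mA$ by \holder{} combined with \cref{lemma:weightkey} and $\mD \succeq \mW^{1-2/p}$, and use this to bound the inner projector by $\mI$. The only cosmetic difference is the final rearrangement: you write $\E^\new - \E = \alpha/(1-\alpha/\E)$ and argue $\alpha < \E$, while the paper rearranges $\E^{-1} - (\E^\new)^{-1} \ge \alpha/\E^2$ together with $\E^\new \ge \E$; these are algebraically equivalent.
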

\begin{proof}
By \cref{lemma:energysolve}
\[
\E = (g^{\top}(\mA^{\top}\mD \mA)^{-1}g)^{-1}
\text{ and } 
y = -\frac{1}{g^{\top}(\mA^{\top}\mD\mA )^{-1}g}(\mA^{\top}\mD \mA)^{-1}g.
\] 
By the Woodbury matrix identity we have that
\begin{align}
\E^{-1}-(\E^\new)^{-1} &= g^{\top}(\mA^{\top}\mD\mA )^{-1}\mA^{\top}\mV^\frac12(\mI+\mV^\frac12\mA(\mA^{\top}\mD\mA )^{-1}\mA^{\top}\mV^\frac12)^{-1}\mV^\frac12\mA(\mA^{\top}\mD\mA )^{-1}g \nonumber \\
&= \frac{1}{\E^2} y^\top \mA^\top \mV^\frac12(\mI+\mV^\frac12\mA(\mA^{\top}\mD\mA )^{-1}\mA^{\top}\mV^\frac12)^{-1}\mV^\frac12\mA y \label{eq:woodbury1}
\end{align}
We next claim that $\mA^\top\mV\mA \pe \mA^\top\mD\mA$. To show this, note that for any $x \in \R^n$ we have that
\begin{align*}
x^\top\mA^\top\mV\mA x = \sum_{i \in [n]} v_i(\mA x)_i^2 \overset{(i)}{\le} \|v\|_{\frac{p}{p-2}}\|\mA x\|_p^2 \overset{(ii)}{\le} x^\top\mA^\top\mW^{1-\frac2p}\mA x \overset{(iii)}{\le} x^\top \mA^\top \mD \mA,
\end{align*}
where $(i)$ follows from \holder's inequality, and $(ii)$ from the condition $\|v\|_{\frac{p}{p-2}} \le 1$ and \cref{lemma:weightkey}, and $(iii)$ from $\mW^{1-\frac2p} \pe \mD$. Note that this additionally implies that
\[ \mV^\frac12\mA(\mA^{\top}\mD\mA )^{-1}\mA^{\top}\mV^\frac12 \pe \mV^\frac12\mA(\mA^{\top}\mV\mA )^{-1}\mA^{\top}\mV^\frac12 \pe \mI \]
where the last step follows because the matrix is an orthogonal projection matrix.

Applying these bounds to \eqref{eq:woodbury1} yields that
\[ \E^{-1}-(\E^\new)^{-1} \ge \frac{1}{2\E^2} \sum_{i \in [n]} v_i(\mA y)_i^2. \]
Using that $\E^\new \ge \E$ and rearranging yields that
\[ \E^\new - \E \ge \frac{\E^\new\E}{2\E^2} \sum_{i \in [n]} v_i(\mA y)_i^2 \ge \frac12 \sum_{i \in [n]} v_i(\mA y)_i^2. \]
\end{proof}

\begin{lemma}[Boosting step]
\label{lemma:boosting}
Let $s$ be at the start of a boosting step, and $s^\new$ be defined as after the operations of line \ref{line:boost} in \Oracle~(\cref{algo:pnormbasic}). If $2^p\kappa^{-(p-2)}\Phi(s)^{1-\frac2p} \le \tau/4$ then $\Phi(s^\new) - \Phi(s) \le 20\kappa^2(\E(s^\new) - \E(s))$ and $\E(s^\new) - \E(s) \ge \tau^{2/p}/16.$
\end{lemma}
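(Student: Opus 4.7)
The main tool is Lemma~\ref{lemma:energyincrease} (Energy increase), which exactly matches the structure of the boosting update: the operation $s_i^{p-2} \mapsto s_i^{p-2} + \delta_i$ for $i \in S$, with $\delta_i = \tau^{2/p}|(\mA z)_i|^{p-2}/(4\|\mA z\|_p^p)$, corresponds to the diagonal perturbation $\mD \mapsto \mD + \mV$ where $\mD = d^{1-2/p}\mR + \mS^{p-2}$ and $v_i = \delta_i \mathbf{1}_{i \in S}$. To apply the lemma I would first verify its two hypotheses: $\mD \succeq \mW^{1-2/p}$ follows from the invariant $s \ge w^{1/p}$ (set in line~\ref{line:inits} and preserved by both step types, which only increase coordinates of $s$); and $\|v\|_{p/(p-2)} \le 1$ reduces to a one-line calculation using the boosting trigger $\|\mA z\|_p^p \ge \tau$, yielding in fact $\|v\|_{p/(p-2)} \le 1/4$.

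From the lemma I would then obtain $\E(s^\new) - \E(s) \ge \tfrac12\sum_{i\in S} \delta_i (\mA z)_i^2 = \tfrac{\tau^{2/p}}{8\|\mA z\|_p^p}\sum_{i \in S}|(\mA z)_i|^p$. To upgrade this to the claimed $\tau^{2/p}/16$, the plan is to show that $S$ contains at least half of the $\ell_p$ mass of $\mA z$. Rows $i \notin S$ satisfy $|(\mA z)_i| < 2^{p/(p-2)} s_i/\kappa$, which gives $|(\mA z)_i|^p \le 2^p\kappa^{-(p-2)} s_i^{p-2}(\mA z)_i^2$; summing and using that $z$ minimizes $\E(s)$ bounds the tail by $2^p\kappa^{-(p-2)}\E(s)$. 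Combining Lemma~\ref{lemma:energyupper} with the lemma's hypothesis $2^p\kappa^{-(p-2)}\Phi(s)^{1-2/p} \le \tau/4$ yields a tail at most $\tau/2$, and since the boosting trigger ensures $\|\mA z\|_p^p \ge \tau$, at least half the mass indeed lies in $S$.

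For the potential bound $\Phi^\new - \Phi \le 20\kappa^2(\E^\new - \E)$, I would write each $(s_i^\new)^p - s_i^p = (s_i^{p-2}+\delta_i)^{p/(p-2)} - (s_i^{p-2})^{p/(p-2)}$ and apply Lemma~\ref{lemma:cor2} with $k=p/(p-2)$, $a=s_i^{p-2}$, $b=\delta_i$, noting that $a^{k-1} = s_i^2$, to bound this by $4^{p/(p-2)}(s_i^2 \delta_i + \delta_i^{p/(p-2)})$. The key observation is that on $S$ the inequality $s_i \le 2^{-p/(p-2)}\kappa|(\mA z)_i|$ gives $s_i^2 \le 2^{-2p/(p-2)}\kappa^2 (\mA z)_i^2$, and this factor exactly cancels the prefactor $4^{p/(p-2)} = 2^{2p/(p-2)}$, yielding a contribution of order $\kappa^2 \tau^{2/p}$ after cancelling $\|\mA z\|_p^p$; the second term contributes only $O(1)$ using the bound $\|v\|_{p/(p-2)} \le 1/4$ computed earlier. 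Dividing by $\E^\new - \E \ge \tau^{2/p}/16$ and using $\kappa, \tau \ge 1$ to absorb constants then yields the $20\kappa^2$ ratio.

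The main obstacle I anticipate is pinning down the right definition of the perturbation $\delta_i$: it must simultaneously satisfy Hölder's condition $\|v\|_{p/(p-2)} \le 1$ in Lemma~\ref{lemma:energyincrease} with enough slack, be large enough to guarantee the $\tau^{2/p}/16$ energy gain, and interact cleanly with Lemma~\ref{lemma:cor2} so that the explosive $4^{p/(p-2)}$ prefactor is tamed. The observation that $4^{p/(p-2)}$ exactly cancels the $2^{-2p/(p-2)}$ arising from the definition of $S$ is what makes the potential-to-energy ratio dimension-free in $p$, and once this cancellation is spotted the rest of the argument is constant bookkeeping.
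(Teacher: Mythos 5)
Your proposal is correct and follows essentially the same route as the paper: the half-mass argument for $S$, the application of \cref{lemma:energyincrease} with the same $v$ (including the check $\|v\|_{p/(p-2)} \le 1/4$ via the boosting trigger), and the application of \cref{lemma:cor2} with $k = p/(p-2)$ exploiting the exact cancellation of $4^{p/(p-2)}$ against the $2^{-2p/(p-2)}$ from the definition of $S$ — all match. The only cosmetic difference is in the final accounting for the $\Phi$-to-$\E$ ratio: the paper bounds $\kappa^2\sum_{i\in S} v_i(\mA z)_i^2 \le 2\kappa^2(\E(s^\new)-\E(s))$ directly from \cref{lemma:energyincrease}, whereas you bound it by $\kappa^2\tau^{2/p}/4$ and then divide through by $\E(s^\new)-\E(s) \ge \tau^{2/p}/16$; both give $20\kappa^2$.
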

\begin{proof}
For $z$ as in line \ref{line:setz} of \cref{algo:oracle}
\begin{align} \sum_{i \in S} |(\mA z)_i|^p &= \|\mA z\|_p^p - \sum_{i \notin S} |(\mA z)_i|^p \overset{(i)}{\ge} \|\mA z\|_p^p - \left(2^{-\frac{p}{p-2}}\kappa\right)^{-(p-2)}\sum_{i \notin S} s_i^{p-2}(\mA z)_i^2 \\
&\overset{(ii)}{\ge} \|\mA z\|_p^p - 2^{p+1}\kappa^{-(p-2)}\Phi(s)^{1-\frac2p} \overset{(iii)}{\ge} \|\mA z\|_p^p/2, \label{eq:boostingpnormlower} \end{align}
where $(i)$ follows by the definition of $S$ in line \ref{line:chooseS} in \Oracle~(\cref{algo:pnormbasic}), $(ii)$ follows from \cref{lemma:energyupper}, and $(iii)$ follows by the condition on $\kappa$ in the hypothesis and $\tau \le \|\mA z\|_p^p$ by the condition of line \ref{line:iftau} in \Oracle~(\cref{algo:pnormbasic}).

Now we can lower bound $\E(s^\new) - \E(s)$ using \cref{lemma:energyincrease}. Set $\mD = d^{1-\frac2p}\mR + \mS^{p-2}$ and $v_i = 0$ for $i \notin S$ and $v_i = \frac{\tau^{2/p}|(\mA z)_i|^{p-2}}{4\|\mA z\|_p^p}$ for $i \in S$. Note that $\|v\|_{\frac{p}{p-2}} \le \tau^{2/p}/\|\mA z\|_p^2 \le 1$ by the condition $\tau \le \|\mA z\|_p^p$ of line \ref{line:iftau} in \Oracle~(\cref{algo:pnormbasic}). Thus \cref{lemma:energyincrease} gives
\begin{align*}
\E(s^\new) - \E(s) \ge \frac12 \sum_{i\in[n]} v_i(\mA z)_i^2 = \frac12 \cdot \frac{\tau^{2/p}}{4\|\mA z\|_p^p}\sum_{i \in S} |(\mA z)_i|^p \ge \tau^{2/p}/16
\end{align*}
where we have used \eqref{eq:boostingpnormlower} above.

To bound $\Phi(s^\new) - \Phi(s)$, we use \cref{lemma:cor2} for $k = p/(p-2)$ and $a = s_i^{p-2}$ and $b = v_i$ to get
\begin{align*}
\Phi(s^\new) - \Phi(s) &= \sum_{i\in[n]} \left((s_i^{p-2} + v_i)^\frac{p}{p-2} - s_i^p\right) \le 4^\frac{p}{p-2}\sum_{i \in S} \left(s_i^2v_i + v_i^\frac{p}{p-2}\right) \\
&\overset{(i)}{\le} 4^\frac{p}{p-2}\sum_{i \in S}\left(4^{-\frac{p}{p-2}}\kappa^2 v_i(\mA z)_i^2 + v_i^\frac{p}{p-2}\right) \overset{(ii)}{\le} 2\kappa^2(\E(s^\new) - \E(s)) + 1.
\end{align*}
Here, $(i)$ uses $s_i \le 2^{-\frac{p}{p-2}}\kappa|(\mA z)_i|$ for all $i \in S$ by line \ref{line:chooseS} in \Oracle~(\cref{algo:pnormbasic}) and $(ii)$ uses \cref{lemma:energyincrease} and $\|v\|_{\frac{p}{p-2}} \le 1/4$. To conclude, note that $\E(s^\new) - \E(s) \ge \tau^{2/p}/16 \ge 1/16$, as $\tau \ge 1$. Also, $\kappa \ge 1$, so $2\kappa^2(\E(s^\new) - \E(s)) + 1 \le 20\kappa^2(\E(s^\new) - \E(s))$. This completes the proof.
\end{proof}

Now we can combine the bounds on $\Phi(s)$ and $\E(s)$ in \cref{lemma:progress,lemma:boosting} to prove \cref{lemma:largemain}.
\begin{proof}[Proof of \cref{lemma:largemain}]
We set $\tau_p = 40^p$. Choose $\alpha_p = 1/(1000p)$ so that $p^p\alpha^p\tau \le p\alpha d^{1-\frac1p}$. Then $p^p\alpha^p\tau \le p\alpha \Phi(s)^{1-\frac1p}$ as $\Phi(s) \ge d$ for all $s \ge w^{1/p}$ for a Lewis weight overestimate $w$. Let $\kappa_p = p.$

Let $s^\final$ be the final value of $s$ in a call to \Oracle~(\cref{algo:pnormbasic}). We show by induction that $\Phi(s^\final) \le (20\kappa)^p$ and that $2^p\kappa^{-(p-2)}\Phi(s)^{1-\frac2p} \le \tau/4$ during a successful execution of \cref{algo:oracle} always (so the condition of \cref{lemma:boosting} is satisfied).
We start by bounding $\Phi(s^\final)$. As $\sum_{i \in [n]} w_i \le 2d$ by the definition of $\ell_p$ Lewis weight overestimates (\cref{def:overestimate}), initially $\Phi(s) \le 2d$. We calculate that
\begin{align*}
\Phi(s^\final) &\overset{(i)}{\le} 2d + \alpha^{-1}d^{1/p}\left(5p\alpha\Phi(s^\final)^{1-\frac1p} + 3p^p\alpha^p\tau\right) + 20\kappa^2\E(s^\final) \\
&\overset{(ii)}{\le} 2d + \alpha^{-1}d^{1/p} \cdot 8p\alpha\Phi(s^\final)^{1-\frac1p} + 40\kappa^2\Phi(s^\final)^{1-\frac2p} \\
&= 2d + 8p\Phi(s^\final)^{1-\frac1p}d^{1/p} + 40\kappa^2\Phi(s^\final)^{1-\frac2p}.
\end{align*}
where $(i)$ follows from \cref{lemma:progress,lemma:boosting}, and $(ii)$ follows from \cref{lemma:energyupper} and the bound $p^p\alpha^p\tau \le p\alpha \Phi(s)^{1-\frac1p}$ from our choice of $\tau_p$ and $\alpha_p$.
If $\Phi(s^\final) > (20\kappa)^p$ then we get that
\[ 2d\Phi(s^\final)^{-1} + 8pd^{1/p}\Phi(s^\final)^{-\frac1p} + 40\kappa^2\Phi(s^\final)^{-\frac2p} < \frac{1}{20} + \frac{8pd^{1/p}}{20\kappa} + \frac{40\kappa^2}{400\kappa^2} < 1,\]
contradicting the above equation. Hence $\Phi(s^\final) \le (20\kappa)^p$.

Now we check that $2^p\kappa^{-(p-2)}\Phi(s^\final)^{1-\frac2p} \le \tau/4$ to complete the induction. From the choice $\tau_p = 40^p$ and $\tau \ge \tau_p$, note that
\[ 2^p\kappa^{-(p-2)}\Phi(s^\final)^{1-\frac2p} \le 2^p\kappa^{-(p-2)}(20\kappa)^{p-2} \le 40^p / 4 \le \tau/4. \]

We now show that the returned vector $x = (\alpha T)^{-1}y$ (for $T \defeq \lfloor \alpha^{-1}d^{1/p}\rfloor$) satisfies $\|\mA x\|_p \le O(p)$ and $x^\top\mA^\top\mR\mA x \le O(p)^p$. Note that $(\alpha T)^{-1} \le 2d^{-1/p}.$ For the first of these note that 
\[ \|\mA x\|_p \le 2d^{-1/p}\|\mA s^\final\|_p = 2d^{-1/p}\Phi(s^\final)^{1/p} \le 40\kappa d^{-1/p} \le 40p \] by the choice of $\kappa_p$.
For the latter, note that \[ z^\top\mA^\top\mR\mA z \le d^{\frac2p-1}\E(s) \le 2d^{\frac2p-1}\Phi(s)^{1-\frac2p} = O(p)^p \] at each step -- now apply the triangle inequality on the norm $\|\mR^{1/2}\mA z\|_2.$

Finally we bound the number of progress and boosting steps. The number of progress steps is bounded by $\alpha^{-1}d^{1/p} = O\left(pd^\frac{p-2}{3p-2}\right)$ by the choice of $\alpha$. To bound the number of boosting steps, note that $\E(s)$ increases by $\tau^{2/p}/16$ per boosting step by \cref{lemma:boosting}, and is increasing every progress step by \cref{lemma:progress}. As $\E(s^\final) \le 2\Phi(s)^{1-\frac2p} \le 2(20\kappa)^{p-2}$ at the end we get that the number of boosting steps is bounded by
\[ \frac{2(20\kappa)^{p-2}}{\tau^{2/p}/16} \le O(p)^p \cdot d^{1-\frac2p} \cdot d^{\frac{-2(p-2)(p-1)}{p(3p-2)}} = O(p)^p d^\frac{p-2}{3p-2}. \]
To compute the $\ell_p$ Lewis weights overestimates in line \ref{line:approxlewislarge} in \Oracle~(\cref{algo:pnormbasic}) there are an additional $\O(1)$ solves to $\mA^\top\mD\mA$ by \cref{lemma:approxplarge}. Together, this gives the total iteration bound.
\end{proof}
\section{Monteiro-Svaiter Acceleration Algorithm for Large $p$}
\newcommand{\xhat}{\hat{x}}
\newcommand{\prox}{\mathsf{Prox}}
\newcommand{\cprox}{\mathsf{CProx}}
\newcommand{\pgd}{\mathsf{ProxOracle}}
\newcommand{\mh}{\mathbf{H}}
\newcommand{\magic}{\mathsf{Magic-}\lambda}
\label{sec:largep_ms}

In \cref{sec:largep_ckmst}, we gave an algorithm for $\ell_p$ regression for $p \geq 2$  based on the iterative refinement framework of \cite{AKPS19}. Here we give an alternate scheme with an improved dependence on $p$ based on \emph{highly-smooth} optimization. More specifically, we leverage an optimization framework from \cite{BJLLS19}, which reduces the task of minimizing a convex function $f$ to approximately solving proximal subproblems of the form 
\[
\prox(y) = \min_{x} f(x) + C_p \norm{x-y}_\mm^p
\]

for arbitrary positive semidefinite matrix $\mm$. Our result is a refinement of the $O(p^{14/3} n^{1/3})$ iteration complexity achieved in \cite{CJJJLST20}. Our main technical ingredient is an improved Hessian stability bound (\Cref{lemma:hessian_stable}) which works for all $p \geq 2$ and allows us to take steps bounded in the norm induced by a matrix $\mm \preceq \mA^\top \mA$. We leverage this to give an efficient algorithm for proximal subproblems, and combine with the acceleration framework of \cite{BJLLS19} to obtain our result.

\subsection{Hessian stability}
In this section, we prove our Hessian stability bound \Cref{lemma:hessian_stable}. We begin with a straightforward scalar inequality which we use in our proof. 

\begin{lemma}
\label{lemma:scalar_ineq}
Let $\alpha, \beta \geq 1$ satisfy $\frac{1}{\alpha} + \frac{1}{\beta} = 1$. For any $n \geq 0$ and any $x,y$, 
\[
\left| x + y \right|^n \leq |\alpha x|^n + |\beta y|^n.
\]
Additionally, $\left| x+y \right|^{p-2} \leq e \left|x\right|^{p-2} + p^{p-2} \left| y \right|^{p-2}$ for $p \geq 2$.
\end{lemma}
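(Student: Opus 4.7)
The plan is to prove both inequalities by routine convexity / subadditivity arguments, with the key being to choose the right conjugate pair $(\alpha,\beta)$ for the second bound.

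For the first inequality, I would split into two cases depending on $n$. When $n \ge 1$, the function $t\mapsto |t|^n$ is convex, and the hypothesis $1/\alpha + 1/\beta = 1$ with $\alpha,\beta \geq 1$ lets us write $x+y = \tfrac{1}{\alpha}(\alpha x) + \tfrac{1}{\beta}(\beta y)$ as a legitimate convex combination. Applying convexity then dropping the weights (since $1/\alpha, 1/\beta \in [0,1]$) yields
\[
|x+y|^n \;\le\; \tfrac{1}{\alpha}|\alpha x|^n + \tfrac{1}{\beta}|\beta y|^n \;\le\; |\alpha x|^n + |\beta y|^n.
\]
When $0 \le n \le 1$, convexity fails, but subadditivity of $t\mapsto t^n$ on $[0,\infty)$ combined with the triangle inequality gives $|x+y|^n \le (|x|+|y|)^n \le |x|^n + |y|^n$, and then $\alpha,\beta \ge 1$ implies $|x|^n + |y|^n \le |\alpha x|^n + |\beta y|^n$.

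For the second inequality, I would simply instantiate the first with $n = p-2 \ge 0$ and the conjugate pair $\beta = p$, $\alpha = p/(p-1)$, which satisfies $1/\alpha + 1/\beta = (p-1)/p + 1/p = 1$. The coefficient on $|y|^{p-2}$ is then $\beta^{p-2} = p^{p-2}$ as required. The coefficient on $|x|^{p-2}$ becomes $\alpha^{p-2} = (1 + 1/(p-1))^{p-2}$, which is bounded by $(1 + 1/(p-1))^{p-1} \le e$ via the standard inequality $(1+1/m)^m \le e$ (for $p=2$ the exponent is $0$ and the bound is trivial).

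I do not expect a real obstacle here; the only subtlety is realizing that the choice $\beta = p$ (rather than, say, $\beta = 2$) is what produces the coefficient $p^{p-2}$ in the advertised form, and that the conjugate $\alpha = p/(p-1)$ automatically delivers the constant $e$ on the other term. The case split at $n=1$ for the first inequality is needed because $|t|^n$ is only convex for $n \ge 1$.
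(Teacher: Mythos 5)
Your proof is correct, and both halves differ somewhat from the paper's. For the first inequality, the paper avoids any case split: it writes $x+y = (\alpha x)/\alpha + (\beta y)/\beta$ as a convex combination and observes that the magnitude of a convex combination of two numbers is at most $\max\{|\alpha x|, |\beta y|\}$; raising to the $n$-th power (monotone for $n\ge 0$) and bounding the max by the sum finishes in one line, whereas you split at $n=1$ and use convexity of $|t|^n$ in one case and subadditivity of $t^n$ in the other. Both are valid; the paper's is slightly slicker, yours is perhaps the more routine route. For the second inequality, you choose the conjugate pair $\beta = p$, $\alpha = p/(p-1)$, which gives the coefficient $p^{p-2}$ exactly and bounds $\alpha^{p-2} = (1+1/(p-1))^{p-2} \le e$ directly; the paper instead takes $\beta = p-1$, $\alpha = (p-1)/(p-2)$, gets $(1+1/(p-2))^{p-2} \le e$ and then uses $(p-1)^{p-2} \le p^{p-2}$. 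Your choice has the minor advantage of remaining well-defined at $p=2$ (the paper's $\alpha$ degenerates there, though the $p=2$ case of the inequality is trivial anyway). Overall the content is the same; the differences are in which conjugate pair to pick and whether to avoid the case split via the max trick.
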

\begin{proof}
Observe
\[
\left| x+ y \right|^n = \left| \frac{\alpha x}{\alpha} + \frac{\beta y}{\beta} \right|^n \leq \left| \max \left\{ |\alpha x|, |\beta y| \right\} \right|^n = \max \left\{ |\alpha x|^n, |\beta y|^n \right\} \leq |\alpha x|^n + |\beta y|^n.
\]
Applying this inequality with $\alpha = \frac{p-1}{p-2}$, $\beta = p-1$, and $n = p-2$ yields
\[
\left| x+y \right|^{p-2} \leq \left(1 + \frac{1}{p-2}\right)^{p-2} \left|x\right|^{p-2} + \left| (p-1) y \right|^{p-2} \leq e  \left|x\right|^{p-2} + p^{p-2} \left| y \right|^{p-2}
\]
where the last inequality follows from $\left(1+\frac{1}{x}\right)^x < e$ for any $x \geq 0$ and $p-1 \leq p$. 
\end{proof}

With this scalar inequality, we define a matrix $\mm$ we will repeatedly appeal to in this section.

\begin{definition}
\label{def:magic-m}
Let $\ma \in \R^{n \times d}$ be a matrix, and let $w \in \R^n$ be a vector of overestimates of the $\ell_p$-Lewis weights of $\ma$ (\cref{def:overestimate}). We set $\mm \defeq \ma^\top \mw^{1-2/p} \ma.$
\end{definition}

With this, we prove our main Hessian stability fact \cref{lemma:hessian_stable}:

\begin{lemma}
\label{lemma:hessian_stable}
Let $p \geq 2$, and define $f(x) = \norm{\ma x- b}_p^p$. Let $\mM = \mA^\top \mW^{1-2/p}\mA$ (\Cref{def:magic-m}). For any $y \in \R^d$, define $f_{y}(x) = f(x) + C_p \norm{x-y}_\mm^p$ and $h_y(x) = \norm{x-y}_{\nabla^2 f(y)}^2 + C_p \norm{x-y}_{\mm}^p.$ Then if $C_p = e \cdot p^p$, for any $x$
\[
\frac{1}{e} \nabla^2 h_y(x) \preceq \nabla^2 f_y(x) \preceq  e \nabla^2 h_y(x). 
\]
\end{lemma}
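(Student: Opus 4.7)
The plan is to prove the stability bound by computing the three Hessians explicitly, then using \Cref{lemma:scalar_ineq} together with the Lewis weight property of \Cref{lemma:weightkey} to relate $\nabla^2 f(x)$ and $\nabla^2 f(y)$. Writing $g(x) := \|x-y\|_\mM^p$, I would first note that $\nabla^2 f(x) = p(p-1)\mA^\top \mR(x)\mA$ where $\mR(x)$ is diagonal with entries $|(\mA x-b)_i|^{p-2}$ (and similarly for $\nabla^2 f(y)$), and that a direct chain-rule computation yields $\nabla^2 g(x) = p\|x-y\|_\mM^{p-2}\mM + p(p-2)\|x-y\|_\mM^{p-4}\mM(x-y)(x-y)^\top\mM$, which by Cauchy--Schwarz on the rank-one correction satisfies $p\|x-y\|_\mM^{p-2}\mM \preceq \nabla^2 g(x) \preceq p(p-1)\|x-y\|_\mM^{p-2}\mM$. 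Hence $\nabla^2 f_y(x) = \nabla^2 f(x) + C_p \nabla^2 g(x)$ and $\nabla^2 h_y(x) = 2\nabla^2 f(y) + C_p \nabla^2 g(x)$.

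The key stability step is to relate $\mR(x)$ and $\mR(y)$. Setting $v_i := (\mA y - b)_i$ and $u_i := (\mA(x-y))_i$, so $(\mA x - b)_i = v_i + u_i$, the second inequality in \Cref{lemma:scalar_ineq} gives $|v_i + u_i|^{p-2} \le e|v_i|^{p-2} + p^{p-2}|u_i|^{p-2}$ coordinatewise, which translates into the PSD inequality $\mR(x) \preceq e\mR(y) + p^{p-2}\mU$ for the diagonal matrix $\mU$ with entries $|u_i|^{p-2}$, and hence $\nabla^2 f(x) \preceq e\nabla^2 f(y) + p(p-1)p^{p-2}\mA^\top \mU \mA$. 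To bound $\mA^\top \mU \mA$, for any $z$ Hölder's inequality with exponents $p/(p-2)$ and $p/2$ yields $z^\top \mA^\top \mU \mA z = \sum_i |u_i|^{p-2}(\mA z)_i^2 \le \|\mA(x-y)\|_p^{p-2}\|\mA z\|_p^2$, and \Cref{lemma:weightkey} bounds each $\ell_p$-norm by the corresponding $\mM$-norm. Combining produces $\mA^\top \mU \mA \preceq \|x-y\|_\mM^{p-2}\mM \preceq \frac{1}{p}\nabla^2 g(x)$, and thus the stability bound $\nabla^2 f(x) \preceq e\nabla^2 f(y) + p^{p-1}\nabla^2 g(x)$; interchanging the roles of $x$ and $y$ gives the symmetric inequality.

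To conclude, I would add $C_p \nabla^2 g(x)$ to both sides and use that $p^{p-1} \le (e-1)ep^p = (e-1)C_p$ for the choice $C_p = ep^p$. For the upper bound, $\nabla^2 f_y(x) = \nabla^2 f(x) + C_p \nabla^2 g(x) \preceq 2e\nabla^2 f(y) + eC_p \nabla^2 g(x) = e\nabla^2 h_y(x)$, where the factor $e \le 2e$ on $\nabla^2 f(y)$ is absorbed using the coefficient of $2$ in $\nabla^2 h_y$; the lower bound follows analogously from the symmetric stability inequality. The main obstacle is calibrating the constants so that the factor-$e$ loss from the scalar inequality aligns with the factor of $e$ in the statement in both directions---in particular, the coefficient of $2$ arising from differentiating $\|x-y\|_{\nabla^2 f(y)}^2$ must be leveraged carefully, and the lower bound may require applying \Cref{lemma:scalar_ineq} with a sharper choice of conjugate exponents $\alpha, \beta$ (satisfying $\alpha^{p-2} \le e/2$ in exchange for a larger $\beta^{p-2}$) so that the error absorbs cleanly into $(e-1)C_p \nabla^2 g(x)$; the choice $C_p = ep^p$ is calibrated precisely for this.
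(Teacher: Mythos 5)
Your derivation matches the paper's almost line for line: the explicit Hessian formulas, the coordinatewise application of \Cref{lemma:scalar_ineq} giving $\mR(x) \preceq e\mR(y) + p^{p-2}\mU$, the H\"older-plus-\Cref{lemma:weightkey} step bounding $\mA^\top\mU\mA \preceq \|x-y\|_\mM^{p-2}\mM$, the lower bound $p\|x-y\|_\mM^{p-2}\mM \preceq \nabla^2 g(x)$ coming from the rank-one correction, and the resulting stability inequality $\nabla^2 f(x) \preceq e\nabla^2 f(y) + p^{p-1}\nabla^2 g(x)$ are all exactly the paper's steps (this last bound is its displayed inequality \eqref{eqn:ineq}). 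Your upper-bound direction is correct, including the observation that the factor of $2$ coming from differentiating $\|x-y\|_{\nabla^2 f(y)}^2$ gives you room.

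The difficulty you flag in the lower bound is genuine, but your proposed repair does not close it. From the symmetric inequality you obtain $\nabla^2 f_y(x) \succeq \tfrac{1}{e}\nabla^2 f(y) + (C_p - \tfrac{p^{p-1}}{e})\nabla^2 g(x)$, whereas $\tfrac{1}{e}\nabla^2 h_y(x) = \tfrac{2}{e}\nabla^2 f(y) + \tfrac{C_p}{e}\nabla^2 g(x)$. The extra $\tfrac{1}{e}\nabla^2 f(y)$ cannot be paid for out of the $\nabla^2 g(x)$ term, because $\nabla^2 g(x)$ vanishes at $x = y$ for $p > 2$ while $\nabla^2 f(y)$ need not. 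Sharpening \Cref{lemma:scalar_ineq} to force $\alpha^{p-2} \le e/2$ does not rescue this either: you would need $\alpha = (e/2)^{1/(p-2)} \approx 1 + \tfrac{1-\ln 2}{p-2}$, which makes $\beta^{p-2} = (\alpha/(\alpha-1))^{p-2}$ grow like $(c\,p)^{p-2}$ for a constant $c$ exceeding $e$, and for large $p$ this dominates $(e-1)C_p = (e-1)e\,p^p$, so the error term no longer absorbs. Worth noting: the paper's own proof makes the same factor-of-two slip. Its displayed chain for the lower bound requires $\nabla^2 h_y(x) - \nabla^2 g_y(x) = \nabla^2 f(y)$, but this quantity equals $2\nabla^2 f(y)$, so the argument as written only yields $\tfrac{1}{2e}\nabla^2 h_y(x) \preceq \nabla^2 f_y(x)$. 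The clean fix is either to accept the constant $\tfrac{1}{2e}$ (constant-factor relative smoothness is all that \Cref{lemma:prox} needs) or to place a $\tfrac12$ in front of the quadratic term in $h_y$ so that $\nabla^2 h_y(x) = \nabla^2 f(y) + C_p\nabla^2 g(x)$; with either change both directions of your argument close with the stated constants.
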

\begin{proof}
We first note
\[
\hess f(x)= p (p-1) \ma^\top \diag{\left| \ma x - b\right|}^{p-2} \ma. 
\]
For any vector $z$, we use \Cref{lemma:scalar_ineq} to get
\begin{align*}
z^\top \hess f(x) z &= p (p-1) \sum_{i\in[n]}  |\ma x - b|_i^{p-2} \left(\ma z\right)_i^2 \\
&= p (p-1) \sum_{i\in[n]} |\ma y - b + \ma(x-y)|_i^{p-2} \left(\ma z\right)_i^2 \\
&\leq \sum_{i\in[n]}  \left( ep(p-1) |\ma y - b|_i^{p-2}  + p^p |\ma(x-y)|_i^{p-2} \right) \left(\ma z\right)_i^2.
\end{align*}
Now, by H\"older's inequality and \Cref{lemma:weightkey} we get
\begin{align*}
\sum_{i\in[n]}  p^p \left|\ma (x-y) \right|_i^{p-2} (\ma z)_i^2 &\leq p^p \norm{ \left|\ma (x-y) \right|^{p-2}}_{\frac{p}{p-2}} \norm{(\ma z)^2}_{\frac{p}{2}} \\
&=  p^p \norm{\ma (x-y)}_p^{p-2} \norm{\ma z}_{p}^2 \\
&\leq p^p \norm{x-y}_\mm^{p-2} \norm{z}_\mm^2. 
\end{align*}
Combining the above two inequalities yields 
\begin{align}
z^\top \hess f(x) z &\leq e p(p-1) z^\top \ma^\top \diag{|\ma y - b|^{p-2}} \ma z + p^p \norm{x-y}_\mm^{p-2} \norm{z}_\mm^2 \nonumber \\
&= e \norm{z}_{\nabla^2 f(y)}^2 + p^p \norm{x-y}_\mm^{p-2} \norm{z}_\mm^2. \label{eqn:ineq}
\end{align}
Define $g_y(x) = C_p \norm{x-y}_\mm^p$. 
We have
\begin{align*}
\nabla^2 g_y(x) &= p C_p \norm{x-y}_{\mm}^{p-2} \mm + p(p-2) C_p \norm{x-y}_{\mm}^{p-4} \mm (x-y) (x-y)^\top \mm
\end{align*}
and thus
\[
p C_p \norm{x-y}_\mm^{p-2} \mm \preceq \nabla^2 g_y(x).
\]
Combining the two inequalities yields
\[
\nabla^2 f_y(x) = \nabla^2 f(x) + \nabla^2 g_y(x)
\preceq e (\nabla^2 h_y(x) - \nabla^2 g_y(x)) + \frac{1}{ep}  \nabla^2 g_y(x) + \nabla^2 g_y(x) 
\preceq e \nabla^2 h_y(x).
\]
For the lower bound, we exchange $x$ and $y$ in \Cref{eqn:ineq} and obtain
\[
z^\top \nabla^2 f(x) z \geq \frac{1}{e} z^\top \nabla^2 f(y) z - \frac{p^p}{e} \norm{x-y}_\mm^{p-2} \norm{z}_\mm^2. 
\]
Consequently,
\[
	\nabla^2 f_y(x) =	 \nabla^2 f(x) + \nabla^2 g_y(x)
	 \succeq \frac{1}{e} (\nabla^2 h_y(x) - \nabla^2 g_y(x)) -  \frac{1}{ep}  \nabla^2 g_y(x) 
 	+ \nabla^2 g_y(x)
	\succeq \frac{1}{e} \nabla^2 h_y(x).
\]
\end{proof}
\subsection{Efficient implementation of proximal subproblems}

We now leverage \Cref{lemma:hessian_stable} to give an efficient oracle for the problem 
\[
\prox(y) = \argmin_{x} \norm{\ma x- b}_p^p + e p^p \norm{x-y}_\mm^p. 
\]
Our algorithm is based on the \emph{relative smoothness} framework from \cite{relsmooth}. We use the following:

\begin{lemma}[Theorem~3.1 from \cite{relsmooth}]
Let $f,h$ be convex twice-differentiable functions satisfying
\[
\mu \nabla^2 h(x) \preceq \nabla^2 f(x) \preceq L \nabla^2 h(x)
\]
for all $x$. There is an algorithm which given a point $x_0$ computes a point $x$ with
\[
f(x) - \argmin_y f(y) \leq \eps \left( f(x_0) - \argmin_y f(y) \right) 
\]
in $O(\frac{L}{\mu} \log(1/\eps))$  iterations, where each iteration requires computing gradients of $f$ and $h$ at a point, $O(n)$ additional work, and solving a subproblem of the form 
\begin{equation}
\label{eqn:prox_prob}
\min \left\{ \left\langle  g, x \right\rangle + L h(x) \right\}
\end{equation}
for vectors $g$.
\label{lem:relsmooth}
\end{lemma}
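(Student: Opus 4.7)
The plan is to analyze the ``relative-smooth gradient descent'' iteration
\[
x_{k+1} \;=\; \argmin_x \bigl\{ \langle \nabla f(x_k),\, x - x_k\rangle + L\, D_h(x, x_k) \bigr\},
\]
where $D_h(x,y) \defeq h(x) - h(y) - \langle \nabla h(y), x-y\rangle$ is the Bregman divergence of $h$. Expanding $D_h$ and dropping terms that are constant in $x$, this update is exactly a subproblem of the form \eqref{eqn:prox_prob} with $g = \nabla f(x_k) - L\nabla h(x_k)$, so each iteration costs one gradient evaluation of $f$ and of $h$, $O(n)$ additional arithmetic to form $g$, and a single subproblem solve---matching what the lemma promises.

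The first analytic step is to invoke the relative-smoothness upper bound $f(x) \le f(x_k) + \langle \nabla f(x_k), x - x_k\rangle + L D_h(x, x_k)$ for all $x$, evaluated at $x = x_{k+1}$. Next I would apply the three-point Bregman identity that follows from the first-order optimality condition for $x_{k+1}$ (namely $\nabla f(x_k) + L(\nabla h(x_{k+1}) - \nabla h(x_k)) = 0$): for any comparator $x^\star$,
\[
\langle \nabla f(x_k),\, x_{k+1} - x^\star\rangle \;\le\; L\, D_h(x^\star, x_k) - L\, D_h(x^\star, x_{k+1}) - L\, D_h(x_{k+1}, x_k).
\]
Adding this to the previous inequality and then using relative strong convexity $f(x_k) + \langle \nabla f(x_k), x^\star - x_k\rangle \le f(x^\star) - \mu D_h(x^\star, x_k)$, the $\pm L D_h(x_{k+1}, x_k)$ terms cancel and the gradient inner products telescope, yielding the Lyapunov inequality
\[
f(x_{k+1}) - f(x^\star) + L\, D_h(x^\star, x_{k+1}) \;\le\; (L-\mu)\, D_h(x^\star, x_k).
\]

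From here everything is mechanical. Dropping the nonnegative $f(x_{k+1}) - f(x^\star)$ term gives the contraction $D_h(x^\star, x_{k+1}) \le (1 - \mu/L)\, D_h(x^\star, x_k)$, and iterating yields $D_h(x^\star, x_k) \le (1-\mu/L)^k D_h(x^\star, x_0)$. Feeding this back into the Lyapunov inequality at step $k-1$ bounds $f(x_k) - f(x^\star) \le (L-\mu)(1-\mu/L)^{k-1} D_h(x^\star, x_0)$, and applying relative strong convexity once at the initial point gives $D_h(x^\star, x_0) \le \frac{1}{\mu}(f(x_0) - f(x^\star))$. Combining yields the linear rate $f(x_k) - f(x^\star) \le \frac{L}{\mu}(1-\mu/L)^{k-1}(f(x_0) - f(x^\star))$, which drops below $\eps\,(f(x_0) - f(x^\star))$ once $k = O\bigl(\frac{L}{\mu}\log(1/\eps)\bigr)$ (absorbing a $\log(L/\mu)$ factor into the implicit constant).

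The only delicate point is the three-point Bregman identity and verifying that the three Bregman terms cancel correctly to produce the clean Lyapunov recursion; this is where the relative-smoothness/strong-convexity structure is fully used. Everything else --- reducing the update to the stated subproblem form, iterating the contraction, and translating back to function values --- is bookkeeping.
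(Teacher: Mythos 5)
The paper does not prove this lemma; it cites it as Theorem~3.1 of \cite{relsmooth} and uses it as a black box, so there is no in-paper proof to compare against. Your reconstruction is the standard mirror-descent/Bregman-proximal analysis, and most of it is correct: identifying the iteration as a Bregman proximal gradient step, the three-point identity, the resulting Lyapunov inequality $f(x_{k+1}) - f(x^\star) + L\, D_h(x^\star, x_{k+1}) \le (L-\mu)\, D_h(x^\star, x_k)$, and the contraction $D_h(x^\star, x_{k+1}) \le (1-\mu/L)\, D_h(x^\star, x_k)$ are all right and match the proof in \cite{relsmooth}.

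The gap is in the last conversion step. You assert that relative strong convexity at $x_0$ gives $D_h(x^\star, x_0) \le \frac{1}{\mu}\bigl(f(x_0) - f(x^\star)\bigr)$, but relative strong convexity with $\nabla f(x^\star) = 0$ gives $\mu\, D_h(x_0, x^\star) \le f(x_0) - f(x^\star)$ --- the arguments are in the opposite order from what you need, and $D_h$ is not symmetric. The inequality you wrote is in fact false in general: take $h(x) = f(x) = e^x + e^{-x}$ on $\mathbb{R}$ (or $f = h + x^2$ to get $\mu = 1, L = 2$); then $x^\star = 0$, $D_h(x^\star, x_0) \approx (x_0 - 1)e^{x_0}$ while $f(x_0) - f(x^\star) \approx e^{x_0}$, so the ratio $D_h(x^\star, x_0)/(f(x_0) - f^\star)$ is unbounded as $x_0 \to \infty$. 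The Lu--Freund--Nesterov theorem actually states its rate in terms of $D_h(x^\star, x_0)$, and converting that to the function-value ratio the lemma asserts requires a further argument (or additional assumptions such as starting at a point where $\nabla h(x_0) = 0$, as happens in the paper's application where $h_y$ is minimized at the query point) that your proof does not supply.
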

Applying this to the $p$-norm regression objective yields the following result.
\begin{lemma}
Let $\ma \in \R^{n \times d}, b \in \R^n$ be given. Let $\mm = \ma^\top \mw^{1-2/p} \ma$ (\Cref{def:magic-m}). There exists an algorithm which 
computes $\argmin_x \norm{\ma x-b}_p^p + e p^p \norm{x-y}_\mm^p$ to high accuracy using $\otilde(1)$ linear system solves on matrices $\ma^\top \md \ma$ for $\md \succeq 0$, i.e. in $\O(\T_A)$ time.
\label{lemma:prox}
\end{lemma}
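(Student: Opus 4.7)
The plan is to combine the Hessian stability bound of \cref{lemma:hessian_stable} with the relative smoothness framework of \cref{lem:relsmooth}. Taking $f = f_y$ and $h = h_y$ in \cref{lemma:hessian_stable} shows that $f_y$ is relatively smooth and relatively strongly convex with respect to $h_y$ with $L/\mu = e^2 = O(1)$. Consequently, \cref{lem:relsmooth} produces an $\eps$-approximate minimizer of $f_y$ in $O(\log(1/\eps))$ outer iterations, each of which requires gradient evaluations of $f_y$ and $h_y$ (each involving one $\ma^\top\md\ma$ solve plus $O(\nnz(\ma))$ work) and the solution of one subproblem of the form $\min_x\bigl\{\langle g, x\rangle + L h_y(x)\bigr\}$ for a vector $g$.

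The main task is to show that each such inner subproblem can be solved using only $\otilde(1)$ linear system solves in matrices of the form $\ma^\top\md\ma$. Substituting $u = x - y$ and $Q \defeq \nabla^2 f(y) = p(p-1)\ma^\top\diag{|\ma y - b|^{p-2}}\ma$, the subproblem becomes (up to additive constants)
\[
\min_{u} \; \langle g, u\rangle + L\|u\|_Q^2 + L C_p \|u\|_\mm^p.
\]
The first-order optimality condition reads $g + 2LQ u + L C_p p \|u\|_\mm^{p-2}\mm u = 0$, so writing $t \defeq \|u\|_\mm$ the optimal $u$ must satisfy $u = u(t) \defeq -\bigl(2LQ + L C_p p t^{p-2}\mm\bigr)^{-1} g$ with $\|u(t)\|_\mm = t$. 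Crucially, both $Q$ and $\mm$ are of the form $\ma^\top\md\ma$ with positive diagonal $\md$, so $2LQ + L C_p p t^{p-2}\mm$ is of this form for every $t > 0$.

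It therefore suffices to locate the fixed point of the scalar map $\phi(t) \defeq \|u(t)\|_\mm$. Since $2LQ + L C_p p t^{p-2}\mm$ grows monotonically in the Loewner order as $t$ increases, $\phi$ is continuous and strictly decreasing on $(0,\infty)$ with $\phi(t)\to 0$ as $t\to\infty$, so the equation $\phi(t)=t$ has a unique solution. A standard binary search then pins this fixed point down to polynomial accuracy in $O(\log(1/\eps))$ evaluations of $\phi$, each of which costs one linear system solve in an $\ma^\top\md\ma$ matrix, i.e.\ $\T_\ma$ time. Composing with the $O(\log(1/\eps))$ outer iterations yields $\otilde(1)$ total system solves and the claimed $\O(\T_\ma)$ running time.

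The main obstacle is the efficiency of this binary search: one must verify that the range for $t$ is bounded by a polynomial in the problem parameters (using the spectrum of $\mm$ together with norm bounds on $g$ and $Q$) and that fixed-point accuracy $1/\poly$ in $t$ translates into high accuracy in the subproblem objective, so that the overall outer-plus-inner scheme attains the stated high-precision guarantee. Both facts follow from routine conditioning estimates and contribute only $\poly\log$ factors, which are absorbed into $\otilde(1)$.
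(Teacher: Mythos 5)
Your proposal is correct and follows essentially the same route as the paper: it invokes the relative smoothness framework of \cref{lem:relsmooth} with the same $(f_y, h_y)$ pair and \cref{lemma:hessian_stable}, and reduces each inner subproblem to a one-dimensional binary search over a scalar, where each probe is a single $\ma^\top\md\ma$ solve. The only cosmetic difference is in how the scalar is introduced: the paper writes $\norm{x-y}_\mm^p$ via its Fenchel/Legendre transform $\frac{p}{2}\max_{\tau\ge 0}\{\tau\norm{x-y}_\mm^2 - \frac{p-2}{p}\tau^{p/(p-2)}\}$ and swaps the resulting convex--concave min--max, whereas you parameterize directly by $t = \norm{u}_\mm$ via the first-order optimality condition and argue monotonicity of $\phi(t)=\norm{u(t)}_\mm$. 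These are two presentations of the same idea: your $t^{p-2}$ and the paper's $\tau$ play identical roles, and in both cases the subproblem for a fixed scalar is a convex quadratic whose minimizer is obtained from one linear solve in $\ma^\top\md\ma$. Your monotonicity argument for uniqueness of the fixed point is sound (since $2L Q + L C_p p\,t^{p-2}\mm$ is monotone increasing in the Loewner order, $\phi$ is strictly decreasing with $\phi(t)\to 0$), and, like the paper, you defer the precision/conditioning bookkeeping for the binary search to standard arguments absorbed into $\otilde(\cdot)$.
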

\begin{proof}
For the function $f_y(x) = \norm{\ma x - b}_p^p + e p^p \norm{x-y}_\mm^p$, we define the regularizer $h_y(x) = \norm{x-y}_{\nabla^2 f(y)}^2 + e p^p \norm{x-y}_\mm^p$. We observe by \Cref{lemma:hessian_stable} that $\nabla^2 f_y(x) \approx_{O(1)} \nabla^2 h_y(x)$ for all $x$. Thus \Cref{lem:relsmooth} ensures we compute a minimizer to $f_y$ using $\otilde(1)$ calls to an oracle which solves subproblems of the form
\[
\min \left\{ \left\langle  g, x-z \right\rangle + 4 \left(  \norm{x-y}_{\nabla^2 f(y)}^2 + e p^p \norm{x-y}_\mm^p \right)   \right\}.
\]
To solve this problem, we employ the algebra fact that $\frac{1}{s}x^s = \max_{y\geq 0} xy- \frac{1}{r} y^r$ for any $x \geq 0$ and $\frac{1}{s} + \frac{1}{r} = 1$. Thus, we have 
\[
\norm{x-y}_\mm^p = \frac{p}{2} \cdot \frac{2}{p} \left( \norm{x-y}_\mm^2 \right)^{p/2} = \frac{p}{2} \max_{\tau\geq 0} \left\{ \tau \norm{x-y}_\mm^2 - \frac{p-2}{p} \tau^{\frac{p}{p-2}}\right\}.
\]
We may therefore write the subproblem as
\[
\min_x \max_{\tau \geq 0} \left\{ \left\langle  g, x-z \right\rangle + 4 \norm{x-y}_{\nabla^2 f(y)}^2 + 2ep^{p+1} \left( \tau \norm{x-y}_\mm^2 - \frac{p-2}{p} \tau^{\frac{p}{p-2}} \right) \right\}.
\]
This problem is convex in $x$ and concave in $\tau$: we may exchange the min and max above. Further, this is a convex quadratic in $x$, and thus for any fixed $\tau$ we may compute the minimizing $x$ with a single linear system solve of the form $\nabla^2 f(y) + C \mm =  \ma^\top \md \ma$, for some constant $C \geq 0 $ and $\md \succeq 0$. Further, for any $C > 0$ we have $\nabla^2 f(y) + C \mm \succ 0$: for any fixed $\tau > 0$ the minimizing value of $x$ is unique. We conclude by binary searching for $\tau$ to high accuracy. Thus, each proximal subproblem may be solved using $\otilde(\T_\mA)$ time.
\end{proof}
We note that a high-accuracy solution to the proximal problem in \cref{lemma:prox} gives an approximate stationary point (exactly the condition later in \cref{def:prox}).

\subsection{Putting it all together}

We finish by using the above subroutine in the acceleration framework of \cite{BJLLS19}. We summarize the main claim here:

\begin{definition}[Approximate Proximal Step Oracle, Definition~$5$ \cite{BJLLS19}]
\label{def:prox}
We call $\mathcal{O}_{prox}$ an \emph{$(\alpha,\delta)$-approximate proximal oracle} for convex $f : \R^d \rightarrow \R$ if, when queried at any $x \in \R^d$ it returns $y = \mathcal{O}_{prox}(x) \in \R^d$ such that \[ 
\norm{\nabla f(y)  + e p^{p+1} \norm{y-x}_\mm^{p-2}\cdot \mm ( y-x) }_{\mm^{-1}} \leq e \alpha p^{p+1} \norm{x-y}_\mm^{p-1}  + \delta
\]
\end{definition}

\begin{theorem}[Theorem 7 from \cite{BJLLS19}]
Let $g : \R^d \rightarrow \R $ be a convex twice-differentiable function minimized at $\xopt$, and let $x_0$ be a point with $\norm{x_0 - \xopt} \leq R$. For any parameter $\eps \geq 0$, there is an algorithm which for all $k$ computes $x$ with 
\[
f(x) - f(\xopt) \leq \max \left\{ \eps, \frac{100 p^p \cdot 40^{p-2} R^p }{k^{\frac{3p-2}{2}}} \right\}
\]
using $\lceil k ( 6 + \log_2 [ 10^{20}  R^p \cdot (10^5 p)^{p+6} \eps^{-1} ])^2\rceil = O(p^2 k \log^2(p R \eps^{-1/p} ))$ gradients of $f$ and queries to an $(\frac{1}{128 p^2} , \delta)$-approximate proximal oracle, provided that both $\delta \leq \eps /[10^{20} p^2 R]$ and $\eps \leq 10^{20} p^p \gamma^4 R^{p+1}$. 
\label{thm:ms}
\end{theorem}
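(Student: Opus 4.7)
The plan is to follow the Monteiro-Svaiter (MS) accelerated proximal-point template, specialized to $p$-th order proximal subproblems measured in the $\mm$-norm. I would maintain three iterate sequences $\{x_k\}, \{y_k\}, \{z_k\}$ together with a step-size accumulator $A_k = \sum_{i \le k} a_i$ (starting from $z_0 = x_0$, $A_0 = 0$), and track the Lyapunov potential
\[
\Phi_k \;=\; A_k\bigl(f(x_k) - f(\xopt)\bigr) + \tfrac{1}{2}\|z_k - \xopt\|_{\mm}^2,
\]
showing that $\Phi_k \le \Phi_{k-1}$ up to a controlled additive error coming from the approximate proximal oracle of \cref{def:prox} (taking $f$ to be the function called $g$ in the statement).

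At step $k$ I would perform a line search over a scalar $\lambda > 0$: for each candidate $\lambda$, let $a_k(\lambda)$ be the positive root of $a^2 = \lambda(A_{k-1} + a)$, set $y_k = \tfrac{A_{k-1}}{A_{k-1}+a_k} x_{k-1} + \tfrac{a_k}{A_{k-1}+a_k} z_{k-1}$, query $x_k = \mathcal{O}_{prox}(y_k)$, and accept $\lambda$ whenever the ``magic'' balance $c_1 \le \lambda \|y_k - x_k\|_{\mm}^{p-2} \le c_2$ holds for suitable universal constants $c_1 < c_2$. After acceptance, update $z_k = z_{k-1} - a_k \nabla f(x_k)$. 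To establish the one-step Lyapunov decrease, I would combine (i) convexity to bound $a_k(f(x_k) - f(\xopt)) \le a_k \langle \nabla f(x_k), x_k - \xopt\rangle$, (ii) an expansion of $\tfrac12 \|z_k - \xopt\|_{\mm}^2$ via the dual step, which produces a cross term $-a_k \langle \nabla f(x_k), z_{k-1} - \xopt\rangle$ and a quadratic $\tfrac12 a_k^2 \|\nabla f(x_k)\|_{\mm^{-1}}^2$, and (iii) the oracle bound from \cref{def:prox} to replace $\nabla f(x_k)$ by $-e p^{p+1}\|x_k - y_k\|_{\mm}^{p-2}\mm(x_k - y_k)$ up to slack $\alpha p^{p+1}\|x_k - y_k\|_{\mm}^{p-1} + \delta$. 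The magic balance is tuned precisely so that, after these substitutions, the leading terms telescope into $\Phi_{k-1} - \Phi_k$.

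For the convergence rate, the balance condition forces $\lambda_k \asymp \|y_k - x_k\|_{\mm}^{-(p-2)}$, and combining $a_k^2 \asymp \lambda_k A_k$ with the Lyapunov bound $\|z_k - \xopt\|_{\mm}^2 \le 2R^2$ yields, via the standard MS difference-inequality argument, $A_k = \Omega_p(k^{(3p-2)/2}/R^{p-2})$. Since $\Phi_0 = \tfrac12 R^2$ and $\Phi_k$ is approximately non-increasing, this gives $f(x_k) - f(\xopt) \le \Phi_0/A_k = O_p(R^p/k^{(3p-2)/2})$, matching the stated bound.

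The main obstacle is the analysis of the inexact line search. Because the oracle has $(\alpha, \delta)$-slack, the map $\lambda \mapsto \lambda \|y_k(\lambda) - x_k(\lambda)\|_{\mm}^{p-2}$ is only approximately monotone, so I would have to verify that the acceptance window is non-empty at every iteration (the choice $\alpha = 1/(128 p^2)$ ensures sufficient slack), show that bisection converges in $O(\log(pR\eps^{-1/p}))$ outer steps with an inner fixed-point iteration on $\lambda$ accounting for the squared logarithm, and carefully track the $k$-fold accumulation of $\delta$-errors so that it stays below $\eps$, which is what pins down the bound $\delta \le \eps/(10^{20} p^2 R)$ in the hypothesis. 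This bookkeeping of error constants is the most technically tedious part of the proof, though conceptually standard within the MS framework.
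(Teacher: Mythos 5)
Your proposal re-derives the Monteiro--Svaiter acceleration analysis from first principles---Lyapunov potential $\Phi_k = A_k(f(x_k)-f(\xopt)) + \tfrac12\|z_k-\xopt\|_\mm^2$, the ``magic'' $\lambda$-balance, the $a_k^2 = \lambda A_k$ difference inequality, and $\delta$-error bookkeeping---whereas the paper's proof of this statement is a short change-of-basis reduction. The paper defines $g(x) = f(\mm^{-1/2}x)$ so that the $\mm$-norm becomes the Euclidean norm, sets $\omega(r) = ep^p r^{p-2}$ as the regularizer profile, observes that the optimality conditions of Definition~\ref{def:prox} are exactly those of Definition~5 in \cite{BJLLS19} under this basis change, and then invokes Theorem~7 of \cite{BJLLS19} verbatim with $\gamma = p$ and $\alpha = \tfrac{1}{128p^2}$. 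Your approach is self-contained and correctly identifies the structure of the cited argument (including, in your parenthetical, that the statement's $f$ and $g$ denote the same function after the basis change), but it would replicate a substantial portion of \cite{BJLLS19}: in particular the nonemptiness of the line-search acceptance window under an $(\alpha,\delta)$-approximate oracle---which you rightly flag as the main obstacle and say you ``would have to verify''---is itself a multi-page argument in that reference and is precisely where the explicit constants $1/(128p^2)$, $\delta \le \eps/[10^{20}p^2R]$, and the squared-logarithm iteration count originate. Since the statement is literally the cited theorem modulo a norm change, the paper's reduction is the economical route; your sketch is a correct high-level account of what a from-scratch proof would entail, but as written it leaves the hardest step---the line-search analysis with its constant tracking---as a to-do rather than carrying it out.
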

\begin{proof}
Define the convex function $g(x) =  f(\mm^{-1/2} x)$, and choose $\omega(x) = e p^p x^{p-2}$. The optimality conditions of \Cref{def:prox} are equivalent to those in Definition~5 of \cite{BJLLS19} after applying this change of basis. \Cref{thm:ms} then follows from applying Theorem~7 in \cite{BJLLS19} to $g$ with $\gamma = p$ and $\alpha = \frac{1}{128 p^2}$. 
\end{proof}

Our application of this fact relies on a diameter-shrinking argument from \cite{CJJJLST20}. We first recall a standar bound on the strong convexity of $\norm{x}_p^p$, which we cite from \cite{AKPS19} for simplicity.

\begin{lemma}[Lemma 4.5 from \cite{AKPS19}]
\label{lemma:adil}
Let $p \in (1,\infty)$. Then for any two vectors $y, \Delta \in \R^n$, 
\[ 
\norm{y}_p^p + v^\top \Delta + \frac{p-1}{p 2^p} \norm{\Delta}_p^p \leq \norm{y + \Delta}_p^p 
\]
where $v_i = p |y|_i^{p-2} y_i$ is the gradient of $\norm{y}_p^p$. 
\end{lemma}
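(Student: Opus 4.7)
The plan is to reduce the statement to a one-dimensional scalar inequality, since each side of the claimed bound decomposes additively over coordinates: $\|y+\Delta\|_p^p - \|y\|_p^p - v^\top \Delta = \sum_i \left(|y_i+\Delta_i|^p - |y_i|^p - p|y_i|^{p-2}y_i\Delta_i\right)$ and $\tfrac{p-1}{p 2^p}\|\Delta\|_p^p = \sum_i \tfrac{p-1}{p 2^p}|\Delta_i|^p$. Hence it suffices to show, for every $a, b \in \R$, that $|a+b|^p - |a|^p - p|a|^{p-2}a\, b \;\ge\; \tfrac{p-1}{p 2^p} |b|^p$.

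Next I would normalize. The inequality is $p$-homogeneous under $(a,b) \mapsto (\lambda a, \lambda b)$ for $\lambda > 0$, and symmetric under simultaneous sign flip of both arguments. So if $a=0$ the inequality becomes $|b|^p \ge \tfrac{p-1}{p 2^p}|b|^p$, which holds trivially since $\tfrac{p-1}{p 2^p} \le 1$; otherwise we may assume $a = 1$, reducing the problem to showing $\psi(t) := |1+t|^p - 1 - pt - \tfrac{p-1}{p 2^p}|t|^p \ge 0$ for all $t \in \R$.

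To analyze $\psi$, I would use Taylor's theorem with integral remainder. For $p \ge 2$ the function $s \mapsto |s|^p$ is $C^2$ with second derivative $p(p-1)|s|^{p-2}$, giving $|1+t|^p - 1 - pt = p(p-1)\, t^2 \int_0^1 (1-u) |1+ut|^{p-2}\, du$. Then I would split cases on the size of $t$. In the small regime $|t| \le \tfrac{1}{2}$, one has $|1+ut| \ge \tfrac12$ uniformly in $u \in [0,1]$, so the integral is at least $(1/2)^{p-2} \cdot \tfrac12$, yielding a quadratic lower bound $\ge \tfrac{p(p-1)}{2^{p-1}} t^2$; combined with the bound $|t|^p \le (\tfrac12)^{p-2} t^2$ valid in this regime, this comfortably beats $\tfrac{p-1}{p 2^p}|t|^p$. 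In the large regime $|t| > \tfrac12$, I would use that restricting the integral to $u \in [\tfrac12, 1]$ when $a$ and $t$ have the same sign (or to a suitable sub-interval where $|1+ut|$ is comparable to $|t|$) gives an integrand of order $|t|^{p-2}$, again dominating the required $|t|^{p-2}$ after multiplying by $t^2$. The residual case $t$ near $-1$ (where $|1+t|$ degenerates) can be handled by checking $\psi$ directly at a few anchor points: at $t=-1$, $\psi(-1) = 0 - 1 + p - \tfrac{p-1}{p 2^p} = (p-1)(1 - \tfrac{1}{p 2^p}) \ge 0$, and monotonicity/convexity comparisons close out the interval.

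The main obstacle I anticipate is obtaining the sharp constant $\tfrac{p-1}{p 2^p}$ cleanly, particularly at the junction between regimes and in the degenerate case $1 + t \approx 0$, where the integral remainder is dominated by the tail $u \to 1$ and one must carefully balance the $p(p-1) t^2$ prefactor against the $|t|^{p-2}$ integral mass. A slicker alternative would be to derive the scalar inequality directly from Clarkson's second inequality $|a|^p + |b|^p \ge 2|\tfrac{a+b}{2}|^p + 2|\tfrac{a-b}{2}|^p$ for $p \ge 2$: substituting $a \leftarrow 1+t$, $b \leftarrow 1$ gives $|1+t|^p + 1 \ge 2|1+\tfrac{t}{2}|^p + 2^{1-p}|t|^p$, and then applying the first-order convexity bound $|1+\tfrac{t}{2}|^p \ge 1 + \tfrac{pt}{2}$ recovers $\psi(t) \ge 0$ with constant $2^{1-p} - \tfrac{p-1}{p 2^p}$, which is at least the claimed constant (up to a routine adjustment). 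This Clarkson route sidesteps the Taylor case split entirely, at the cost of invoking a less elementary inequality.
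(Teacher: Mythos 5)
The paper does not prove this lemma; it cites \cite{AKPS19} (Lemma 4.5 there), so there is no internal proof to compare against. Evaluating your proposal on its own merits, I'd make the following observations.

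Your coordinate-wise reduction and scaling normalization are the standard and correct first steps. Of the two routes you sketch, only the Clarkson one is actually complete. Let me first flag the scope issue, which affects both: the inequality as stated, with hypothesis $p \in (1,\infty)$, is \emph{false} for $1 < p < 2$. Take $y_i = 1$, $\Delta_i = t$ with $t \to 0$: the left side of the scalar inequality is $(1+t)^p - 1 - pt = \tfrac{p(p-1)}{2}t^2 + O(t^3)$, while the right side is $\tfrac{p-1}{p2^p}|t|^p$, and $|t|^p / t^2 \to \infty$ when $p < 2$. So the claim only holds for $p \geq 2$, which is what you implicitly assume (your Taylor argument opens with ``For $p \geq 2$'') and is also the only regime in which the paper invokes it (in \cref{corr:fn_to_dist}). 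This is a transcription error in the stated hypothesis rather than a gap in your proof, but worth noting.

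Your Clarkson route is correct and clean for $p \geq 2$. The scalar parallelogram inequality $|u+v|^p + |u-v|^p \geq 2|u|^p + 2|v|^p$ follows from convexity of $s \mapsto |s|^{p/2}$ plus superadditivity of $s \mapsto s^{p/2}$ on $\R_{\ge 0}$, both valid since $p/2 \geq 1$; with $u = 1 + \tfrac{t}{2}$, $v = \tfrac{t}{2}$ and the tangent-line bound $|1+\tfrac{t}{2}|^p \geq 1 + \tfrac{pt}{2}$ you get $(1+t)^p - 1 - pt \geq 2^{1-p}|t|^p$, which exceeds the needed $\tfrac{p-1}{p2^p}|t|^p$ because $\tfrac{2}{2^p} \geq \tfrac{p-1}{p\cdot 2^p}$ for all $p > 0$. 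So you in fact prove a stronger inequality. Your phrasing ``with constant $2^{1-p} - \tfrac{p-1}{p2^p}$'' is a little garbled, but the underlying comparison is right.

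The Taylor route, by contrast, is genuinely incomplete where you yourself flag difficulty: in the regime $t$ near $-1$ the integrand $|1+ut|^{p-2}$ has a zero inside the domain of integration, so the pointwise lower bounds you use elsewhere give nothing, and checking $\psi(-1) \geq 0$ plus an unspecified ``monotonicity/convexity comparison'' is not a proof — $\psi$ is not obviously monotone or convex on $[-2,-\tfrac12]$, and getting the $|t|^{p-2}$ mass from the tail $u \to 1$ requires a careful splitting that you do not carry out. If you want a self-contained argument, lead with the Clarkson route and drop the Taylor sketch, or replace the Taylor endgame with an explicit bound on $\int_0^1 (1-u)|1+ut|^{p-2}\,du$ that covers sign changes.
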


We finally need the following lemma which allows us to convert points which low function error into points with small distance to the minimizer.
\begin{corollary}
\label{corr:fn_to_dist}
Let $\ma \in \R^{n \times d}, b \in \R^n$ be given. Let $\mm = \ma^\top \mw^{1-2/p} \ma$ for $\ell_p$ Lewis weight overestimates $w$ (\Cref{def:lewisweights}). Let $f(x) = \norm{\ma x-b}_p^p$ be minimized at $\xopt$. If $x$ satisfies $f(x) - f(\xopt) \leq \Eps$, then $\norm{x - \xopt}_\mM \leq 2^{3/2} d^{1/2 - 1/p} \Eps^{1/p}$. 
\end{corollary}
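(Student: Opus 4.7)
The plan is to establish strong convexity of $f$ at $\xopt$ in the $\ell_p$ norm via \cref{lemma:adil}, and then convert the resulting $\ell_p$ bound into an $\mM$-norm bound using H\"older's inequality together with the Lewis weight overestimate property $\|w\|_1 \le 2d$.

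First, I would apply \cref{lemma:adil} with $y = \mA \xopt - b$ and $\Delta = \mA(x - \xopt)$. The key observation is that the linear term $v^\top \Delta$ vanishes: by first-order optimality of $\xopt$, we have $\nabla f(\xopt) = p \mA^\top |\mA \xopt - b|^{p-2}(\mA \xopt - b) = 0$, so $\mA^\top v = 0$ and hence $v^\top \mA(x-\xopt) = 0$. Consequently,
\[
\eps \;\ge\; f(x) - f(\xopt) \;=\; \|\mA x - b\|_p^p - \|\mA \xopt - b\|_p^p \;\ge\; \frac{p-1}{p \cdot 2^p}\|\mA(x - \xopt)\|_p^p,
\]
which rearranges to $\|\mA(x - \xopt)\|_p \le \bigl(\tfrac{p}{p-1}\bigr)^{1/p} \cdot 2 \cdot \eps^{1/p} \le 2 \cdot 2^{1/p} \eps^{1/p}$ using $p/(p-1) \le 2$ for $p \ge 2$.

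Next, I convert $\|\cdot\|_p$ on $\mA(x - \xopt)$ to the $\mM$-norm. Writing $u = x - \xopt$,
\[
\|u\|_\mM^2 = \sum_{i \in [n]} w_i^{1 - 2/p}(\mA u)_i^2,
\]
and applying H\"older's inequality with exponents $p/(p-2)$ and $p/2$ gives
\[
\|u\|_\mM^2 \le \|w\|_1^{1 - 2/p} \|\mA u\|_p^2 \le (2d)^{1 - 2/p} \|\mA u\|_p^2,
\]
where the last step uses $\|w\|_1 \le 2d$ from \cref{def:overestimate}. Taking square roots yields $\|u\|_\mM \le (2d)^{1/2 - 1/p}\|\mA u\|_p$.

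Combining the two bounds,
\[
\|x - \xopt\|_\mM \le (2d)^{1/2 - 1/p} \cdot 2 \cdot 2^{1/p} \cdot \eps^{1/p} = 2^{3/2} d^{1/2 - 1/p} \eps^{1/p},
\]
which is exactly the claimed inequality. This proof is essentially a direct composition of two facts already in the paper — strong convexity of $\|\cdot\|_p^p$ at the optimum, and the H\"older-type upper bound underlying Lewis weights — so there is no real obstacle; the only thing to be careful about is correctly tracking the constants so that the $2^{3/2}$ factor emerges cleanly.
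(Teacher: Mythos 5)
Your proof is correct and follows essentially the same route as the paper's: both apply \cref{lemma:adil} with the gradient term vanishing by first-order optimality of $\xopt$, and then pass from $\|\mA(x-\xopt)\|_p$ to $\|x-\xopt\|_\mM$ via H\"older and $\|w\|_1 \le 2d$. The only difference is cosmetic (you take $p$-th roots earlier rather than raising the $\mM$-norm bound to the $p$-th power first), and the constants work out identically.
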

\begin{proof}
Applying \Cref{lemma:adil}, we have 
\[
\norm{\ma x - b}_p^p = \norm{\ma \xopt - b + \ma \left(x -\xopt\right)}_p^p \geq \norm{\ma \xopt -b}_p^p + \nabla f(\xopt)^\top (x - \xopt) + \frac{p-1}{p 2^p} \norm{\ma (x - \xopt)}_p^p.
\]
Note $\nabla f(\xopt) = 0$ by optimality of $\xopt$. By \Cref{eq:upperp}, we obtain
\[
\norm{x - \xopt}_\mM^2 \leq \norm{w}_1^{1 - 2/p} \norm{\ma (x-\xopt)}_{p}^2.
\]
by H\"older's inequality and the fact that $p \geq 2$. Recall that $\norm{w}_1 \leq 2 d$ by \Cref{lemma:approxplarge}: this implies
\[
\norm{x - \xopt}_\mM^p \leq (2d)^{p/2-1} \norm{\ma (x-\xopt)}_{p}^p.
\]
Thus,
\[
\Eps \geq f(x) - f(\xopt) \geq \frac{p-1}{p 2^p} \norm{\ma (x-\xopt)}_p^p \geq 2^{-p-1} (2d)^{1-p/2} \norm{x - \xopt}_\mM^p:
\]
taking $p^{th}$ roots yields $\norm{x - \xopt}_\mm\leq 2^{3/2} d^{1/2 - 1/p} \Eps^{1/p}$ as desired.
\end{proof}

We now prove the main decrease lemma which in turn shows \cref{thm:largemain}.
\begin{lemma}
\label{lemma:reduction}
Let $\ma, b$ be given, and let $f(x) = \norm{\ma x - b}_p^p$ have minimizer $\xopt$. Let $x_0$ be a point such that $f(x_0) - f(\xopt) \leq \Eps$. There is an algorithm which returns $x'$ with $f(x') - f(\xopt) \leq \frac{\Eps}{2}$ using
\[
\widetilde{O} \left(p^{8/3} d^{\frac{p-2}{3p-2}} \T_\mA \right)
\]
time.
\end{lemma}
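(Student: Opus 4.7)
The plan is to combine the diameter shrinking bound in \cref{corr:fn_to_dist} with the Monteiro--Svaiter acceleration result \cref{thm:ms}, using the proximal oracle from \cref{lemma:prox}. Concretely, I first apply \cref{corr:fn_to_dist} to the input $x_0$ to get $R \defeq \|x_0 - \xopt\|_\mm \le 2^{3/2} d^{1/2 - 1/p} \Eps^{1/p}$. Then I invoke \cref{thm:ms} with this $R$, target accuracy $\eps \defeq \Eps/2$, and oracle precision $\delta$ chosen polynomially small (which only contributes polylog overhead). The output of the algorithm is then the desired point $x'$ with $f(x') - f(\xopt) \le \Eps/2$.

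The key parameter calculation is choosing $k$ in \cref{thm:ms} so that $\frac{100 p^p \cdot 40^{p-2} R^p}{k^{(3p-2)/2}} \le \Eps/2$. Solving for $k$ and substituting $R^p \le 2^{3p/2} d^{p/2 - 1}\Eps$, the $\Eps$ factors cancel and the raising to the $2/(3p-2)$-th power turns each term into something mild:
\begin{itemize}
\item $(p^p)^{2/(3p-2)} \le p^{2/3+O(1/p)}$,
\item $(40^{p-2})^{2/(3p-2)} = O(1)$ for all $p \ge 2$,
\item $(d^{p/2-1})^{2/(3p-2)} = d^{(p-2)/(3p-2)}$,
\end{itemize}
so it suffices to take $k = O(p^{2/3} d^{(p-2)/(3p-2)})$. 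This is the reason the framework's exponential-in-$p$ looking constants collapse to a polynomial $p^{O(1)}$ dependence in the final bound.

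For the per-iteration cost, each of the $O(p^2 k \log^2(\cdot))$ iterations requires (i) a gradient of $f$, computable in $\tilde{O}(\T_\mA)$ time, and (ii) a call to an $(\tfrac{1}{128p^2}, \delta)$-approximate proximal oracle. I implement the oracle by running \cref{lemma:prox} to sufficiently high accuracy; as the paper notes, a high-accuracy minimizer of $\min_x \|\mA x - b\|_p^p + ep^p \|x - y\|_\mm^p$ automatically yields the approximate stationary condition of \cref{def:prox}, and \cref{lemma:prox} gives this in $\tilde{O}(\T_\mA)$ time (via $\tilde{O}(1)$ solves on $\mA^\top \mD \mA$ matrices, using that the Lewis-weight matrix $\mm$ makes the proximal objective relatively smooth by \cref{lemma:hessian_stable}). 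Multiplying the iteration count by the per-iteration cost yields the claimed $\tilde{O}(p^{8/3} d^{(p-2)/(3p-2)} \T_\mA)$ bound.

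The main obstacle is just verifying the precondition $\delta \le \eps/[10^{20} p^2 R]$ of \cref{thm:ms} — we simply run the proximal solver to a polynomially small additive error, which costs only polylog factors (absorbed in $\tilde{O}$), together with the Lewis-weight computation from \cref{lemma:approxplarge} done once up front (also $\tilde{O}(\T_\mA)$). No step requires new technical ideas beyond the ingredients already proved; everything is a parameter-plugging argument, with the nontrivial piece being the observation above that the seemingly exponential $p$-dependence in \cref{thm:ms} becomes polynomial after the $(3p-2)/2$-th root when solving for $k$.
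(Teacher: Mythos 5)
Your proposal is correct and follows essentially the same route as the paper: apply \cref{corr:fn_to_dist} to bound $R$, invoke \cref{thm:ms} with $\eps = \Eps/2$, solve for $k = O(p^{2/3}d^{(p-2)/(3p-2)})$, and implement the proximal oracle via \cref{lemma:prox} at $\tilde{O}(\T_\mA)$ cost per call. Your explicit breakdown of how the $p^p$ and $40^{p-2}$ factors collapse after taking the $2/(3p-2)$-th power is a slightly more detailed rendering of the same parameter-plugging the paper performs, and your note about choosing $\delta$ polynomially small matches the paper's numerical-stability remark.
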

\begin{proof}
Applying \Cref{corr:fn_to_dist} yields  $R \equiv \norm{x_0 - \xopt}_\mm \leq 2^{3/2} d^{1/2 - 1/p} \Eps^{1/p}$ for $\mM = \mA^\top \mW^{1-2/p}\mA$. Note that $\log(p R \Eps^{-1/p}) = O(\log d)$. We now apply \Cref{thm:ms} to $f(x)$ with $\eps = \frac{\Eps}{2}$:  in $\otilde(p^2 k)$ gradient computations and proximal oracle calls we compute $x$ with 
\[
f(x) - f(\xopt) \leq \max \left\{ \frac 12 \Eps, \frac{100 p^p \cdot 40^{p-2} R^p }{k^{\frac{3p-2}{2}}} \right\} \leq  \max \left\{\frac 12 \Eps, \frac{100 (120 p)^p d^{p/2 - 1} \Eps }{k^{\frac{3p-2}{2}}} \right\}.
\]
For $k = O(p^{2/3} d^{\frac{p-2}{3p-2}})$, this bound is $\frac 12 \Eps$ as desired. We additionally require $\otilde(p^2 k ) = \otilde(p^{8/3} d^{\frac{p-2}{3p-2}})$ gradient computations and calls to a proximal oracle for $f$ -- these proximal oracle calls can each be implemented in $\widetilde{O}(\T_\mA)$ time by \Cref{lemma:prox}.
\end{proof}

\begin{proof}[Proof of \cref{thm:largemain}]
Let $\xopt = \argmin_y \|\mA y - b\|_p^p$ and $\OPT = \|\mA \xopt - b\|_p^p.$ We may initialize $\Eps = n^\frac{p-2}{2} \OPT$ in \cref{lemma:reduction} by setting  $x = \argmin_x \|\mA y - b\|_2^p$ instead, and noting that \[ \|\mA y - b\|_p^p \le \|\mA y - b\|_2^p \le \|\mA \xopt - b\|_2^p \le n^\frac{p-2}{2} \|\mA \xopt - b\|_p^p = n^\frac{p-2}{2} \OPT. \]
Now \cref{thm:largemain} follows from running $\log(n^\frac{p-2}{2}) = \O(p)$ iterations of \cref{lemma:reduction}.
\end{proof}

\paragraph{Discussion on numerical stability.} Throughout the section (eg. in the application of \cref{lem:relsmooth,lemma:prox}), we have assumed that high accuracy solutions to problems lead to exact or high accuracy stationary points, i.e. the KKT conditions are satisfied. There are several ways to make this rigorous. In particular, if one assumes that all parameters, including the condition number of $\mA$, are quasipolynomially bounded (i.e. at most $\exp(\poly\log m)$), then one can add a small strongly-convex regularizer (eg. $\delta\|x\|_{\mA}^2$ for $\delta \le \exp(-\poly\log m)\eps$) which barely affects the optimal value. Strong convexity allows us to get an approximate stationary point from approximate minimizers, which suffices for the all our applications (including the proof of \cite{relsmooth}).
\section{Algorithm for Small $q$}
\label{sec:smallq}

In this section, we provide an algorithm to show \cref{thm:smallmain}. Because there isn't a clean version of 
iterative refinement for the objective $\|\mA x - b\|_q$ for $q < 2$, we instead work with the dual problem. Precisely, we can use Sion's minimax theorem to get for $p = q/(q-1)$
\begin{align}
\min_{x \in \R^d} \|\mA x - b\|_q &= \min_{x \in \R^d} \max_{\|y\|_p \le 1} y^\top(\mA x - b) = \max_{\|y\|_p \le 1} \min_{x \in \R^d} y^\top(\mA x - b) \nonumber \\
&= -\min_{\substack{\|y\|_p \le 1 \\ \mA^\top y = 0}} b^\top y = \left(\min_{\mA^\top y = 0, b^\top y = 1} \|y\|_p\right)^{-1}. \label{eq:dualprob}
\end{align}
Using an high precision solution $y$ to \eqref{eq:dualprob}, we can return a high precision minimizer to $\min_{x \in \R^d} \|\mA x - b\|_q.$ In particular for the true optimum $y^*$, by KKT conditions (that $\g \|y^*\|_p^p = p\sign(y^*)|y^*|^{p-2}$ is in the kernel of $\begin{bmatrix} \mA & b \end{bmatrix}^\top$) we know that there exists a vector $x \in \R^d$ satisfying $\lambda \sign(y^*)|y^*|^{p-2} = \mA x - b.$ We return this $x$. If we have a high precision minimizer $y$ instead of the true optimum $y^*$, we can instead return an $\ell_2$-projection, i.e. $x = \argmin_{x \in \R^d} \|\mA x - b - \lambda \sign(y)|y|^{p-2}\|_2$ for the proper scaling $\lambda$.

We use \cref{lemma:gammasolver} (for $\mA = \mI$ and $b = 0$ in the lemma statement) to solve the problem in \eqref{eq:dualprob}, where we assume $\nu = 1$ and $\OPT = 1$ by scaling. This leads to the following optimization problem for some $g \in \R^n$ and $\mR = \diag{r}$ for $r \in \R^n_{>0}$:
\[ \min_{\mA^\top x = 0, b^\top x = 1, g^\top x = -1} x^\top \mR x + \|x\|_p^p. \]
Let $\mU = \begin{bmatrix} \mA & b & g \end{bmatrix}$ and $v = \begin{bmatrix} \mathbf{0} & 1 & -1 \end{bmatrix}^\top$. Through these reductions and \cref{lemma:gammasolver} it suffices to show the following.
\begin{lemma}
\label{lemma:smallmain}
For matrix $\mU \in \R^{n \times d}$ and $v \in \R^n$, assume there is $x \in \R^n$ satisfying $\mU^\top x = v$, $x^\top \mR x \le 1$ and $\|x\|_p^p \le 1$. Then there is an algorithm that in time $\O(\T_\mU)$ outputs a $y \in \R^n$ satisfying $\mU^\top y = v$, $y^\top \mR y = O(1)$, and $\|y\|_p \le O(d^\frac{p-2}{2p-2}).$
\end{lemma}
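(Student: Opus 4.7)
The plan is to produce $y$ via (essentially) a single linear-system solve in a matrix of the form $\mU^\top \mD \mU$, using regularized $\ell_p$-Lewis weights to absorb the regularizer $\mR$ into a weighted quadratic approximation of the $\ell_p$ term.

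\textbf{Step 1 (Compute regularized Lewis weights).} First I would invoke \cref{lemma:reglevapprox} to compute in $\O(\T_\mU)$ time a constant-factor approximation $w \in \R_{>0}^n$ to the regularized $\ell_p$-Lewis weights of \cref{def:reglewis}. These weights will satisfy $\|w\|_1 = O(d)$ together with a leverage-score fixed-point condition asserting that $w_i^{2/p}$ equals (up to constants) the leverage of row $i$ in $(\mR + \mW^{1-2/p})^{-1/2}\mU$. Set $\mD \defeq \mR + \mW^{1-2/p}$.

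\textbf{Step 2 (Solve a quadratic relaxation).} Next I would compute $y \defeq \argmin_{\mU^\top y = v} y^\top \mD y$. Standard Lagrangian calculus gives $y = \mD^{-1}\mU(\mU^\top \mD^{-1}\mU)^{-1} v$, which requires one solve in $\mU^\top \mD^{-1}\mU$ --- a matrix of the required form $\mU^\top \mD' \mU$ for positive diagonal $\mD' = \mD^{-1}$ --- and hence takes $\O(\T_\mU)$ time.

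\textbf{Step 3 (Verify the two guarantees).} Optimality gives $y^\top \mD y \le x^{\top} \mD x = x^{\top}\mR x + x^{\top}\mW^{1-2/p} x$ for the $x$ given by the hypothesis. The first summand is at most $1$ by assumption, and H\"older's inequality with exponents $p/(p-2)$ and $p/2$ gives $x^{\top} \mW^{1-2/p} x \le \|w\|_1^{(p-2)/p} \|x\|_p^2 = O(d^{(p-2)/p})$. The self-consistency of the regularized Lewis weights --- essentially a rebalancing of the two summands of $\mD$ so that neither dominates --- then upgrades this to $y^\top \mR y = O(1)$. For the $\ell_p$ bound I would use the leverage-score characterization of $w$ to derive the pointwise estimate $|y_i|^2 \le w_i^{2/p}\cdot y^\top \mD y$ (in the spirit of \cref{lemma:weightkey}), whence
\[
\|y\|_p^p \le \sum_i w_i \cdot (y^\top \mD y)^{p/2} = \|w\|_1 \cdot (y^\top \mD y)^{p/2},
\]
and substituting/simplifying will deliver $\|y\|_p = O(d^{(p-2)/(2p-2)})$.

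\textbf{Main obstacle.} The hard part will be tracking the exponents tightly enough to recover $(p-2)/(2p-2)$ rather than a cruder $1/2$. A naive chain of H\"older and Lewis-weight bounds gives $y^\top \mD y = O(d^{(p-2)/p})$ and hence $\|y\|_p \le \|w\|_1^{1/p}\sqrt{y^\top \mD y} = O(d^{1/2})$, which is looser than the claim. The sharpening will come from exploiting the fixed-point property of the regularized Lewis weights so that the $\mR$-mass and $\mW^{1-2/p}$-mass of $\mD$ are \emph{simultaneously} controlled; this is precisely where the regularized variant (as opposed to standard $\ell_p$-Lewis weights) is essential, and it is also the reason --- as remarked in \cref{sec:intro} --- that only the weaker $(p-2)/(2p-2)$ exponent (rather than the primal $(p-2)/(3p-2)$) is attained in the small-$q$ setting.
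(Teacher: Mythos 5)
Your high-level plan---compute regularized Lewis weights, solve one weighted quadratic, then bound via H\"older and a leverage-score inequality---is the same as the paper's, but there are two genuine gaps, and one of them is structural rather than an exponent-tracking nuisance as you suggest.

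First, the scaling of the two terms in $\mD$. You set $\mD = \mR + \mW^{1-2/p}$. The paper's $\OracleSmall$ instead uses $\mD = d^{1-2/p}\mR + \hmW^{1-2/p}$ and passes the regularization vector $c = dr^{p/(p-2)}$ (not $r$) to $\ApproxRegLewis$. These $d$-factors cannot be recovered afterwards. With your $\mD$, the optimality of $y$ only gives
\[
y^\top \mR\, y + y^\top \mW^{1-2/p} y = y^\top \mD\, y \;\le\; x^\top \mD\, x \;\le\; 1 + \|w\|_1^{1-2/p}\|x\|_p^2 = O\!\left(d^{1-2/p}\right),
\]
so the best bound on $y^\top \mR\, y$ available from the quadratic objective is $O(d^{1-2/p})$, not $O(1)$. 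Your Step~3 asserts that ``self-consistency of the regularized Lewis weights\ldots\ upgrades this to $y^\top \mR\, y = O(1)$,'' but no fixed-point property of $w$ can separate the two summands once the minimization is over the sum $y^\top\mD y$ alone; the only information the optimizer gives is on the total. The paper's $d^{1-2/p}$ prefactor on $\mR$ makes \emph{both} summands contribute $O(d^{1-2/p})$ to $x^\top\mD x$, from which $d^{1-2/p}y^\top\mR y \le y^\top\mD y = O(d^{1-2/p})$ yields $y^\top\mR y = O(1)$ directly.

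Second, the $\ell_p$ bound. Your proposed pointwise estimate $|y_i|^2 \le w_i^{2/p}\, y^\top\mD y$ is not a consequence of \cref{lemma:weightkey} or \cref{fact:lev} as stated: those bound quantities of the form $(\mA x)_i$ for $x$ ranging over a subspace, whereas your $y$ lives in the image of $\mD^{-1}\mU$ and the relevant geometry is the Lewis-weight reweighting of $(\mC + \hmW)^{1/2-1/q}\mU$, not of $\mU$ itself. Moreover, you already observe that following your chain of inequalities gives only $\|y\|_p = O(d^{1/2})$. The paper closes this by writing the optimizer explicitly as $y = \mD^{-1}\mU z$, introducing $\hy = \mD^{-1/2}\mU z$, and then, crucially, using that $\hw$ are approximate \emph{$c$-regularized} Lewis weights for $c = d r^{p/(p-2)}$: this is exactly what lets the paper exchange $(d^{1-2/p}r_i + \hw_i^{1-2/p})^{-1}$-type factors against $(d r_i^{p/(p-2)} + \hw_i)^{1/2-1/q}$-type leverage scores via \cref{fact:lev} in a way that cancels cleanly, delivering $\|y\|_p^p \le 4^p d^{p/2-1}$ and hence $\|y\|_p = O(d^{1/2 - 1/p}) \le O(d^{(p-2)/(2p-2)})$. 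Without the matching $d$-scaling on the regularizer $c$, this chain of exchanges does not balance and the exponent does not come out. So the ``main obstacle'' you flag is real, but the missing idea is specifically the pair of $d$-scalings (on $\mR$ in $\mD$, and on $r$ in $c$), not merely sharper use of a fixed-point identity that you already have.
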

We remark that we could also instead get a result which achieves a $y^\top \mR y = O_p(1)$ and $\|y\|_p^p = O_p(1)$ in time $\O_p(\T_\mU d^\frac{p-2}{2p-2})$ via a multiplicative weights style algorithm as done in \cref{sec:largep_ckmst} algorithm \Oracle~(\cref{algo:pnormbasic}). However since both runtimes would be the same (up to logarithmic factors), we choose to present this simpler single iteration algorithm. Combining this multiplicative weights style algorithm with energy boosting as in the analysis in \cref{sec:largep_ckmst} to achieve a $O_p(d^\frac{p-2}{3p-2})$ iteration bound remains an interesting open problem.

\begin{proof}[Proof of \cref{thm:smallmain}]
\cref{lemma:smallmain} satisfies the conditions of \cref{lemma:gammasolver} for $\gamma = O(d^\frac{p-2}{2p-2}).$ Each call to \cref{lemma:smallmain} requires $\O(1)$ calls to a solver for $\mU^\top \mD \mU$.
Also, a solver for $\mU^\top \mD \mU$ can be implemented using $O(1)$ calls to a solver for $\mA^\top \mD \mA$ for diagonal matrices $\mD$, and $O(1)$ solves on $O(1) \times O(1)$ matrices by computing the inverse via the Schur complement onto the $2 \times 2$ block of $\mU^\top \mD \mU$ corresponding to the $b, g$ vectors. Thus the total number of iterations is $\O(p^{3.5}\gamma \log(m/\eps)) = \O_p(d^\frac{p-2}{2p-2} \log(m/\eps))$ as desired.
\end{proof}
An algorithm to show a weaker version of \cref{lemma:smallmain} with the bound $\|y\|_p \le O\left(n^\frac{p-2}{2p-2}\right)$ was given in \cite[Lemma 3.3]{APS19}, by simply returning \[ y = \argmin_{\mU^\top x = v} x^\top\left(n^{1-\frac2p}\mR + \mI\right)x. \] Our approach to improve this dependence to $O(d^\frac{p-2}{2p-2})$ uses a version of $\ell_q$ Lewis weights to replace the identity matrix $\mI$ in the above. To handle the presence of the resistance term $\mR$ we require a regularized version of Lewis weights.
\begin{definition}[Regularized $\ell_q$ Lewis weights]
\label{def:reglewis}
For a matrix $\mA \in \R^{n \times d}$, $1 \le q \le 2$, and vector $c \in \R^n_{\ge0}$, the $c$\emph{-regularized} $\ell_q$ Lewis weights $w$ are defined as the solution to
\[ w_i = \sigma\left(\left(\mC + \mW\right)^{\frac12-\frac1q}\mA \right)_i \enspace \text{ for all } \enspace i \in [n]. \]
\end{definition}
We show that these weights can be computed approximately in $\O(1)$ iterations of a contractive map. Each iteration requires the computation of approximate leverage scores. The proof of the following lemma is in \cref{proofs:reglevapprox}.
\begin{lemma}
\label{lemma:reglevapprox}
Given a matrix $\mA \in \R^{n \times d}$, $1 \le q \le 2$, and vector $c \in \R^n_{\ge0}$, let $w$ be the $c$-regularized $\ell_q$ Lewis weights. There is an algorithm $\ApproxRegLewis(\mA, c, q)$ that whp. computes a vector $\hw \in \R^n_{\ge0}$ satisfying $\hw_i/w_i \in [0.9, 1.1]$ for all $i \in [n]$ in $\O(\T_\mA)$ time.
\end{lemma}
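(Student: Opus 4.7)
The plan is to compute $\hw$ via fixed-point iteration, adapting the strategy of \cite{CP15} for unregularized Lewis weights to the regularized setting. Define the map $T : \R^n_{>0} \to \R^n_{>0}$ by
\[
T(w)_i \defeq \sigma\bigl((\mC + \mW)^{\frac{1}{2}-\frac{1}{q}}\mA\bigr)_i = (c_i+w_i)^{1-\frac{2}{q}} \cdot a_i^\top\bigl(\mA^\top (\mC+\mW)^{1-\frac{2}{q}}\mA\bigr)^{-1}a_i,
\]
so that by \cref{def:reglewis} the regularized Lewis weight vector $w$ is the unique fixed point of $T$. The algorithm $\ApproxRegLewis(\mA, c, q)$ would initialize $w^{(0)} \assign \mathbf{1}$ and then iterate $w^{(k+1)} \assign T(w^{(k)})$ (using approximate leverage scores, not exact ones) for $\O(1)$ steps.

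The main analytic step is to show that $T$ is a contraction in the multiplicative metric $d_\infty(w, w') \defeq \max_i|\log(w_i/w'_i)|$. If $\lambda^{-1}w \le w' \le \lambda w$ entrywise then $\lambda^{-1}(c+w) \le c+w' \le \lambda(c+w)$, and since $t\mapsto t^{1-2/q}$ is monotone decreasing for $q \in [1,2]$, we get the entrywise diagonal bound $(\mC+\mW')^{1-2/q} \approx_{\lambda^{|1-2/q|}} (\mC+\mW)^{1-2/q}$ and analogously for the scalar prefactor $(c_i+w_i)^{1-2/q}$. Carefully combining these---exploiting the scale-invariance of $\sigma(\mathbf{D}^{1/2}\mA)$ under uniform rescaling of the diagonal $\mathbf{D}$, as in \cite{CP15}, to avoid a factor-of-two blowup---yields contraction factor $\rho = |1-2/q| = (2-q)/q$, which is strictly less than $1$ for $q \in (1, 2]$.

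Given the contraction, since the initial multiplicative distance $d_\infty(w^{(0)}, w) \le O(\log n)$ under standard boundedness assumptions on the entries of $\mA$, after $k = O(\log\log n/\log(1/\rho)) = \O(1)$ iterations the distance shrinks below $\log(1.01)$ and we obtain $w^{(k)}_i/w_i \in [1\pm 0.01]$. Each iteration reduces to computing the leverage scores of $(\mC+\mW^{(k)})^{1/2-1/q}\mA$ to constant multiplicative accuracy, which can be done whp.\ in $\O(\T_\mA)$ time via the standard Johnson--Lindenstrauss sketch plus $O(\log n)$ linear system solves in $\mA^\top\mathbf{D}\mA$ for positive diagonal $\mathbf{D}$ (obtained by rescaling rows of $\mA$ appropriately). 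The $O(1)$ multiplicative error introduced per iteration is absorbed by running $\O(1)$ additional iterations to reach the required $[0.9, 1.1]$ guarantee.

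The main obstacle I anticipate is in the contraction analysis itself: bounding the prefactor and the matrix-inverse perturbations independently gives a factor of $2|1-2/q|$, which fails to contract when $q \in (1, 4/3]$. Resolving this requires the more refined \cite{CP15}-style argument that exploits the scale-invariance of the leverage-score factor to cancel with the prefactor perturbation. One additionally must verify that the additive regularization $\mC$ is compatible with this cancellation---which is routine because any multiplicative perturbation of $w$ induces at most an equal multiplicative perturbation of $c+w$.
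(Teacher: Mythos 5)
The key divergence is the choice of fixed-point map, and it leaves a genuine gap. You iterate the raw leverage-score map $T(w)_i = \sigma\bigl((\mC+\mW)^{1/2-1/q}\mA\bigr)_i = (c_i+w_i)^{1-2/q}\,\tau_i(w)$ with $\tau_i(w) = a_i^\top(\mA^\top(\mC+\mW)^{1-2/q}\mA)^{-1}a_i$, whose fixed point is by definition the regularized Lewis weight. The paper instead adapts the \cite{CP15} contraction $w_i \mapsto \tau_i(w)^{q/2}$: it tracks the shifted quantity $u := c+w$ and iterates
\[
u^{\new}_i \;=\; \bigl(\tau_i(w) + u_i^{2/q-1}c_i\bigr)^{q/2},
\]
which is exactly line 2 of \Cref{algo:reglevapprox} (after subtracting $c_i$). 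This has a one-line contraction bound: if $u \approx_\nu u^*$ entrywise then each of the two summands is a $\nu^{2/q-1}$-approximation of its counterpart, and raising the sum to the $q/2$ power contracts by $\nu^{(2/q-1)\cdot q/2} = \nu^{1-q/2}$, a factor bounded away from $1$ for the whole range $q\in[1,2]$ (it is at most $1/2$).

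Your map does not obviously have this property, and the two issues you flag near the end are in fact fatal as stated. First, bounding the prefactor $(c_i+w_i)^{1-2/q}$ and $\tau_i$ separately gives sensitivity $2|1-2/q|$, which exceeds $1$ for $q\le 4/3$; you propose to repair this via the scale-invariance of $\sigma(\mD^{1/2}\mA)$ under $\mD\mapsto\alpha\mD$, but that invariance is precisely what the additive regularizer destroys — $w\mapsto\sigma((\mC+\mW)^{1/2-1/q}\mA)$ is not invariant under $w\mapsto\alpha w$ when $c\neq 0$ — and you leave the compatibility check open. Second, even taking your claimed rate $\rho = (2-q)/q$ at face value, it equals $1$ at $q=1$ and blows up the iteration count as $q\to 1^+$, whereas the lemma (and the paper's rate $1-q/2$) covers $q=1$ uniformly. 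The paper avoids both problems by not iterating the leverage-score map at all: its update is engineered so that the contraction argument acts cleanly on $u = c+w$, and the final approximate leverage-score call at the end (line 4 of \Cref{algo:reglevapprox}) converts the $c+w$-closeness into the required $w$-closeness. You should replace your map with the paper's update (or otherwise supply a rigorous contraction proof for $T$ that handles the broken scale-invariance and the $q\to 1$ boundary).
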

We can now give our algorithm to show \cref{lemma:smallmain}.
\begin{algorithm}[ht]
\caption{$\OracleSmall(\mU, v, r, q).$ Given matrix $\mU \in \R^{n \times d}, v \in \R^n, r \in \R^n, q \le 2$, such that there exists $x$ with $\mU^\top x = v$ and $x^\top\mR x \le 1$ and $\|x\|_p^p \le 1$, returns $y$ satisfying $\mU^\top y = v$, $y^\top\mR y = O(1)$ and $\|y\|_p = O(d^\frac{p-2}{2p-2}).$ \label{algo:oraclesmall}}
$\hw \assign \ApproxRegLewis(\mU, d r^\frac{p}{p-2}, q).$ \Comment{\cref{lemma:reglevapprox}} \label{line:approxlewissmall} \\
Return $y \assign \argmin_{\mU^\top x = v} x^\top\left(d^{1-\frac2p}\mR + \hmW^{1-\frac2p}\right)x.$ \label{line:returny}
\end{algorithm}
We can now show \cref{lemma:smallmain}.
\begin{proof}[Proof of \cref{lemma:smallmain}]
Let $x$ satisfy $\mU^\top x = v$ and $x^\top\mR x \le 1$ and $\|x\|_p \le 1$.
By the definition of $y$ in line \ref{line:returny} in \OracleSmall~(\cref{algo:oraclesmall}), we know that
\begin{align*} y^\top\left(d^{1-\frac2p}\mR + \hmW^{1-\frac2p}\right)y &\le x^\top\left(d^{1-\frac2p}\mR + \hmW^{1-\frac2p}\right)x \le d^{1-\frac2p} + \|w\|_1^{1-\frac2p}\|x\|_p^2 \le 3d^{1-\frac2p}
\end{align*}
where we have used \holder~and $\|\hw\|_1 \le 1.1d$ by \cref{lemma:reglevapprox}. In particular, this gives us $d^{1-\frac2p}y^\top\mR y \le 3d^{1-\frac2p}$ so $y^\top\mR y \le 3$. Now we bound $\|y\|_p^p$. Note that by the optimality conditions for $y$ (that the gradient of the objective of line \ref{line:returny} of \cref{algo:oraclesmall} is in the kernel of $\mU^\top$), there must exist a vector $z \in \R^d$ such that
\[ y = \left(d^{1-\frac2p}\mR + \hmW^{1-\frac2p}\right)^{-1}\mU z. \] Define $\hy = \left(d^{1-\frac2p}\mR + \hmW^{1-\frac2p}\right)^{-1/2}\mU z$, so that \[ \|\hy\|_2^2 = y^\top\left(d^{1-\frac2p}\mR + \hmW^{1-\frac2p}\right)y \le 3d^{1-\frac2p}. \]
Now we get that
\begin{align}
\|y\|_p^p &= \sum_{i \in [n]} \left(d^{1-\frac2p}r_i + \hw_i^{1-\frac2p}\right)^{-p}|(\mU z)_i|^p \nonumber \\
&= \sum_{i \in [n]} \left(d^{1-\frac2p}r_i + \hw_i^{1-\frac2p}\right)^{-p} \left(d r_i^\frac{p}{p-2} + \hw_i\right)^{\frac{(p-2)^2}{2p}} \left|\left(\left(d \mR^\frac{p}{p-2} + \hmW\right)^{\frac12-\frac1q} \mU z\right)_i\right|^{p-2} (\mU z)_i^2 \nonumber \\
&\overset{(i)}{\le} \sum_{i \in [n]} \left(d^{1-\frac2p}r_i + \hw_i^{1-\frac2p}\right)^{-1} \hw_i^{-\frac{p-2}{2}}\left|\left(\left(d \mR^\frac{p}{p-2} + \hmW\right)^{\frac12-\frac1q} \mU z\right)_i\right|^{p-2} (\mU z)_i^2 \nonumber \\
&\overset{(ii)}{\le} \left\|\left(d \mR^\frac{p}{p-2} + \hmW\right)^{\frac12-\frac1q} \mU z\right\|_2^\frac{p-2}{2} \cdot \\ &\sum_{i \in [n]} \left(d^{1-\frac2p}r_i + \hw_i^{1-\frac2p}\right)^{-1} \hw_i^{-\frac{p-2}{2}} \sigma\left(\left(d \mR^\frac{p}{p-2} + \hmW\right)^{\frac12-\frac1q} \mU\right)_i^\frac{p-2}{2} (\mU z)_i^2 \\
&\overset{(iii)}{\le} 2^\frac{p-2}{2}\left\|\left(d \mR^\frac{p}{p-2} + \hmW\right)^{\frac12-\frac1q} \mU z\right\|_2^\frac{p-2}{2} \sum_{i \in [n]} \left(d^{1-\frac2p}r_i + \hw_i^{1-\frac2p}\right)^{-1} (\mU z)_i^2 \nonumber \\
&\overset{(iv)}{\le} 4^\frac{p-2}{2}\|\hy\|_2^p \le 4^\frac{p-2}{2}\left(3d^{1-\frac2p}\right)^\frac{p}{2} \le 4^p d^{\frac{p}{2}-1}. \label{eq:lastline}
\end{align}
Here, $(i)$ follows from the inequality $a^{1-2/p} + b^{1-2/p} \ge (a+b)^{1-2/p}$, which holds for all $a, b \ge 0$, for $a = d_ir_i^\frac{p}{p-2}$ and $b = \hw_i$, and the trivial bound \[ \left(d^{1-\frac2p}r_i + \hw_i^{1-\frac2p}\right)^{-p/2} \le \hw_i^{-\frac{p-2}{2}}. \] Also, $(ii)$ is shown using \cref{fact:lev}, and $(iii)$ uses the definition of $c$-regularized Lewis weights (\cref{def:reglewis}) for $c = dr^\frac{p}{p-2}$ as chosen in line \ref{line:approxlewissmall} of \OracleSmall~(\cref{algo:oraclesmall}) and that $\hw$ are $1.1$-approximate weights by \cref{lemma:reglevapprox}. Finally, $(iv)$ uses the definition of $\hy$ and
\[ (dr_i^\frac{p}{p-2} + \hw_i)^{1-\frac2q} = (dr_i^\frac{p}{p-2} + \hw_i)^{\frac2p-1} \le \max\left(d^{1-\frac2p}r_i, \hw_i^{1-\frac2p}\right)^{-1} \le 2\left(d^{1-\frac2p}r_i + \hw_i^{1-\frac2p}\right)^{-1}. \]

Taking $p$-th roots of \eqref{eq:lastline} shows that $\|y\|_p \le O(d^\frac{p-2}{2p-2})$ as desired.
To bound the cost, note that line \ref{line:approxlewissmall} and \ref{line:returny} of \OracleSmall~(\cref{algo:oraclesmall}) call $\O(1)$ solves to $\mU^\top\mD\mU$ for diagonal matrices $\mD$ by \cref{lemma:reglevapprox}.
\end{proof}

\section*{Acknowledgements}
\label{sec:acknowledgements}

We thank Michael B. Cohen, Yair Carmon, Qijia Jiang, Yujia Jin, Yin Tat Lee, Kevin Tian, and Richard Peng for helpful discussions. We also would like to thank anonymous reviewers for several helpful comments in improving the presentation of the paper.

Aaron Sidford was supported in part by a Microsoft Research Faculty Fellowship, NSF CAREER Award CCF-1844855, NSF Grant CCF-1955039, a PayPal research award, and a Sloan Research Fellowship.
Yang P. Liu was supported by the Department of Defense (DoD) through the National Defense Science and Engineering Graduate Fellowship, and NSF CAREER Award CCF-1844855 and NSF Grant CCF-1955039.

{\small
\bibliographystyle{alpha}
\bibliography{refs}}

\newcommand{\etalchar}[1]{$^{#1}$}
\begin{thebibliography}{DMMW12}

\bibitem[ABKS21]{ABKS21}
Deeksha Adil, Brian Bullins, Rasmus Kyng, and Sushant Sachdeva.
\newblock Almost-linear-time weighted $\ell_p$-norm solvers in slightly dense
  graphs via sparsification.
\newblock In {\em {ICALP}}, volume 198 of {\em LIPIcs}, pages 9:1--9:15.
  Schloss Dagstuhl - Leibniz-Zentrum f{\"{u}}r Informatik, 2021.

\bibitem[ABS21]{ABS21}
Deeksha Adil, Brian Bullins, and Sushant Sachdeva.
\newblock Unifying width-reduced methods for quasi-self-concordant
  optimization.
\newblock {\em arXiv preprint arXiv:2107.02432}, 2021.

\bibitem[AH18]{AH18}
Naman Agarwal and Elad Hazan.
\newblock Lower bounds for higher-order convex optimization.
\newblock In {\em Conference On Learning Theory}, pages 774--792. PMLR, 2018.

\bibitem[AKPS19]{AKPS19}
Deeksha Adil, Rasmus Kyng, Richard Peng, and Sushant Sachdeva.
\newblock Iterative refinement for $\ell_p$-norm regression.
\newblock In {\em Proceedings of the Thirtieth Annual {ACM-SIAM} Symposium on
  Discrete Algorithms, {SODA} 2019, San Diego, California, USA, January 6-9,
  2019}, pages 1405--1424, 2019.

\bibitem[AMV20]{AMV20}
Kyriakos Axiotis, Aleksander M{\k{a}}dry, and Adrian Vladu.
\newblock Circulation control for faster minimum cost flow in unit-capacity
  graphs.
\newblock In {\em 61st {IEEE} Annual Symposium on Foundations of Computer
  Science, {FOCS} 2020, Durham, NC, USA, November 16-19, 2020}, pages 93--104,
  2020.

\bibitem[APS19]{APS19}
Deeksha Adil, Richard Peng, and Sushant Sachdeva.
\newblock Fast, provably convergent {IRLS} algorithm for p-norm linear
  regression.
\newblock In Hanna~M. Wallach, Hugo Larochelle, Alina Beygelzimer, Florence
  d'Alch{\'{e}}{-}Buc, Emily~B. Fox, and Roman Garnett, editors, {\em Advances
  in Neural Information Processing Systems 32: Annual Conference on Neural
  Information Processing Systems 2019, NeurIPS 2019, 8-14 December 2019,
  Vancouver, BC, Canada}, pages 14166--14177, 2019.

\bibitem[AS20]{AS20}
Deeksha Adil and Sushant Sachdeva.
\newblock Faster \emph{p}-norm minimizing flows, via smoothed \emph{q}-norm
  problems.
\newblock In {\em Proceedings of the 2020 {ACM-SIAM} Symposium on Discrete
  Algorithms, {SODA} 2020, Salt Lake City, UT, USA, January 5-8, 2020}, pages
  892--910, 2020.

\bibitem[ASS19]{ASS19}
Yossi Arjevani, Ohad Shamir, and Ron Shiff.
\newblock Oracle complexity of second-order methods for smooth convex
  optimization.
\newblock {\em Mathematical Programming}, 178(1):327--360, 2019.

\bibitem[Bac10]{Bach2010}
Francis Bach.
\newblock Self-concordant analysis for logistic regression.
\newblock {\em Electronic Journal of Statistics}, 4:384--414, 2010.

\bibitem[BCLL18]{BCLL18}
S{\'{e}}bastien Bubeck, Michael~B. Cohen, Yin~Tat Lee, and Yuanzhi Li.
\newblock An homotopy method for $\ell_p$ regression provably beyond
  self-concordance and in input-sparsity time.
\newblock In {\em Proceedings of the 50th Annual {ACM} {SIGACT} Symposium on
  Theory of Computing, {STOC} 2018, Los Angeles, CA, USA, June 25-29, 2018},
  pages 1130--1137, 2018.

\bibitem[BJL{\etalchar{+}}19]{BJLLS19}
S{\'{e}}bastien Bubeck, Qijia Jiang, Yin~Tat Lee, Yuanzhi Li, and Aaron
  Sidford.
\newblock Complexity of highly parallel non-smooth convex optimization.
\newblock In Hanna~M. Wallach, Hugo Larochelle, Alina Beygelzimer, Florence
  d'Alch{\'{e}}{-}Buc, Emily~B. Fox, and Roman Garnett, editors, {\em Advances
  in Neural Information Processing Systems 32: Annual Conference on Neural
  Information Processing Systems 2019, NeurIPS 2019, December 8-14, 2019,
  Vancouver, BC, Canada}, pages 13900--13909, 2019.

\bibitem[BLL{\etalchar{+}}21]{BLLSSSW21}
Jan van~den Brand, Yin~Tat Lee, Yang~P. Liu, Thatchaphol Saranurak, Aaron
  Sidford, Zhao Song, and Di~Wang.
\newblock Minimum cost flows, mdps, and $\ell_1$-regression in nearly linear
  time for dense instances.
\newblock In {\em {STOC} '21: 53rd Annual {ACM} {SIGACT} Symposium on Theory of
  Computing, Virtual Event, Italy, June 21-25, 2021}, pages 859--869, 2021.

\bibitem[BLN{\etalchar{+}}20]{BLNPSSSW20}
Jan van~den Brand, Yin~Tat Lee, Danupon Nanongkai, Richard Peng, Thatchaphol
  Saranurak, Aaron Sidford, Zhao Song, and Di~Wang.
\newblock Bipartite matching in nearly-linear time on moderately dense graphs.
\newblock In {\em 61st {IEEE} Annual Symposium on Foundations of Computer
  Science, {FOCS} 2020, Durham, NC, USA, November 16-19, 2020}, pages 919--930,
  2020.
\newblock Available at \url{https://arxiv.org/pdf/2101.05719.pdf}.

\bibitem[BLSS20]{BLSS20}
Jan van~den Brand, Yin~Tat Lee, Aaron Sidford, and Zhao Song.
\newblock Solving tall dense linear programs in nearly linear time.
\newblock In {\em STOC}. \url{https://arxiv.org/pdf/2002.02304.pdf}, 2020.

\bibitem[BP19]{BP19}
Brian Bullins and Richard Peng.
\newblock Higher-order accelerated methods for faster non-smooth optimization.
\newblock {\em arXiv preprint arXiv:1906.01621}, 2019.

\bibitem[Bra20]{B20}
Jan van~den Brand.
\newblock A deterministic linear program solver in current matrix
  multiplication time.
\newblock In {\em Proceedings of the Fourteenth Annual ACM-SIAM Symposium on
  Discrete Algorithms}, pages 259--278. SIAM, 2020.

\bibitem[Bul18]{Bullins18}
Brian Bullins.
\newblock Fast minimization of structured convex quartics.
\newblock {\em arXiv preprint arXiv:1812.10349}, 2018.

\bibitem[CCLY19]{CCLY19}
Michael~B. Cohen, Ben Cousins, Yin~Tat Lee, and Xin Yang.
\newblock A near-optimal algorithm for approximating the john ellipsoid.
\newblock In {\em Conference on Learning Theory, {COLT} 2019, 25-28 June 2019,
  Phoenix, AZ, {USA}}, pages 849--873, 2019.

\bibitem[CDMI{\etalchar{+}}16]{CDMMMW16}
Kenneth~L Clarkson, Petros Drineas, Malik Magdon-Ismail, Michael~W Mahoney,
  Xiangrui Meng, and David~P Woodruff.
\newblock The fast cauchy transform and faster robust linear regression.
\newblock {\em SIAM Journal on Computing}, 45(3):763--810, 2016.

\bibitem[CJJ{\etalchar{+}}20]{CJJJLST20}
Yair Carmon, Arun Jambulapati, Qijia Jiang, Yujia Jin, Yin~Tat Lee, Aaron
  Sidford, and Kevin Tian.
\newblock Acceleration with a ball optimization oracle.
\newblock In {\em Advances in Neural Information Processing Systems 33: Annual
  Conference on Neural Information Processing Systems 2020, NeurIPS 2020,
  December 6-12, 2020, virtual}, 2020.

\bibitem[CJJS21]{CJJS21}
Yair Carmon, Arun Jambulapati, Yujia Jin, and Aaron Sidford.
\newblock Thinking inside the ball: Near-optimal minimization of the maximal
  loss.
\newblock {\em arXiv preprint arXiv:2105.01778}, 2021.

\bibitem[CKM{\etalchar{+}}11]{CKMST11}
Paul Christiano, Jonathan~A. Kelner, Aleksander M{\k{a}}dry, Daniel~A.
  Spielman, and Shang{-}Hua Teng.
\newblock Electrical flows, laplacian systems, and faster approximation of
  maximum flow in undirected graphs.
\newblock In Lance Fortnow and Salil~P. Vadhan, editors, {\em Proceedings of
  the 43rd {ACM} Symposium on Theory of Computing, {STOC} 2011, San Jose, CA,
  USA, 6-8 June 2011}, pages 273--282. {ACM}, 2011.

\bibitem[Cla05]{Clarkson05}
Kenneth~L Clarkson.
\newblock Subgradient and sampling algorithms for l 1 regression.
\newblock 2005.

\bibitem[CLS19]{CLS19}
Michael~B Cohen, Yin~Tat Lee, and Zhao Song.
\newblock Solving linear programs in the current matrix multiplication time.
\newblock In {\em STOC}. \url{https://arxiv.org/pdf/1810.07896}, 2019.

\bibitem[CMMP13]{CMMP13}
Hui~Han Chin, Aleksander M{\k{a}}dry, Gary~L. Miller, and Richard Peng.
\newblock Runtime guarantees for regression problems.
\newblock In Robert~D. Kleinberg, editor, {\em Innovations in Theoretical
  Computer Science, {ITCS} '13, Berkeley, CA, USA, January 9-12, 2013}, pages
  269--282. {ACM}, 2013.

\bibitem[CMSV17]{CMSV17}
Michael~B Cohen, Aleksander M{\k{a}}dry, Piotr Sankowski, and Adrian Vladu.
\newblock Negative-weight shortest paths and unit capacity minimum cost flow in
  ${O}(m^{10/7} \log {W})$ time.
\newblock In {\em Proceedings of the Twenty-Eighth Annual ACM-SIAM Symposium on
  Discrete Algorithms (SODA)}, pages 752--771. SIAM, 2017.

\bibitem[CP15]{CP15}
Michael~B. Cohen and Richard Peng.
\newblock $\ell_p$ row sampling by lewis weights.
\newblock In Rocco~A. Servedio and Ronitt Rubinfeld, editors, {\em Proceedings
  of the Forty-Seventh Annual {ACM} on Symposium on Theory of Computing, {STOC}
  2015, Portland, OR, USA, June 14-17, 2015}, pages 183--192. {ACM}, 2015.

\bibitem[CW13]{CW13}
Kenneth~L. Clarkson and David~P. Woodruff.
\newblock Low rank approximation and regression in input sparsity time.
\newblock In {\em {STOC}}, pages 81--90. {ACM}, 2013.

\bibitem[CWW19]{CWW19}
Kenneth Clarkson, Ruosong Wang, and David Woodruff.
\newblock Dimensionality reduction for tukey regression.
\newblock In {\em International Conference on Machine Learning}, pages
  1262--1271. PMLR, 2019.

\bibitem[DDH{\etalchar{+}}09]{DDHR09}
Anirban Dasgupta, Petros Drineas, Boulos Harb, Ravi Kumar, and Michael~W
  Mahoney.
\newblock Sampling algorithms and coresets for $\ell_p$ regression.
\newblock {\em SIAM Journal on Computing}, 38(5):2060--2078, 2009.

\bibitem[DLS18]{DLS18}
David Durfee, Kevin~A. Lai, and Saurabh Sawlani.
\newblock $\ell_1$ regression using lewis weights preconditioning and
  stochastic gradient descent.
\newblock In S{\'{e}}bastien Bubeck, Vianney Perchet, and Philippe Rigollet,
  editors, {\em Conference On Learning Theory, {COLT} 2018, Stockholm, Sweden,
  6-9 July 2018}, volume~75 of {\em Proceedings of Machine Learning Research},
  pages 1626--1656. {PMLR}, 2018.

\bibitem[DMMW12]{DMMW12}
Petros Drineas, Malik Magdon{-}Ismail, Michael~W. Mahoney, and David~P.
  Woodruff.
\newblock Fast approximation of matrix coherence and statistical leverage.
\newblock {\em J. Mach. Learn. Res.}, 13:3475--3506, 2012.

\bibitem[EV19]{EV19}
Alina Ene and Adrian Vladu.
\newblock Improved convergence for $\ell_1$ and $\ell_\infty$ regression via
  iteratively reweighted least squares.
\newblock In {\em International Conference on Machine Learning}, pages
  1794--1801. PMLR, 2019.

\bibitem[GDG{\etalchar{+}}19]{GDGVSU0W19}
Alexander~V. Gasnikov, Pavel~E. Dvurechensky, Eduard~A. Gorbunov, Evgeniya~A.
  Vorontsova, Daniil Selikhanovych, C{\'{e}}sar~A. Uribe, Bo~Jiang, Haoyue
  Wang, Shuzhong Zhang, S{\'{e}}bastien Bubeck, Qijia Jiang, Yin~Tat Lee,
  Yuanzhi Li, and Aaron Sidford.
\newblock Near optimal methods for minimizing convex functions with lipschitz
  {\textdollar}p{\textdollar}-th derivatives.
\newblock In Alina Beygelzimer and Daniel Hsu, editors, {\em Conference on
  Learning Theory, {COLT} 2019, 25-28 June 2019, Phoenix, AZ, {USA}}, volume~99
  of {\em Proceedings of Machine Learning Research}, pages 1392--1393. {PMLR},
  2019.

\bibitem[GPV21]{GPV21}
Mehrdad Ghadiri, Richard Peng, and Santosh~S Vempala.
\newblock Sparse regression faster than $d^\omega$.
\newblock {\em arXiv preprint arXiv:2109.11537}, 2021.

\bibitem[JSWZ21]{JSWZ21}
Shunhua Jiang, Zhao Song, Omri Weinstein, and Hengjie Zhang.
\newblock A faster algorithm for solving general lps.
\newblock In {\em {STOC} '21: 53rd Annual {ACM} {SIGACT} Symposium on Theory of
  Computing, Virtual Event, Italy, June 21-25, 2021}, pages 823--832, 2021.

\bibitem[Kat20]{Kat20}
Tarun Kathuria.
\newblock A potential reduction inspired algorithm for exact max flow in almost
  $\widetilde{O}(m^{4/3})$ time.
\newblock {\em arXiv preprint arXiv:2009.03260}, 2020.

\bibitem[KLS20]{KLS20}
Tarun Kathuria, Yang~P Liu, and Aaron Sidford.
\newblock Unit capacity maxflow in almost $o(m^{4/3})$ time.
\newblock In {\em 2020 IEEE 61st Annual Symposium on Foundations of Computer
  Science (FOCS)}, pages 119--130. IEEE, 2020.

\bibitem[KMP12]{KMP12}
Jonathan~A. Kelner, Gary~L. Miller, and Richard Peng.
\newblock Faster approximate multicommodity flow using quadratically coupled
  flows.
\newblock In {\em Proceedings of the 44th Symposium on Theory of Computing
  Conference, {STOC} 2012, New York, NY, USA, May 19 - 22, 2012}, pages 1--18,
  2012.

\bibitem[KPSW19]{KPSW19}
Rasmus Kyng, Richard Peng, Sushant Sachdeva, and Di~Wang.
\newblock Flows in almost linear time via adaptive preconditioning.
\newblock In {\em Proceedings of the 51st Annual {ACM} {SIGACT} Symposium on
  Theory of Computing, {STOC} 2019, Phoenix, AZ, USA, June 23-26, 2019.}, pages
  902--913, 2019.

\bibitem[Lew78]{Lewis78}
D.~Lewis.
\newblock Finite dimensional subspaces of $l_p$.
\newblock {\em Studia Mathematica}, 63(2):207--212, 1978.

\bibitem[LFN18]{relsmooth}
Haihao Lu, Robert~M. Freund, and Yurii~E. Nesterov.
\newblock Relatively smooth convex optimization by first-order methods, and
  applications.
\newblock {\em {SIAM} J. Optim.}, 28(1):333--354, 2018.

\bibitem[LS15]{LS15}
Yin~Tat Lee and Aaron Sidford.
\newblock Efficient inverse maintenance and faster algorithms for linear
  programming.
\newblock In {\em {IEEE} 56th Annual Symposium on Foundations of Computer
  Science, {FOCS} 2015, Berkeley, CA, USA, 17-20 October, 2015}, pages
  230--249, 2015.

\bibitem[LS19]{LS19}
Yin~Tat Lee and Aaron Sidford.
\newblock Solving linear programs with sqrt (rank) linear system solves.
\newblock {\em arXiv preprint arXiv:1910.08033}, 2019.

\bibitem[LS20a]{LS20b}
Yang~P Liu and Aaron Sidford.
\newblock Faster divergence maximization for faster maximum flow.
\newblock {\em arXiv preprint arXiv:2003.08929}, 2020.

\bibitem[LS20b]{LS20a}
Yang~P. Liu and Aaron Sidford.
\newblock Faster energy maximization for faster maximum flow.
\newblock In {\em Proccedings of the 52nd Annual {ACM} {SIGACT} Symposium on
  Theory of Computing, {STOC} 2020, Chicago, IL, USA, June 22-26, 2020}, pages
  803--814, 2020.

\bibitem[LSZ19]{LSZ19}
Yin~Tat Lee, Zhao Song, and Qiuyi Zhang.
\newblock Solving empirical risk minimization in the current matrix
  multiplication time.
\newblock In {\em COLT}. \url{https://arxiv.org/pdf/1905.04447}, 2019.

\bibitem[LWK08]{LWK08}
Chih-Jen Lin, Ruby~C Weng, and S~Sathiya Keerthi.
\newblock Trust region newton method for large-scale logistic regression.
\newblock {\em Journal of Machine Learning Research}, 9(4), 2008.

\bibitem[M{\k{a}}d13]{Mad13}
Aleksander M{\k{a}}dry.
\newblock Navigating central path with electrical flows: From flows to
  matchings, and back.
\newblock In {\em 54th Annual {IEEE} Symposium on Foundations of Computer
  Science, {FOCS} 2013, 26-29 October, 2013, Berkeley, CA, {USA}}, pages
  253--262. {IEEE} Computer Society, 2013.

\bibitem[M{\k{a}}d16]{Mad16}
Aleksander M{\k{a}}dry.
\newblock Computing maximum flow with augmenting electrical flows.
\newblock In Irit Dinur, editor, {\em {IEEE} 57th Annual Symposium on
  Foundations of Computer Science, {FOCS} 2016, 9-11 October 2016, Hyatt
  Regency, New Brunswick, New Jersey, {USA}}, pages 593--602. {IEEE} Computer
  Society, 2016.

\bibitem[MM13a]{MM13b}
Xiangrui Meng and Michael~W. Mahoney.
\newblock Low-distortion subspace embeddings in input-sparsity time and
  applications to robust linear regression.
\newblock In {\em {STOC}}, pages 91--100. {ACM}, 2013.

\bibitem[MM13b]{MM13}
Xiangrui Meng and Michael~W. Mahoney.
\newblock Robust regression on mapreduce.
\newblock In {\em Proceedings of the 30th International Conference on Machine
  Learning, {ICML} 2013, Atlanta, GA, USA, 16-21 June 2013}, volume~28 of {\em
  {JMLR} Workshop and Conference Proceedings}, pages 888--896. JMLR.org, 2013.

\bibitem[MS13]{MS13}
Renato~DC Monteiro and Benar~Fux Svaiter.
\newblock An accelerated hybrid proximal extragradient method for convex
  optimization and its implications to second-order methods.
\newblock {\em SIAM Journal on Optimization}, 23(2):1092--1125, 2013.

\bibitem[Nes83]{Nesterov83}
Yurii~E Nesterov.
\newblock A method for solving the convex programming problem with convergence
  rate $o(1/k^2)$.
\newblock In {\em Dokl. akad. nauk Sssr}, volume 269, pages 543--547, 1983.

\bibitem[Nes09]{Nesterov09}
Yurii~E. Nesterov.
\newblock Unconstrained convex minimization in relative scale.
\newblock {\em Math. Oper. Res.}, 34(1):180--193, 2009.

\bibitem[Nes19]{Nesterov19}
Yurii Nesterov.
\newblock Implementable tensor methods in unconstrained convex optimization.
\newblock {\em Mathematical Programming}, pages 1--27, 2019.

\bibitem[PV21]{PV21}
Richard Peng and Santosh~S. Vempala.
\newblock Solving sparse linear systems faster than matrix multiplication.
\newblock In {\em {SODA}}, pages 504--521. {SIAM}, 2021.

\bibitem[Ren88]{Ren88}
James Renegar.
\newblock A polynomial-time algorithm, based on newton's method, for linear
  programming.
\newblock {\em Math. Program.}, 40(1-3):59--93, 1988.

\bibitem[SS11]{SS11}
Daniel~A. Spielman and Nikhil Srivastava.
\newblock Graph sparsification by effective resistances.
\newblock {\em {SIAM} J. Comput.}, 40(6):1913--1926, 2011.

\bibitem[Vai89]{V89}
Pravin~M. Vaidya.
\newblock Speeding-up linear programming using fast matrix multiplication
  (extended abstract).
\newblock In {\em 30th {IEEE} Annual Symposium on Foundations of Computer
  Science, {FOCS} 1989, Research Triangle Park, NC, USA, October 30 - November
  1, 1989}, pages 332--337. {IEEE} Computer Society, 1989.

\bibitem[Vai90]{V90}
Pravin~M Vaidya.
\newblock An algorithm for linear programming which requires ${O} (((m+ n) n^{
  2} +(m+ n)^{ 1.5} n) {L})$ arithmetic operations.
\newblock {\em Mathematical Programming}, 47(1-3):175--201, 1990.

\bibitem[Woj96]{Banach}
Przemyslaw Wojtaszczyk.
\newblock {\em Banach spaces for analysts}.
\newblock Number~25. Cambridge University Press, 1996.

\bibitem[WZ13]{WZ13}
David~P. Woodruff and Qin Zhang.
\newblock Subspace embeddings and $\ell_p$-regression using exponential random
  variables.
\newblock In {\em {COLT} 2013 - The 26th Annual Conference on Learning Theory,
  June 12-14, 2013, Princeton University, NJ, {USA}}, pages 546--567, 2013.

\bibitem[YCRM16]{YCRM16}
Jiyan Yang, Yinlam Chow, Christopher R{\'{e}}, and Michael~W. Mahoney.
\newblock Weighted {SGD} for $\ell_p$ regression with randomized
  preconditioning.
\newblock In Robert Krauthgamer, editor, {\em Proceedings of the Twenty-Seventh
  Annual {ACM-SIAM} Symposium on Discrete Algorithms, {SODA} 2016, Arlington,
  VA, USA, January 10-12, 2016}, pages 558--569. {SIAM}, 2016.

\end{thebibliography}

\appendix

\section{Lewis Weight Computation}
\label{sec:proofs}

\subsection{Proof of \cref{lemma:approxplarge}}
\label{proofs:approxplarge}
Our approach is based on \cite{CCLY19}, which approximates the John ellipse by performing $\O(1)$ iterations of the fixed point map for computing Lewis weights, and then averaging the results of the iterations. To speed up the iterations (as was done in previous work), we use that leverage scores can be efficiently found by sketching techniques \cite{SS11,DMMW12}. 
\begin{lemma}[Leverage score estimates \cite{SS11,DMMW12}]
\label{lemma:approxlev}
There is an algorithm $\ApproxLev(\mA, \eps)$ that for matrix $\mA \in \R^{n \times d}$ and $\eps \in (0, 1)$ returns a vector $w \in \R^n_{>0}$ satisfying $(1-\eps)w_i \le \sigma(\mA)_i \le (1+\eps)w_i$ for all $i \in [n]$. The algorithm succeeds whp. and runs in $\O(\eps^{-2} \cdot \T_\mA)$ time.
\end{lemma}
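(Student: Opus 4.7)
The plan is to reduce the task to $\O(\eps^{-2})$ linear system solves on $\mA^\top \mA$ via a Johnson--Lindenstrauss sketch, which is the standard Spielman--Srivastava / Drineas--Mahoney--Woodruff recipe. First I would rewrite the leverage scores in a form amenable to sketching. Let $\Pi \defeq \mA(\mA^\top\mA)^{-1}\mA^\top$ be the orthogonal projection onto the column span of $\mA$. Since $\Pi$ is symmetric and idempotent,
\[
\sigma(\mA)_i = a_i^\top (\mA^\top\mA)^{-1} a_i = e_i^\top \Pi e_i = e_i^\top \Pi^\top \Pi e_i = \norm{\Pi e_i}_2^2.
\]
Thus computing approximate leverage scores is equivalent to computing $(1\pm\eps)$-multiplicative estimates of the squared norms of the $n$ vectors $\{\Pi e_i\}_{i\in[n]}$.

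Next I would apply a standard Johnson--Lindenstrauss lemma: draw a random sketching matrix $\mG \in \R^{k\times n}$ with $k = \Theta(\eps^{-2}\log n)$ (e.g.\ i.i.d.\ sub-Gaussian entries scaled by $1/\sqrt{k}$). Then with probability at least $1 - n^{-C}$, we have $\norm{\mG v}_2^2 \in (1\pm\eps)\norm{v}_2^2$ simultaneously for all $n$ vectors $v = \Pi e_i$ by a union bound. Setting $\widehat w_i \defeq \norm{\mG \Pi e_i}_2^2$, this yields the desired approximation guarantee $\widehat w_i \in (1\pm\eps)\sigma(\mA)_i$ whp.

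For the running time, I would compute the matrix $\mB \defeq \mG \Pi \in \R^{k\times n}$ row by row without ever forming $\Pi$ explicitly. For each row $g \in \R^n$ of $\mG$, compute $\mA^\top g$, solve the linear system $(\mA^\top\mA) y = \mA^\top g$ in time $\T_\mA$ (this is the special case $\mD = \mI$ of the solver hypothesis), and then form $\mA y = \Pi^\top g = \Pi g$ to obtain the corresponding row of $\mB$. Then $\widehat w_i$ is just the squared $\ell_2$-norm of the $i$-th column of $\mB$, computable in $O(nk)$ additional time. Thus the total cost is $k = \O(\eps^{-2})$ linear system solves in $\mA^\top\mA$, i.e.\ $\O(\eps^{-2} \cdot \T_\mA)$ time in total.

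The only nontrivial ingredient is the JL guarantee, but this is entirely off-the-shelf and the union bound over $n$ vectors is what costs the logarithmic factor absorbed by $\widetilde O(\cdot)$. Correctness of the identity $\sigma(\mA)_i = \norm{\Pi e_i}_2^2$ and the efficient column-evaluation trick are standard, so no step should present a real obstacle; the proof amounts to recording these observations and citing the JL lemma and the solver hypothesis.
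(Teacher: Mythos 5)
Your proposal is correct and reproduces the standard Johnson--Lindenstrauss sketching argument from the works the paper cites (\cite{SS11,DMMW12}); the paper itself states this lemma as an imported fact and offers no proof, so there is nothing to compare against beyond the citation. One small cosmetic point: you show $\widehat w_i \in (1\pm\eps)\sigma(\mA)_i$, whereas the lemma is phrased as $\sigma(\mA)_i \in (1\pm\eps)w_i$; these are equivalent up to replacing $\eps$ by $O(\eps)$, which is absorbed by the $\O(\cdot)$, but it is worth a sentence to flip the inequality.
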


The algorithm that we use to show \cref{lemma:approxplarge} is given below.

\begin{algorithm}[ht]
\caption{$\ApproxLargeWeights(\mA, p).$
	 Given matrix $\mA \in \R^{n \times d}$ and $p \ge 2$, returns weights $w \in \R^n$ satisfying $\|w\|_1 \le 2d$ and $w \ge \sigma(\mW^{1/2-1/p}\mA).$ \label{algo:approxplarge}}
 \label{alg:approx_large}
$w_i^{(0)} \assign d/n$ for all $i \in [n]$. \\
\For{$k = 0, 1, \dots, T-1$}{
	$w^{(k+1)} \assign \ApproxLev((\mW^{(k)})^{1/2-1/p}\mA, 1/10)$ . \Comment{\cref{lemma:approxlev}}
}
\Return $w \assign \frac{3}{2T}\sum_{k=1}^T w^{(k)}.$
\end{algorithm}
 
Our analysis uses the convexity of a potential function on the inverse quadratic form.
\begin{lemma}
\label{lemma:potential}
For matrix $\mA \in \R^{n \times d}$ and $p \ge 2$, the function $\phi_i : \R_{\ge0}^n \to \R$ is convex, where
\[ \phi_i(v) = \log\left(v_i^{-2/p}a_i^\top \left(\mA^\top \diag{v}^{1-2/p} \mA\right)^{-1} a_i\right). \]
\end{lemma}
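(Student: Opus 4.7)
The plan is to decompose $\phi_i(v) = -\tfrac{2}{p} \log v_i + \psi(v)$, where $\psi(v) \defeq \log\bigl(a_i^\top (\mA^\top \diag{v}^{1-2/p} \mA)^{-1} a_i\bigr)$, and argue each summand is convex. The first summand is convex because $-\log$ is convex on $\R_{>0}$. For $\psi$, I would use the variational characterization
\[
a_i^\top \mathbf{B}^{-1} a_i \;=\; \sup_{x \in \R^d,\, \mathbf{B} x \ne 0} \frac{(a_i^\top x)^2}{x^\top \mathbf{B} x}
\]
applied to $\mathbf{B} = \mA^\top \diag{v}^{1-2/p} \mA$. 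Setting $q \defeq 1 - 2/p \in [0, 1)$ (using $p \ge 2$), this rewrites $\psi$ as
\[
\psi(v) \;=\; \sup_{x \in \R^d} \Bigl[\, 2 \log|a_i^\top x| \;-\; \log \sum_{j \in [n]} v_j^q (a_j^\top x)^2 \,\Bigr].
\]
Since a pointwise supremum of convex functions is convex, it then suffices to show that for every fixed $x \in \R^d$ the function $v \mapsto -\log \sum_j v_j^q (a_j^\top x)^2$ is convex on $\R_{>0}^n$.

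The heart of the proof is a direct Hessian computation for this single function. Writing $c_j \defeq (a_j^\top x)^2 \ge 0$ and $g(v) \defeq \sum_j v_j^q c_j$, a routine differentiation yields
\[
\nabla^2\bigl(-\log g\bigr) \;=\; \frac{1}{g(v)^2}\,\alpha \alpha^\top \;+\; \diag{\beta},
\]
where $\alpha_j \defeq q v_j^{q-1} c_j$ and $\beta_j \defeq q(1-q) v_j^{q-2} c_j / g(v)$. The rank-one piece $\alpha \alpha^\top/g(v)^2$ is automatically PSD as an outer product, while $\beta \ge 0$ coordinatewise because $q(1-q) \ge 0$ for $q \in [0,1]$. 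Thus the Hessian is PSD and $-\log g$ is convex in $v$. In the edge case $p = 2$, we have $q = 0$ and $g$ is constant in $v$, so the claim is trivial.

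Combining these steps, $\psi$ is convex as a supremum of convex functions, and hence $\phi_i$ is convex as a sum of two convex functions; the extension to the closure $\R_{\ge 0}^n$ is handled by allowing the value $+\infty$ on the boundary (where either $v_i = 0$ or the matrix $\mA^\top \diag{v}^{1-2/p}\mA$ degenerates), and lower-semicontinuity of the supremum preserves convexity. The only real obstacle is bookkeeping in the Hessian computation, specifically verifying that the mixed partial derivatives assemble exactly into a rank-one outer product so that the sign of $q(1-q)$ alone controls the diagonal correction.
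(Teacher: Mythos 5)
Your proof is correct, and it follows the same conceptual route as the paper while being more self-contained. The paper deduces the result from two cited/observed facts: (a) log-convexity of $v \mapsto a_i^\top(\mA^\top \diag{v}\mA)^{-1}a_i$, imported from \cite[Lemma 3.4]{CCLY19} (the paper's prose says ``convex,'' but its second displayed inequality genuinely requires the logarithmic version, which is what the cited lemma supplies), and (b) concavity of $t \mapsto t^{1-2/p}$ combined with the fact that $\mB \mapsto a_i^\top \mB^{-1}a_i$ is decreasing on the PSD cone. Your argument unfolds the cited lemma from scratch: the variational identity $a_i^\top\mB^{-1}a_i = \sup_x (a_i^\top x)^2/(x^\top\mB x)$ exposes $\psi$ as a supremum over $x$ of $v\mapsto 2\log|a_i^\top x| - \log\sum_j v_j^q (a_j^\top x)^2$, and your Hessian formula $\tfrac{1}{g^2}\alpha\alpha^\top + \diag{\beta}$ (which I checked and is correct) shows that both of the paper's ingredients collapse to the single scalar inequality $q(1-q)\ge 0$ for $q=1-2/p\in[0,1)$: the rank-one outer product plays the role of the log-convexity of the inverse quadratic form, while the nonnegative diagonal plays the role of the concavity of $t^q$. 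The paper's route is shorter because it delegates the heavy lifting to CCLY19; yours requires no external lemma and makes the mechanism fully explicit. Both are sound, including your handling of the boundary via extended-real values and lower semicontinuity.
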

\begin{proof}
To start, we use \cite[Lemma 3.4]{CCLY19}, which says that the function
\begin{align} v \to \left(a_i^\top \left(\mA^\top \diag{v} \mA\right)^{-1} a_i\right) \label{eq:factabove} \end{align}
is convex. Note that the function $\log(v_i^{-2/p}) = -2 \log(v_i)/p$ is convex, hence it suffices to argue the convexity of the function
\[ v \to \log\left(a_i^\top \left(\mA^\top \diag{v}^{1-2/p} \mA\right)^{-1} a_i\right). \]
To show this, note that for any $u, v \in \R_{\ge0}^n$ and $\lambda \in [0, 1]$ we have
\begin{align*}
&\log\left(a_i^\top \left(\mA^\top \diag{\lambda u + (1-\lambda)v}^{1-2/p} \mA\right)^{-1} a_i\right) \\
&\le \log\left(a_i^\top \left(\mA^\top \left(\lambda\diag{u}^{1-2/p} + (1-\lambda)\diag{v}^{1-2/p}\right) \mA\right)^{-1} a_i\right) \\
&\le \lambda \log\left(a_i^\top \left(\mA^\top \diag{u}^{1-2/p} \mA\right)^{-1} a_i\right) + (1-\lambda) \log\left(a_i^\top \left(\mA^\top \diag{v}^{1-2/p} \mA\right)^{-1} a_i\right)
\end{align*}
as desired. The first inequality uses that the function $x \to x^{1-2/p}$ is concave for $p \ge 2$, and the final step uses the convexity of  \eqref{eq:factabove}.
\end{proof}

\begin{proof}[Proof of \cref{lemma:approxplarge}]
We show that for $T = 10 \log n$, \cref{algo:approxplarge} returns valid weights $w = \frac{3}{2T}\sum_{k=1}^T w^{(k)}.$ First, we show that $\|w\|_1 \le 2d.$ This follows because $\|w^{(k)}\|_1 \le 1.1d$ by \cref{lemma:approxlev}, so $\|w\|_1 \le 1.5 \cdot 1.1d \le 2d.$ Additionally \[ \|w\|_1 \ge \sum_{i \in [n]} \sigma_i(\mA^\top \mW^{1-\frac2p}\mA) = \mathrm{rank}(\mA) = d \] by our assumption that $\mA$ is full-rank.
Now we use \cref{lemma:potential} to get that
\begin{align*}
\phi_i\left(\frac{1}{T}\sum_{k \in [T]} w^{(k)}\right) &\le \frac{1}{T} \sum_{k \in [T]} \phi_i(w^{(k)}) = \frac{1}{T} \sum_{k \in [T]} \log\left(\sigma((\mW^{(k)})^{1/2-1/p}\mA)/w_i^{(k)} \right) \\ &\le \frac{1}{T} \sum_{k \in [T]} \left(\log(1.1) + \log(w^{(k+1)}_i / w^{(k)}_i) \right) \\
&\le 0.1 + \frac{1}{T}\log(w^{(T)}_i / w^{(0)}_i) \le 0.1 + \frac{1}{T} \log(1.1n/d) \le 0.2,
\end{align*}
where we have used \cref{lemma:approxlev}, the definition of $w^{(k+1)}$ in terms of $w^{(k)}$, and $w^{(T)}_i \le 1.1$ and the choice $w^{(0)}_i = d/n$ in the final line. Thus 
\[ w_i \ge 3/2 \cdot \exp(-0.2) \sigma\left(\mW^{1/2-1/p}\mA\right) \ge \sigma\left(\mW^{1/2-1/p}\mA\right). \]
To analyze the runtime, \cref{lemma:approxlev} shows that each of the $T$ iterations solves $O(\log n)$ systems in $\mA^\top (\mW^{(k)})^{1-2/p}\mA$, for a total of $O(\log^2 n)$ solves in $\mA^\top \mD \mA$ for diagonal matrices $\mD$.
\end{proof}

\subsection{Proof of \cref{lemma:reglevapprox}}
\label{proofs:reglevapprox}

The algorithm that shows \cref{lemma:reglevapprox} is given below. It is based on adapting the contractive map for computing Lewis weights in \cite{CP15} to regularized Lewis weights. Similar ideas have also appeared in \cite{BLLSSSW21}.
\begin{algorithm}[ht]
\caption{$\ApproxRegLewis(\mA, c, q).$ Given matrix $\mA \in \R^{n \times d}, c \in \R^n, q \le 2$, returns approximate $c$-regularized $\ell_q$ Lewis weights. \label{algo:reglevapprox}}
$w_i^{(0)} \assign 1$ for all $i \in [n]$. \\
\For{$k = 0, 1, \dots, T-1 = O(\log \log n)$}{
	$\sigma^{(k)} \assign \ApproxLev((\mC + \mW^{(k)})^{\frac12-\frac1q}\mA, 1/50)$. \Comment{\cref{lemma:approxlev}} \label{line:firstapprox}
	$w^{(k+1)} \assign \left(\left(\mC + \mW^{(k)}\right)^{\frac2q-1}\sigma^{(k)} + \left(\mC + \mW^{(k)}\right)^{\frac2q-1}c\right)^{q/2} - c.$ \label{line:contract}
}
\Return $\ApproxLev((\mC + \mW^{(T)})^{\frac12-\frac1q}\mA, 1/50).$ \label{line:finalapprox}
\end{algorithm}
The key point is that the operation in line \ref{line:contract} of \ApproxRegLewis~(\cref{algo:reglevapprox}) is a contractive map.
For the purposes of the lemma and proof below, we say that $a \approx_\nu b$ if $\nu^{-1}a \le b \le \nu b.$ This way $a \approx_1 b$ means that $a = b$.
\begin{lemma}[Contractive map]
\label{lemma:contract}
For matrix $\mA \in \R^{n \times d}, c \in \R^n, q \le 2$, let $w$ be the $c$-regularized $\ell_q$ Lewis weights. For $u \in \R_{\ge0}^n$, let $\mU = \diag{u}, \mC = \diag{c}$, $\sigma \approx_{51/50} \sigma\left((\mC + \mU)^{\frac12-\frac1q}\mA \right)$, and
\[ u^\new = \left(\left(\mC + \mU\right)^{\frac2q-1}\sigma + \left(\mC + \mU\right)^{\frac2q-1}c\right)^{q/2} - c. \]
If $\mU+\mC \approx_{\nu} \mW+\mC$ for some $\nu \ge 1$, then $\mC+\mU^\new \approx_{(51/50)^{q/2} \nu^{1-\frac{q}{2}}} \mC+\mW.$
\end{lemma}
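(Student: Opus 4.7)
The plan is to write both the update $c_i + u^\new_i$ and the exact fixed point $c_i + w_i$ in a common form that exposes their difference as a bounded multiplicative perturbation. Starting from the definition of $u^\new$, note that $((c_i+u_i)^{2/q-1}\sigma_i + (c_i+u_i)^{2/q-1}c_i)^{q/2} = (c_i+u_i)^{1-q/2}(\sigma_i+c_i)^{q/2}$, so
$$c_i + u^\new_i = (c_i+u_i)^{1-q/2}(\sigma_i + c_i)^{q/2}.$$
The sketching error $\sigma_i \approx_{51/50} \tilde\sigma_i$ (writing $\tilde\sigma_i := \sigma_i((\mC+\mU)^{1/2-1/q}\mA)$) contributes a factor of $(51/50)^{q/2}$, since adding the nonnegative quantity $c_i$ preserves multiplicative closeness, so it suffices to control the resulting expression with $\tilde\sigma_i$ in place of $\sigma_i$.

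Now I would unpack $\tilde\sigma_i = (c_i+u_i)^{1-2/q}\tilde\tau_i$ with $\tilde\tau_i := a_i^\top(\mA^\top(\mC+\mU)^{1-2/q}\mA)^{-1}a_i$. Collecting the powers of $(c_i+u_i)$, the outer $(c_i+u_i)^{1-q/2}$ cancels against the $(c_i+u_i)^{(1-2/q)(q/2)}$ extracted from $\tilde\sigma_i + c_i$, giving the clean identity
$$(c_i+u_i)^{1-q/2}(\tilde\sigma_i+c_i)^{q/2} = \bigl[c_i(c_i+u_i)^{2/q-1} + \tilde\tau_i\bigr]^{q/2}.$$
Applying the same manipulation to the true weights' fixed-point equation $w_i = (c_i+w_i)^{1-2/q}\tau_i$ (where $\tau_i := a_i^\top(\mA^\top(\mC+\mW)^{1-2/q}\mA)^{-1}a_i$) yields the matching identity $c_i + w_i = [c_i(c_i+w_i)^{2/q-1} + \tau_i]^{q/2}$.

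It then remains to show $c_i(c_i+u_i)^{2/q-1} + \tilde\tau_i \approx_{\nu^{2/q-1}} c_i(c_i+w_i)^{2/q-1} + \tau_i$, after which raising to the $(q/2)$-th power loses the claimed factor $\nu^{(2/q-1)(q/2)} = \nu^{1-q/2}$, and combining with the sketching loss of $(51/50)^{q/2}$ yields the lemma. Both summands follow directly from the hypothesis $\mC+\mU \approx_\nu \mC+\mW$: since $2/q-1 \ge 0$ for $q \in [1,2]$, we get $(c_i+u_i)^{2/q-1} \approx_{\nu^{2/q-1}} (c_i+w_i)^{2/q-1}$; and since $1-2/q \le 0$, we get $(\mC+\mU)^{1-2/q} \approx_{\nu^{2/q-1}} (\mC+\mW)^{1-2/q}$, which passes through the Gram inverse to yield $\tilde\tau_i \approx_{\nu^{2/q-1}} \tau_i$. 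The main conceptual step is finding the reformulation in the second paragraph; once both $c_i + u^\new_i$ and $c_i + w_i$ are cast as the $(q/2)$-th power of an inner sum of a regularization term plus a quadratic form, the rest is bookkeeping on signs of the exponents $2/q-1$ and $1-q/2$ to ensure the $\nu$-approximations point in the claimed direction.
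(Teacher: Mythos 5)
Your proof is correct and takes essentially the same approach as the paper's. The key step in both is to observe that $(\mC+\mU)^{2/q-1}\sigma((\mC+\mU)^{1/2-1/q}\mA)_i = a_i^\top(\mA^\top(\mC+\mU)^{1-2/q}\mA)^{-1}a_i$ (your $\tilde\tau_i$), that the corresponding quantity $(\mC+\mW)^{2/q-1}w_i$ (your $\tau_i$) approximates it within $\nu^{2/q-1}$ via the Gram inverse, that $(\mC+\mU)^{2/q-1}c$ approximates $(\mC+\mW)^{2/q-1}c$ by the same factor, and that summing these and raising to the $q/2$ power converts the loss to $\nu^{1-q/2}$; your presentation just names the intermediate quantities explicitly and states the algebraic cancellation of powers of $c_i+u_i$ more carefully.
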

\begin{proof}
Note that \[ \left[\left(\mC + \mU\right)^{\frac2q-1}\sigma\left((\mC + \mU)^{\frac12-\frac1q}\mA \right)\right]_i = a_i^\top\left(\mA^\top(\mC + \mU)^{1-\frac2q}\mA\right)^{-1}a_i. \]
Because $\mC + \mU \approx_{\nu} \mC + \mW$ and $\sigma \approx_{51/50} \sigma\left((\mC + \mU)^{\frac12-\frac1q}\mA\right)$ we get that
\[ \left(\mC + \mU\right)^{\frac2q-1}\sigma \approx_{51/50 \nu^{\frac2q-1}} \left(\mC + \mW\right)^{\frac2q-1}\sigma\left((\mC + \mW)^{\frac12-\frac1q}\mA \right) = \left(\mC + \mW\right)^{\frac2q-1}w. \]
Similarly, we know that $\left(\mC + \mU\right)^{\frac2q-1}c \approx_{\nu^{\frac2q-1}} \left(\mC + \mW\right)^{\frac2q-1}c.$ Summing and raising to the $q/2$ power we get that
\begin{align*}
u^\new + c &= \left(\left(\mC + \mU\right)^{\frac2q-1}\sigma + \left(\mC + \mU\right)^{\frac2q-1}c\right)^{q/2} \\
&\approx_{(51/50\nu^{\frac2q-1})^{q/2}} \left(\left(\mC + \mU\right)^{\frac2q-1}w + \left(\mC + \mW\right)^{\frac2q-1}c\right)^{q/2}
= w + c.
\end{align*}
This completes the proof.
\end{proof}
We can use this to show \cref{lemma:reglevapprox}.
\begin{proof}[Proof of \cref{lemma:reglevapprox}]
By the proof of \cite[Lemma 3.5]{CP15}, we have that $w^{(0)} + c \approx_{\nu_0} w+c$ for some $\nu_0 \le O(n^{1-q/2})$. Applying \cref{lemma:contract} to $u = w^{(k)}$ for $k = 0, \dots, T-1$ for $T = O(\log \log n)$ shows by induction that $c+w^{(T)} \approx_{51/50\nu_0^{(1-q/2)^T}} c+w$, so $c+w^{(T)} \approx_{41/40} c+w$ for $T = O(\log \log n)$ and any $q \ge 1/2$. Because
\[ \left[\sigma\left(\left(\mC + \mW^{(T)}\right)^{\frac12-\frac1q}\mA \right)\right]_i = (c_i + w_i^{(T)})^{1-\frac2q}a_i^\top\left(\mA^\top(\mC + \mW^{(T)})^{1-\frac2q}\mA\right)^{-1}a_i \approx_{(41/40)^{\frac4q-2}} w. \]
To finish, note that the algorithm returns a $51/50$ approximate leverage score in line \ref{line:finalapprox} of \ApproxRegLewis~(\cref{lemma:reglevapprox}), so the total approximation is at most $(41/40)^{\frac4q-2} \cdot 51/50 \le 1.1$.

To bound the computation cost, note that the only calls to solvers for $\mA^\top\mD\mA$ are in the approximate leverage score computation in lines \ref{line:firstapprox} and \ref{line:finalapprox} of \ApproxRegLewis~(\cref{lemma:reglevapprox}). Each call requires $\O(1)$ solves by \cref{lemma:approxlev}, and it is calls $T+1 = O(\log\log n)$ times, as desired.
\end{proof}

\section{Additional Proofs}
\label{proofs:cor12}

In this section we show the numerical inequalities in \cref{lemma:cor1,lemma:cor2}.

\begin{proof}[Proof of \cref{lemma:cor1}]
By \cref{lemma:refine} we get that $(a+b)^k - a^k \le ka^{k-1}b + 2k^2a^{k-2}b^2 + k^k b^k.$ Combining this with Young's inequality on the middle term $2k^2a^{k-2}b^2$, i.e.
\[ 2k^2 a^{k-1} b \leq 2k\left(\frac{k - 2}{k - 1} a^{k - 1} +  \frac{1}{k - 1} (kb)^{k - 1}\right) \leq 2ka^{k - 1}b + 2k^kb^k \] completes the proof.
\end{proof}

\begin{proof}[Proof of \cref{lemma:cor2}]
It suffices to show $(1+x)^k - 1 \le 4^k(x + x^k)$ for $x = b/a.$ If $x \le 1$ then note
\[ (1+x)^k - 1 = \int_0^x k(1+t)^{k-1} dt \le x \cdot \sup_{t \in [0, 1]} k(1+t)^{k-1} \le 2^{k-1}kx \le 4^k x \] for $k \ge 1$.
If $x \ge 1$ then $(1+x)^k \le (2x)^k \le 2^k x^k$ as desired.
\end{proof}
\section{Lewis Weights for $\ell_\infty$ Regression}
\label{sec:appendixinf}

Here, we argue that using $\ell_\infty$ Lewis weight overestimates (\cref{def:overestimate}) along with the computations and framework of \cite{CJJJLST20} directly give an algorithm for $\ell_\infty$-regression, proving \cref{thm:linfmain}. We use the notion of \emph{quasi-self-concordance} from \cite[Definition 10]{CJJJLST20}.

\begin{lemma}
\label{lemma:qsc}
Define $\lse_t: \R^n \to \R$ as $\lse_t(x) = t \log\left(\sum_{i\in[n]} \exp(x_i/t) \right).$ If $w \in \R^n_{\ge0}$ are $\ell_\infty$ Lewis weight overestimates (\cref{def:overestimate}) then $\lse_t(x)$ is $1/t$-smooth and $2/t$-quasi-self-concordant in the norm $\| \cdot \|_{\mA^\top \mW \mA}.$
\end{lemma}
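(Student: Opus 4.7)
I interpret the statement as asserting the claimed bounds for the function $F(y) = \lse_t(\mA y - b)$ on $\R^d$ (the shift $b$ plays no role in the derivatives). The strategy is to first obtain sharp pointwise bounds on the Hessian and third derivative of $\lse_t$ in terms of moments under the associated Gibbs distribution, then pull these through the linear map $\mA$ and finally convert $\ell_\infty$-norm control on $\R^n$ into $\|\cdot\|_{\mA^\top\mW\mA}$-norm control on $\R^d$ via \cref{lemma:weightkey} at $p = \infty$.

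\emph{Step 1 (Derivatives of $\lse_t$).} Let $p_i = e^{x_i/t}/\sum_j e^{x_j/t}$ be the Gibbs distribution induced by $x$, and let $I \sim p$. A standard computation (using that $\lse_t$ is a log-partition function whose $k$-th derivative evaluated on $(u,\dots,u)$ equals $t^{1-k}$ times the $k$-th cumulant of $u_I$) yields
\[
\nabla^2 \lse_t(x)[u,u] = \tfrac{1}{t}\,\mathrm{Var}_p(u_I), \qquad \nabla^3 \lse_t(x)[u,u,v] = \tfrac{1}{t^2}\,\mathbb{E}_p\!\left[(u_I-\bar u)^2(v_I - \bar v)\right],
\]
where $\bar u = \mathbb{E}_p[u_I]$ and $\bar v = \mathbb{E}_p[v_I]$.

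\emph{Step 2 (Scalar bounds).} Bounding variance by the uncentered second moment gives $\nabla^2 \lse_t(x)[u,u] \leq \tfrac{1}{t}\|u\|_\infty^2$, which is the classical $1/t$-smoothness of $\lse_t$ in the $\ell_\infty$ norm. For the third derivative, Holder yields
\[
\bigl|\nabla^3 \lse_t(x)[u,u,v]\bigr| \leq \tfrac{1}{t^2}\,\|v - \bar v \vec{1}\|_\infty \cdot \mathrm{Var}_p(u_I) \leq \tfrac{2}{t}\,\|v\|_\infty \cdot \nabla^2 \lse_t(x)[u,u],
\]
where the last step uses $|\bar v| \leq \|v\|_\infty$ to get $\|v-\bar v\vec{1}\|_\infty \leq 2\|v\|_\infty$.

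\emph{Step 3 (Composition and \cref{lemma:weightkey}).} For $u, v \in \R^d$, the chain rule gives $\nabla^2 F(y)[u,u] = \nabla^2 \lse_t(\mA y - b)[\mA u,\mA u]$ and $\nabla^3 F(y)[u,u,v] = \nabla^3 \lse_t(\mA y - b)[\mA u, \mA u, \mA v]$. Applying Step 2 with $\mA u, \mA v$ in place of $u, v$ and invoking \cref{lemma:weightkey} at $p=\infty$, namely $\|\mA z\|_\infty \leq \|\mW^{1/2}\mA z\|_2 = \|z\|_{\mA^\top \mW \mA}$ for all $z \in \R^d$, we obtain
\[
\nabla^2 F(y)[u,u] \leq \tfrac{1}{t}\,\|u\|_{\mA^\top \mW \mA}^2, \qquad \bigl|\nabla^3 F(y)[u,u,v]\bigr| \leq \tfrac{2}{t}\,\|v\|_{\mA^\top \mW \mA}\cdot \nabla^2 F(y)[u,u],
\]
which are exactly the claimed $1/t$-smoothness and $2/t$-quasi-self-concordance.

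The main obstacle is not computational (the scalar inequalities above are tight up to constants) but definitional: one must verify that the third-derivative form of quasi-self-concordance used above matches the convention of \cite{CJJJLST20} Definition~10 up to our constant of $2/t$. If their definition demands e.g.\ $|\nabla^3 F[u,u,v]| \le M\|v\|\,\nabla^2 F[u,u]$, then the argument above matches exactly; any other normalization can be absorbed by a constant-factor change in the bound on $\|v-\bar v\vec{1}\|_\infty$ in Step 2, without affecting the structure of the proof.
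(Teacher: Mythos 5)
Your proof is correct and takes essentially the same route as the paper: bound $\nabla^2\lse$ and $\nabla^3\lse$ pointwise in the $\ell_\infty$ norm, compose with the linear map $\mA$, and apply \cref{lemma:weightkey} at $p=\infty$ (i.e. $\|\mA z\|_\infty \le \|z\|_{\mA^\top\mW\mA}$) to pass to the $\|\cdot\|_{\mA^\top\mW\mA}$ norm. The only difference is cosmetic: you rederive the scalar $\ell_\infty$ bounds from the cumulant representation of the log-partition function, which makes the origin of the factor $2$ (from $\|v-\bar v\vec 1\|_\infty\le 2\|v\|_\infty$) explicit, and you spell out the chain-rule composition with $\mA$; the paper instead cites these same scalar computations from Appendix~G.1 of \cite{CJJJLST20} and leaves the composition implicit.
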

\begin{proof}
By scaling, it suffices to show the result for $t = 1$. The computations in the proof of \cite[Lemma 14]{CJJJLST20} in \cite[Appendix G.1]{CJJJLST20} shows that
$u^\top \g^2 \lse(x) u \le \|u\|_\infty^2.$ Now by \cref{lemma:weightkey} for $p = \infty$ we get $\|u\|_\infty \le \|u\|_{\mA^\top \mW \mA}$ which implies the smoothness claim.

To show quasi-self-concordance, we use the computations in the proof of \cite[Lemma 14]{CJJJLST20} in \cite[Appendix G.1]{CJJJLST20} to get
\[ \left|\g^3\lse(x)[u, u, h]\right| \le \|u\|_{\g^2\lse(x)}^2 \|h\|_\infty. \] 
The proof follows from the fact that $\|h\|_\infty \le \|h\|_{\mA^\top \mW \mA}$ by \cref{lemma:weightkey} with $p = \infty$.
\end{proof}

We can plug this bound into \cite[Corollary 12]{CJJJLST20} to show \cref{thm:linfmain}.
\begin{proof}[Proof of \cref{thm:linfmain}]
Define $\xopt = \argmin_x \|\mA x - b\|_\infty$ and
 $\OPT = \|\mA \xopt - b\|_\infty.$ We assume that we start at a point $x \in \R^n$ with $\|\mA x - b\|_\infty \le 2\OPT$. Otherwise, the same proof shows that given any upper bound on $\OPT$, the algorithm allows us to reduce the error by a constant factor. We can initialize with polynomial error by solving the $\ell_2$ problem $\min_{x \in \R^d} \|\mA x - b\|_\infty.$ We set $t = \frac{\eps \OPT}{20 \log n}$ (which loses $\eps \OPT/2$ additive function accuracy) and minimize $\lse_t(\mA x - b)$ to $\eps \OPT /2$ accuracy. In \cite[Corollary 12]{CJJJLST20} we set $\mM = \mA^\top \mW \mA$ and $M = 2/t = O\left(\frac{20 \log n}{\eps \OPT} \right)$ by \cref{lemma:qsc}.

We show we can set $R = O(\OPT \sqrt{d}).$ Indeed note that $\|\mA(x-x^*)\|_\infty \le \|\mA x - b\|_\infty + \|\mA \xopt - b\|_\infty \le 3 \OPT.$ Thus we get
\[ R^2 = \|x - \xopt\|_{\mA^\top \mW \mA}^2 \le \sum_{i\in[n]} w_i (\mA(x-\xopt))_i^2 \le 9 \OPT^2 \sum_{i\in[n]} w_i \le 18d \OPT^2 \] as $\|w\|_1 \le 2d$ by the construction in \cref{lemma:approxplarge}. Pluggin in this value of $R, M$ into \cite[Corollary 12]{CJJJLST20} gives an iteration bound of
\[ \O((RM)^{2/3}) = \O\left(\left(\OPT \sqrt{d} \cdot \frac{20 \log n}{\eps \OPT} \right)^{2/3} \right) = \O(d^{1/3}\eps^{-2/3}). \]
\end{proof}

\end{document}